\def\showauthornotes{0}
\def\showtableofcontents{1}
\def\showkeys{0}
\def\showdraftbox{0}
\def\showcolorlinks{1}
\def\usemicrotype{0}
\def\showfixme{0}
\newtheorem{theorem}{Theorem}[section]
\newtheorem*{theorem*}{Theorem}
\newtheorem{itheorem}{Theorem}
\newtheorem*{claim*}{Claim}
\newtheorem*{proposition*}{Proposition}
\newtheorem{lemma}[theorem]{Lemma}
\newtheorem*{lemma*}{Lemma}
\newtheorem{corollary}[theorem]{Corollary}
\newtheorem{conjecture}[theorem]{Conjecture}
\newtheorem*{conjecture*}{Conjecture}
\newtheorem{observation}[theorem]{Observation}
\newtheorem{fact}[theorem]{Fact}
\newtheorem*{fact*}{Fact}
\newtheorem*{hypothesis*}{Hypothesis}
\theoremstyle{definition}
\newtheorem{definition}[theorem]{Definition}
\newtheorem{remark}[theorem]{Remark}
\renewcommand{\mathbb}{\varmathbb}
\newcommand{\savehyperref}[2]{\texorpdfstring{\hyperref[#1]{#2}}{#2}}
\newcommand{\Sref}[1]{\hyperref[#1]{\S\ref*{#1}}}
\let\nfrac=\nicefrac
\newcommand{\half}{\nicefrac12}
\newcommand{\Authornote}[2]{{\sffamily\small\color{red}{[#1: #2]}}}
\newcommand{\Authorcomment}[2]{{\sffamily\small\color{gray}{[#1: #2]}}}
\newcommand{\Authorstartcomment}[1]{\sffamily\small\color{gray}[#1: }
\newcommand{\Authorfnote}[2]{\footnote{\color{red}{#1: #2}}}
\newcommand{\Authorfixme}[1]{\Authornote{#1}{\textbf{??}}}
\newcommand{\Authormarginmark}[1]{\marginpar{\textcolor{red}{\fbox{\Large #1:!}}}}
\newcommand{\Authornote}[2]{}
\newcommand{\Authorcomment}[2]{}
\newcommand{\Authorstartcomment}[1]{}
\newcommand{\Authorfnote}[2]{}
\newcommand{\Authorfixme}[1]{}
\newcommand{\Authormarginmark}[1]{}
\newcommand{\Dnote}{\Authornote{D}}
\newcommand{\Pnote}{\Authornote{P}}
\newcommand{\PGnote}{\Authornote{PG}}
\newcommand{\Bnote}{\Authornote{B}}
\newcommand{\Mnote}{\Authornote{M}}
\newenvironment{mybox}
{\center \noindent\begin{boxedminipage}{1.0\linewidth}}
{\end{boxedminipage}
\noindent
}
\newcommand{\paren}[1]{(#1)}
\newcommand{\Paren}[1]{\left(#1\right)}
\newcommand{\abs}[1]{\lvert#1\rvert}
\newcommand{\Abs}[1]{\left\lvert#1\right\rvert}
\newcommand{\card}[1]{\lvert#1\rvert}
\newcommand{\set}[1]{\{#1\}}
\newcommand{\Set}[1]{\left\{#1\right\}}
\newcommand{\norm}[1]{\lVert#1\rVert}
\newcommand{\normt}[1]{\norm{#1}_2}
\newcommand{\snorm}[1]{\norm{#1}^2}
\newcommand{\iprod}[1]{\langle#1\rangle}
\newcommand{\Esymb}{\mathbb{E}}
\newcommand{\Psymb}{\mathbb{P}}
\newcommand{\Vsymb}{\mathbb{V}}
\DeclareMathOperator*{\E}{\Esymb}
\DeclareMathOperator*{\Var}{\Vsymb}
\DeclareMathOperator*{\ProbOp}{\Psymb}
\renewcommand{\Pr}{\ProbOp}
\newcommand{\Prob}[2][]{\Pr_{{#1}}\Set{#2}}
\newcommand{\textparen}[1]{\text{(#1)}}
\newcommand{\because}[1]{\textparen{because #1}}
\renewcommand{\because}[1]{\textparen{because #1}}
\newcommand{\bits}{\{0,1\}}
\newcommand{\sbits}{\{\pm1\}}
\newcommand{\super}[2]{#1^{\paren{#2}}}
\newcommand{\vbig}{\vphantom{\bigoplus}}
\newcommand{\sm}{\setminus}
\newcommand{\defeq}{\stackrel{\mathrm{def}}=}
\newcommand{\seteq}{\mathrel{\mathop:}=}
\newcommand{\from}{\colon}
\renewcommand{\vec}[1]{{\bm{#1}}}
\newcommand{\mper}{\,.}
\newcommand{\mcom}{\,,}
\newcommand\bdot\bullet
\newcommand{\Ind}{\mathbb I}
\renewcommand{\th}{\textsuperscript{th}\xspace}
\DeclareMathOperator{\Inf}{Inf}
\DeclareMathOperator{\poly}{poly}
\DeclareMathOperator{\qpoly}{qpoly}
\DeclareMathOperator{\qpolylog}{qpolylog}
\DeclareMathOperator{\qqpoly}{qqpoly}
\DeclareMathOperator{\polylog}{polylog}
\DeclareMathOperator{\supp}{supp}
\DeclareMathOperator{\dist}{dist}
\DeclareMathOperator{\sign}{sign}
\DeclareMathOperator{\tsum}{{\textstyle \sum}}
\newcommand{\diffmacro}[1]{\,\mathrm{d}#1}
\newcommand{\etal}{et al.\xspace}
\newcommand{\Hastad}{H{\aa}stad\xspace}
\newcommand{\N}{\mathbb N}
\newcommand{\R}{\mathbb R}
\newcommand{\Rnn}{\R_+}
\newcommand{\problemmacro}[1]{\texorpdfstring{\textsc{#1}}{#1}\xspace}
\newcommand{\uniquegames}{\problemmacro{Unique Games}}
\newcommand{\sparsestcut}{\problemmacro{Sparsest Cut}}
\newcommand{\smallsetexpansion}{\problemmacro{Small-Set Expansion}}
\newcommand{\cA}{\mathcal A}
\newcommand{\cC}{\mathcal C}
\newcommand{\cD}{\mathcal D}
\newcommand{\cG}{\mathcal G}
\newcommand{\cH}{\mathcal H}
\newcommand{\cN}{\mathcal N}
\newcommand{\cO}{\mathcal O}
\newcommand{\cS}{\mathcal S}
\newcommand{\cT}{\mathcal T}
\newcommand{\cV}{\mathcal V}
\renewcommand{\leq}{\leqslant}
\renewcommand{\le}{\leqslant}
\renewcommand{\geq}{\geqslant}
\renewcommand{\ge}{\geqslant}
\let\epsilon=\varepsilon
\numberwithin{equation}{section}
\newcommand{\MYstore}[2]{%
  \global\expandafter \def \csname MYMEMORY #1 \endcsname{#2}%
}
\newcommand{\MYload}[1]{%
  \csname MYMEMORY #1 \endcsname%
}
\newcommand{\MYnewlabel}[1]{%
  \newcommand\MYcurrentlabel{#1}%
  \MYoldlabel{#1}%
}
\newcommand{\MYdummylabel}[1]{}
\newcommand{\torestate}[1]{%
  \let\MYoldlabel\label%
  \let\label\MYnewlabel%
  #1%
  \MYstore{\MYcurrentlabel}{#1}%
  \let\label\MYoldlabel%
}
\newcommand{\restatetheorem}[1]{%
  \let\MYoldlabel\label
  \let\label\MYdummylabel
  \begin{theorem*}[Restatement of \prettyref{#1}]
    \MYload{#1}
  \end{theorem*}
  \let\label\MYoldlabel
}
\newcommand{\restatelemma}[1]{%
  \let\MYoldlabel\label
  \let\label\MYdummylabel
  \begin{lemma*}[Restatement of \prettyref{#1}]
    \MYload{#1}
  \end{lemma*}
  \let\label\MYoldlabel
}
\newcommand{\restateprop}[1]{%
  \let\MYoldlabel\label
  \let\label\MYdummylabel
  \begin{proposition*}[Restatement of \prettyref{#1}]
    \MYload{#1}
  \end{proposition*}
  \let\label\MYoldlabel
}
\newcommand{\restatefact}[1]{%
  \let\MYoldlabel\label
  \let\label\MYdummylabel
  \begin{fact*}[Restatement of \prettyref{#1}]
    \MYload{#1}
  \end{fact*}
  \let\label\MYoldlabel
}
\newcommand{\restate}[1]{%
  \let\MYoldlabel\label
  \let\label\MYdummylabel
  \MYload{#1}
  \let\label\MYoldlabel
}
\newcommand{\addreferencesection}{
  \phantomsection
  \addcontentsline{toc}{section}{References}
}
\newcommand{\sse}{\subseteq}
\newcommand{\e}{\epsilon}
\newcommand{\eps}{\epsilon}
\newcommand{\eset}{\emptyset}
\let\origparagraph\paragraph
\renewcommand{\paragraph}[1]{\origparagraph{#1.}}
\newcommand{\cclassmacro}[1]{\texorpdfstring{\textbf{#1}}{#1}\xspace}
\newcommand{\np}{\cclassmacro{NP}}
\newcommand{\RM}{\ensuremath{\text{\sf RM}}}
\let\pref=\prettyref
\DeclareMathOperator{\cond}{\Phi}
\newcommand{\sst}{\substack}
\newcommand{\GF}[1]{\mathbb F_{#1}}
\DeclareMathOperator{\TC}{Test_{\mc{C}}}
\newcommand{\CC}{\mathcal{C}}
\newcommand{\vc}[1]{(-1)^{#1}}
\newcommand{\mb}[1]{#1} 
\newcommand{\mc}[1]{\mathcal{#1}}
\newcommand{\wt}{{\sf wt}}
\newcommand{\eat}[1]{}
\newcommand{\dashmaxtwolin}{\problemmacro{-Max-2Lin}}
\newcommand{\LH}{\ensuremath{\mathsf{LH}}\xspace}
\newcommand{\SA}{\ensuremath{\mathsf{SA}}\xspace}
\newcommand{\NS}{\mathsf{NS}}
\title{Making the long code shorter, with applications to the Unique Games Conjecture}
\author{Boaz Barak\thanks{Microsoft Research New England, Cambridge
    MA.} \and Parikshit Gopalan\thanks{Microsoft Research-Silicon
    Valley.} \and Johan \Hastad\thanks{Royal Insitute of Technology,
    Stockholm, Sweden.} \and Raghu Meka\thanks{IAS, Princeton. Work
    done in part while visiting Microsoft Research, Silicon Valley.}
  \and Prasad Raghavendra\thanks{Georgia Institute of Technology,
    Atlanta, GA.} \and David Steurer\thanks{Microsoft Research New
    England, Cambridge MA.}}
\begin{document}

\maketitle
\thispagestyle{empty}

\begin{abstract}
The \emph{long code} is a central tool in hardness of approximation, especially in questions related to the unique games conjecture. We construct a new code that is exponentially more efficient, but can still be used in many of these applications. Using the new code we obtain exponential improvements over several known results, including the following:

\begin{enumerate}

\item For any $\e>0$, we show the existence of an $n$ vertex graph $G$ where every set of  $o(n)$ vertices has expansion $1-\e$, but $G$'s adjacency matrix has more than $\exp(\log^{\delta}n)$ eigenvalues larger than $1-\e$, where $\delta$ depends only on $\e$. This answers an open question of Arora, Barak and Steurer (FOCS 2010) who asked whether one can improve over the noise graph on the Boolean hypercube that has $\poly(\log n)$ such eigenvalues.

\item A gadget that reduces unique games instances with linear constraints modulo $K$ into instances with alphabet $k$ with a blowup of $K^{\polylog(K)}$, improving over the previously known gadget with blowup of $2^{\Omega(K)}$.

\item An $n$ variable integrality gap for Unique Games that that survives $\exp(\poly(\log\log n))$  rounds of the SDP + Sherali Adams hierarchy, improving on the previously known bound of $\poly(\log\log n)$.
\end{enumerate}

We show a connection between the local testability of linear
codes and small set expansion in certain related Cayley graphs, and
use this connection to derandomize the noise graph on the Boolean hypercube.
\end{abstract}

\clearpage

\ifnum\showtableofcontents=1
{
\tableofcontents
\thispagestyle{empty}
 }
\fi

\clearpage

\section{Introduction}

Khot's \emph{Unique Games Conjecture}~\cite{Khot02a} (UGC) has been the focus of intense research effort in the last few years. The conjecture posits the hardness of approximation for a certain constraint satisfaction problem, and shows promise to settle many open questions in theory of approximation algorithms. Specifically, an instance $\Gamma$ of the \uniquegames problem  with $n$ variables and alphabet $\Sigma$ is described by a collection of constraints of the form $(x,y,\pi)$ where $\pi$ is a permutation over $\Sigma$. An \emph{assignment} to $\Gamma$ is a mapping $f$ from $[n]$ to $\Sigma$, and $f$'s value is the fraction of constraints $(x,y,\pi)$ such that $f(y)=\pi(f(x))$. The Unique Games Conjecture is that for any $\e>0$, there is some finite $\Sigma$ such that it is \np hard to distinguish between the case that a \uniquegames instance $\Gamma$ with alphabet $\Gamma$ has an assignment satisfying $1-\e$ fraction of the constraints, and the case that every assignment satisfies at most $\e$ fraction of $\Gamma$'s constraint.

Many works have been devoted to studying the plausibility of the UGC, as well as exploring its implications and obtaining unconditional results motivated by this effort. Tantalizingly, at the moment we have very little evidence for the truth of this conjecture. One obvious reason to believe the UGC is that no algorithm is known to contradict it, though that of course may have more to do with our proof techniques for algorithm analysis than actual computational difficulty. Thus perhaps the strongest evidence for the conjecture comes from results showing particular instances  on which certain natural algorithms will fail to solve the problem. However, even those integrality gaps are quantitatively rather weak. For example, while Arora, Barak and Steurer~\cite{AroraBS10} showed a subexponential upper bound on an algorithm for the \uniquegames and the related \smallsetexpansion problem, the hardest known instances for their algorithm only required quasipolynomial time~\cite{Kolla10}. Similarly (and related to this), known integrality gaps for \uniquegames and related problems do not rule out their solution by an $O(\log n)$-round semidefinite hierarchy, an algorithm that can be implemented in quasipolynomial (or perhaps even polynomial~\cite{BarakRS11}) time.

The \emph{long code} has been a central tool in many of these works. This is the set of ``dictator'' functions mapping $\GF 2^N$ to $\GF 2$ that have the form $x_1\ldots x_N \mapsto x_i$ for some $i$.
Many hardness reductions (especially from \uniquegames) and constructions of integrality gap instances use the long code as a tool. However, this is also the source of their inefficiency, as the long code is indeed quite long. Specifically, it has only $N$ codewords but dimension $2^N$, which leads to exponential blowup in many of these applications.
  In this work, we introduce a different code, which we call the ``short code'',  that is exponentially more efficient, and can be used in the long code's place in many of these applications, leading to significant quantitative improvements. In particular, we use our code to show instances on which the \cite{AroraBS10} algorithm, as well as certain semidefinite hierarchies, require almost sub-exponential time, thus considerably strengthening the known evidence in support of the Unique Games Conjecture.  Moreover, our results open up possibilities for \emph{qualitative} improvements as well, in particular suggesting a new approach to prove the Unique Games Conjecture via an efficient alphabet reduction.

\subsection{Our results}

At the heart of the long code's applications lie its connection with the \emph{noisy Hypercube}. This is the weighted graph $H_{N,\e}$ whose vertices are elements in $\GF 2^N$ where a random neighbor of $x\in \GF 2^N$ is obtained by flipping each bit of $x$ independently with probability $\e$.\footnote{This graph is closely related and has similar properties to the unweighted graph where we connect $x$ and $y$ if their Hamming distance is at most $\e N$.} It is not too hard to show that the codewords of the long code correspond to the top eigenvectors of the noisy hypercube which also give the minimal bisections of the graph, cutting only an $\e$ fraction of edges. In addition, several converse results are known, showing that bisections (and more general functions) cutting few edges are close to these top eigenvectors (or {\sl dictatorships}) in some sense. (One such result is the ``Majority is Stablest'' Theorem of ~\cite{MosselOO05}.)  The inefficiency of the longcode is manifested in the fact that the number of vertices of the noisy cube is exponential in the number $N$ of its top eigenvectors.

\paragraph{The short code}  Another way to describe the long code is that it encodes $x\in\GF 2^n$ by a binary vector $v_x$ of length $2^{2^n}$ where $v_x(f)=f(x)$ for every function $f:\GF 2^n\rightarrow \GF 2$. This view also accounts for the name ``long code'', since one can see that this is the longest possible encoding of $x$ without having repeated coordinates. For every subset $\cD$ of functions mapping $\GF 2^n$ to $\GF 2$, we define the \emph{$\cD$-short code} to be the code that encodes $x$ by a vector $v_x$ of length $|\cD|$ where $v_x(f)=f(x)$ for every $f\in \cD$. Note that this is a very general definition that encapsulates any code without repeated coordinates. For $d \in \N$, we define the \emph{$d$-short code} to be the the $\cD$-short code where $\cD$ is the set of all polynomials over $\GF 2^n$ of degree at most $d$. Note that the $1$-short code is the Hadamard code, while the $n$-short code is the long code. We use the name ``short code'' to denote the $d$ short code for $d=O(1)$. Note that the short code has $2^n$ codewords and dimension roughly $2^{n^d}$, and hence only quasipolynomial blowup, as opposed to the exponential blowup of the long code.  Our main contribution is a construction of a ``derandomized''  noisy cube, which is a small subgraph of the noisy cube that enjoys the same relations to the short code (including a ``Majority is Stablest'' theorem) as the original noisy cube has to the long code. As a result, in many applications one can use the short code and the derandomized cube in place of the long code and the noisy cube, obtaining an exponential advantage. Using this approach we obtain the following results:

\paragraph{Small set expanders with many large eigenvalues} Our first application, and the motivation to this work, is a question of Arora, Barak and Steurer \cite{AroraBS10}: How many eigenvectors with eigenvalue at least $1-\e$ can an $n$-vertex \emph{small set expander} graph have? We say a graph is a small set expander (SSE) if all sufficiently small subsets of vertices have, say, at least $0.9$ fracton of their neighbors outside the set.
 \cite{AroraBS10}  showed an upper bound of $n^{O(\e)}$ on the number of large (i.e., greater than $1-\e$) eigenvalues of a small set expander. Arora et al.~then observed that the subspace enumeration algorithm of~\cite{KollaT07,Kolla10} for approximating small set expansion in an input graph takes time at most exponential in this number, which they then use to give an algorithm with similar running time for the \uniquegames problem. Up to this work, the best lower bound was $\polylog(n)$, with the example being the noisy cube, and hence as far as we knew the algorithm of \cite{AroraBS10} could solve the small set expansion problem in quasipolynomial time, which in turn might have had significant implications for the \uniquegames problem as well. \Mnote{Is the last phrase contentious?}
 Our derandomized noisy cube yields an example with an almost polynomial number of large eigenvalues:

\begin{itheorem}\label{ithm:sse} For every $\e>0$, there is an $n$-vertex small set expander graph with $2^{(\log n)^{\Omega(1)}}$ eigenvectors with corresponding eigenvalues at least $1-\e$.
\end{itheorem}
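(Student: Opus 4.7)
My plan is to construct the required graph as a ``derandomized noisy hypercube'' built directly from the $d$-short code for a suitable constant $d = d(\epsilon)$. Fix a dimension parameter $m$, and let $C = \RM(d,m) \subseteq \GF 2^{2^m}$ be the Reed-Muller code of $m$-variate degree-$d$ polynomials over $\GF 2$, with dual $C^\perp = \RM(m-d-1,m)$. I take the vertex set of $G$ to be the quotient group $V = \GF 2^{2^m}/C^\perp$, of size $2^D$ with $D = \dim C = \sum_{i \le d}\binom{m}{i} \approx m^d/d!$; writing $n = |V|$ this gives $m \approx (\log n)^{1/d}$. The edges form a Cayley graph whose generating distribution is the pushforward under the quotient map $\pi\from \GF 2^{2^m} \to V$ of the Bernoulli $\epsilon'$-noise on $\GF 2^{2^m}$, with $\epsilon' = \Theta(\epsilon/2^{m-d})$ tuned precisely below.

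The eigenvalue count is then essentially Fourier analysis on the quotient. Since $(C^\perp)^\perp = C$, the characters of $V$ are exactly $\{\chi_T : T \in C\}$, and $\chi_T$ has eigenvalue $(1-2\epsilon')^{|T|}$, where $|T|$ denotes the Hamming weight of $T$ viewed in $\GF 2^{2^m}$. The minimum-weight codewords of $C = \RM(d,m)$ are indicators of $(m-d)$-dimensional affine subspaces of $\GF 2^m$, each of weight $2^{m-d}$, and there are $\binom{m}{d}_2 \cdot 2^d \approx 2^{d(m-d+1)}$ of them. By the Bernoulli inequality and the choice $\epsilon' = \epsilon/2^{m-d+1}$, each such $T$ contributes an eigenvalue $(1-2\epsilon')^{2^{m-d}} \ge 1-\epsilon$. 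Thus $G$ has at least $2^{\Omega(dm)} \ge 2^{(\log n)^{1/d}}$ eigenvalues exceeding $1-\epsilon$, yielding the theorem with $\delta = 1/d$ (and $d$ free to be chosen as a function of $\epsilon$).

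The hard step, and the main obstacle, is proving that every subset of $V$ of density $o(1)$ has expansion at least $1-\epsilon$. Cheeger-type bounds are hopelessly weak here: the spectral gap of $G$ is only $\Theta(\epsilon)$, whereas expansion $1-\epsilon$ demands a genuine $L^2 \to L^p$ hypercontractive estimate with $p$ bounded away from $2$ \emph{independently of $m$}. Naively pulling back indicators from $V$ to $\GF 2^{2^m}$ and invoking Bonami-Beckner fails, since the noise rate $\epsilon'$ is too small to produce a useful exponent. The crux is therefore to establish the connection, promised in the abstract, between the local testability of linear codes and small-set expansion in the associated Cayley graph: $C^\perp$-invariant functions on $\GF 2^{2^m}$ have Fourier support in $C$, and the fact that $\RM(d,m)$ is spanned by its minimum-weight codewords (indicators of small affine subspaces of $\GF 2^m$) should give a ``short-code Bonami-Beckner'' inequality of the form $\|T_\mu f\|_2 \le \|f\|_{1 + \Omega_\epsilon(1)}$ for every $f\from V \to \R$. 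Combined with Markov, this would translate into $\Phi(S) \ge 1 - \sigma^{\Omega_\epsilon(1)}$ for every set $S$ of density $\sigma$, delivering the desired small-set expansion. Proving this short-code analogue of Majority-is-Stablest---by somehow reducing hypercontractivity on the quotient to the local-test structure of Reed-Muller---is where I expect the bulk of the technical work to lie.
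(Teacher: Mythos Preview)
Your construction does not yield a small-set expander: there is an explicit family of small non-expanding sets. Take $T_i = x_1 \cdots x_{d-1} \cdot x_{d-1+i}$ for $i = 1,\ldots,k$ (with $k \le m-d+1$). These are minimum-weight codewords of $C = \RM(d,m)$, and every nonzero element of their $\GF2$-span is again a product of $d$ independent linear forms, hence again of minimum weight $2^{m-d}$. The set $S = \{v \in V : \chi_{T_1}(v)=\cdots=\chi_{T_k}(v)=1\}$ therefore has density $2^{-k}$, and its indicator $1_S = 2^{-k}\sum_{I\subseteq[k]}\chi_{\sum_{i\in I}T_i}$ has \emph{all} its nonconstant Fourier mass on characters with eigenvalue $(1-2\epsilon')^{2^{m-d}} \ge 1-\epsilon$, giving $\cond(S)\le\epsilon$. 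With $k=m-d+1$ this has density $2^{-\Theta(m)}=o(1)$, so the graph is not an SSE. The structural issue is that in your quotient picture the ``low-level'' eigenspaces are indexed by low-\emph{weight} codewords of $\RM(d,m)$, and these contain large linear subspaces; there is no hope for a hypercontractive bound on such a subspace, since it contains indicators of small sets verbatim.

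The paper's construction avoids this by taking the vertex set to be $\cC^\perp=\RM(n,d)$ as a \emph{subset} of $\GF2^N$ (not a quotient), with edges generated by the canonical tester for $\cC$ (add a random minimum-weight codeword of $\RM(n,d)$). Now the characters are indexed by cosets $\alpha+\cC$ in $\GF2^N/\cC$, and the ``degree'' of $\chi_\alpha$ is $\Delta(\alpha,\cC)$. The degree-$1$ characters are the $N$ coordinate functions $p\mapsto(-1)^{p_i}$, and a product of $k$ distinct such characters has degree exactly $k$, so no large subspace sits at a single level. The SSE proof is then almost free: since $\cC$ has minimum distance $D=2^{d+1}$, the subset $\cC^\perp\subseteq\GF2^N$ is $(D-1)$-wise independent, so the ordinary Bonami--Beckner $(2,4)$-inequality for degree-$\ell$ polynomials (with $4\ell<D$) holds over $\cC^\perp$ \emph{by moment matching}, and Lemma~\ref{lem:hyper-to-sse} converts this directly to small-set expansion. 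Local testability (the \cite{BhattacharyyaKSSZ10} analysis) is used only to certify that characters of degree exceeding $\ell$ have small eigenvalue; it is not the source of hypercontractivity, and no Majority-is-Stablest machinery is needed for Theorem~\ref{ithm:sse}.
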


Theorem~\ref{ithm:sse} actually follows from a more general result
connecting locally testable codes to small set expanders, which we
instantiate with the Reed Muller code. See Section~\ref{sec:techniques} for details.

\paragraph{Efficient integrality gaps} There is a standard
semidefinite program (SDP) relaxation for the \uniquegames problem,
known as the ``basic SDP'' \cite{KhotV05,RaghavendraS09c}. \Bnote{add refernce to some place where
it's nicely described}  Several works have shown upper and lower
bounds on the approximation guarantees of this relaxation, and for
constant alphabet size, the relation between the alphabet size and
approximation guarantee is completely
understood~\cite{CharikarMM06}. However, for unbounded alphabet, there
was still a big gap in our understanding of the relation between the
approximation guarantee and the number of variables. Gupta and Talwar
\cite{GuptaT06} \Bnote{who?} showed that if the relaxation's value is $1-\e$, there is an assignment satisfying $1-O(\e \log n)$ fraction of constraints. On the other hand, Khot and Vishnoi~\cite{KhotV05} gave an integrality gap instance where the relaxation's value was $1-1/\poly(\log\log n)$\footnote{Throughout, for any function $f$, $\poly(f(n))$ denotes a function $g$ satisfying $g(n) = f(n)^{\Omega(1)}$.} but the objective value (maximum fraction of constraints satisfied by any assignment) was $o(1)$. It was a natural question whether this could be improved (e.g., see \cite{JamesLeeBlog}), and indeed our short code allows us to obtain an almost exponential improvement:

\begin{itheorem}\label{ithm:efficient-gap}  There is an $n$-variable instance of \uniquegames with objective value $o(1)$ but for which the standard semidefinite programming (SDP) relaxation has value at least $1-1/\qpolylog(n)$.\footnote{For functions $f,g:\N\to [0,\infty)$ we write  $f = \qpoly(g)$ if $f=\exp(\polylog(g))$. That is, if there are constants $C>c>0$ such that for all sufficiently large $n$, $\exp((\log g(n))^c) \leq f(n) \leq \exp((\log g(n))^C)$. (Note that we allow $c<1$, and so $f = \qpoly(g)$ does not imply that $f>g$.) Similarly, we define $\qpolylog(g)=\qpoly(\log g)$ and write $f =\qqpoly(g)$ if $f=\exp(\exp(\poly(\log\log g)))$. }
\end{itheorem}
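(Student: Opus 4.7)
The plan is to adapt the classical Khot-Vishnoi integrality gap construction~\cite{KhotV05}, substituting the derandomized noisy cube and $d$-short code in place of the noisy hypercube and long code. Recall that Khot-Vishnoi takes the variables of the \uniquegames instance to be orbits of $\{-1,1\}^{N}$ (with $N=2^k$) under the action of the Boolean dictator subgroup, with constraints and permutations inherited from the noisy hypercube edges; this yields $\approx 2^N/N$ variables over alphabet $N$. I would mimic this construction, but starting from the derandomized noisy cube announced earlier in the introduction, whose vertex set has size roughly $2^{k^d}$ rather than $2^{2^k}$ and whose top eigenspace is spanned by the $k$ ``dictator'' characters $\chi_1,\dots,\chi_k$ of $\GF 2^k$. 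Taking orbits under the dictator shift subgroup (of size $k$) would then yield a \uniquegames instance with only $n \approx 2^{k^d}/k$ variables over alphabet $k$.

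For SDP completeness I would exhibit a vector solution supported on the dictator eigenvectors: to each variable (orbit) assign the $k$ eigenvector entries for its representative, realigned by the group element that relates pairs of orbit representatives appearing in a constraint. Since each dictator eigenvector has eigenvalue at least $1-\eta$ under the derandomized cube, a constraint $(u,v,\pi)$ is satisfied to extent $\geq 1-\eta$ in the SDP, giving overall SDP value $1-O(\eta)$. For soundness, any integral assignment of nonnegligible value lifts to a Boolean function on the vertex set of the derandomized cube whose noise stability is bounded away from $0$. The Majority-is-Stablest-type theorem for the short code (established earlier in the paper) then forces this function to have an influential dictator coordinate, and a standard Khot-Vishnoi--style argument shows this is incompatible with consistency across different orbits, driving the integral value to $o(1)$.

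It remains to tune the noise parameter $\eta$ and the degree $d$. Since $\log n \approx k^d$, choosing $d$ to be a suitable slowly growing function of $k$ and setting $\eta$ just above the MIS soundness threshold (roughly $1/\poly(k)$) would yield $\eta = 1/\qpolylog(n)$ as required by the theorem. The main obstacle is the soundness step, which demands a quantitatively sharp MIS theorem for the derandomized cube with the right influence-versus-noise-stability trade-off for degree-$d$ perturbations; this is exactly what the earlier sections of the paper are set up to supply. A secondary technical point is to verify that the orbit structure on the derandomized cube yields a genuine \uniquegames instance (orbits of equal size, constraint maps that are honest permutations of the alphabet), but this should follow from the group-theoretic setup in~\cite{KhotV05} almost verbatim, since the derandomization preserves the relevant automorphisms.
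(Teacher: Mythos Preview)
Your high-level picture is right—the instance is indeed the derandomized noisy cube turned into a \uniquegames instance by quotienting out (or, equivalently, ``folding'' over) a subgroup playing the role of the label set, and the SDP vectors come from the top eigenfunctions. But two points deserve correction or clarification.

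First, and most importantly, the soundness analysis in the paper does \emph{not} use Majority is Stablest. For the basic SDP gap, the paper writes the value of any labelling as $\sum_{p} \iprod{f_p, G f_p}$ where each $f_p$ is a $[0,1]$-valued function with mean $1/R$ (here $R=|\mc{H}|$ is the alphabet size), and then bounds each term directly by the small-set-expansion/hypercontractivity estimate $\iprod{f_p,Gf_p} \le \tfrac{1}{R}\bigl(1-2s(k)+3^k R^{-1/2}\bigr)$ from \pref{cor:hc-sse}. This is both simpler and yields cleaner parameters than routing through MIS and an influence-decoding argument. Your MIS-based route would also work (the lifted function is shift-invariant, hence all influences are equal and therefore all small, so MIS forces low stability), but it is more machinery than needed here; the paper reserves MIS for the alphabet-reduction gadget and the hierarchy gaps, where it is genuinely required.

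Second, your parameter bookkeeping is garbled. The derandomized cube on $\RM(n,d)$ has $N=2^n$ dictator eigenfunctions (one per coordinate of the codeword), not $n$; the subgroup you fold over is the Hadamard (linear) subcode $\mc{H}\cong\GF2^n$ of size $2^n$, not a group of size ``$k$''. In the paper's instantiation one takes $d=\log n$, giving an instance on $M\approx 2^{2^{\log^2 n}}$ vertices with SDP value $1-O(1/n)=1-1/2^{(\log\log M)^{1/2}}$, which is the claimed $1-1/\qpolylog(M)$. Your description of ``orbits of size $k$'' and ``alphabet $k$'' with $k=\log N$ does not match either the KV construction or this one, so be careful to distinguish the block length $N=2^n$, the alphabet size $|\mc{H}|=2^n$, and the instance size $|\cD|\approx 2^{n^d}$.
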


\paragraph{Integrality gaps for SDP hierarchies} Our best evidence for
the hardness of the Unique Games Conjecture comes from integrality gap
instances for semidefinite programming \emph{hierarchies}. These are
strengthened versions  of the basic SDP where one obtains tighter
relaxations by augmenting them with additional constraints, we refer to~\cite{ChlamtacT10} for a good overview of SDP hierarchies. These hierarchies are generally paramaterized by a number $r$ (often called the \emph{number of rounds}), where the first round corresponds to the Basic SDP, and the $n^{th}$ round (where $n$ is the instance size) corresponds to the exponential brute force algorithm that always computes an optimal answer. Generally,  the $r^{th}$-round of each such hierarchy can be evaluated in $n^{O(r)}$ time (though in some cases $n^{O(1)}2^{O(r)}$ time suffices~\cite{BarakRS11}). In this paper we consider two versions of these hierarchies--- the \SA hierarchy and the weaker \LH hierarchy. Loosely speaking, the $r^{th}$ round of the \SA hierarchy adds the constraints of the $r^{th}$ round of the Sherali-Adams linear programming hierarchy (see~\cite{SheraliA90}) to the Basic SDP; the $r^{th}$ round of the \LH hierarchy augments the Basic SDP with the constraints that and subset of $r$ vectors from the vector solutions embeds isometrically into the $\ell_1$ metric. (See Section~\ref{sec:hierarchy} and \cite{RaghavendraS09c} for more details.)

Barak, Raghavendra and Steurer~\cite{BarakRS11} (see also~\cite{GuruswamiS11}) showed that for every $\e>0$, $n^{\e}$ rounds of the \SA hierarchy yields a non-trivial improvement over the basic SDP . The unique games conjecture predicts that this is optimal, in the sense that  $n^{o(1)}$ rounds of any hierarchy should not improve the worst-case approximation ratio above the basic SDP.\footnote{This is under the widely believed assumption that $\mathbf{NP} \nsubseteq \mathbf{Dtime}(\exp(n^{o(1)})$.} 
 However, this prediction is far from being verified, with the best lower bounds given by \cite{RaghavendraS09c} (see also~\cite{KhotS09}) who showed instances that require $\log^{\Omega(1)} n$ rounds for the \LH hierarchy, and $(\log \log n)^{\Omega(1)}$ rounds for the \SA hierarchy. Moreover, these instances are \emph{known} to be solvable in quasipolynomial time~\cite{Kolla10} and in fact via $\polylog(n)$ rounds of the \SA hierarchy~\cite{BarakRS11} \Mnote{Just to confirm, it is \SA and not (approximate) Lasserre hierarchy in BRS?}. Thus prior work gave no evidence that the unique games problem cannot be solved in quasipolynomial time. In this work we obtain almost-exponentially more efficient integrality gaps, resisting  $\qpoly(\log n)$ rounds of the $\SA$ hierarchy and $\qqpoly(n)$ rounds of the $\LH$ hierarchy. The latter is the first superlogarithmic SDP hierarchy lower bound for \uniquegames for any SDP hierarchy considered in the literature.

\begin{itheorem}\label{ithm:hierarhcy-gap}  For every $\e>0$ there is some $k=k(\e)$, such that for every $n$ there is an $n$ variable instance $\Gamma$ of \uniquegames with alphabet size $k$
such that the objective value of $\Gamma$ is at most $\e$, but the value on $\Gamma$ of both $\qpoly(\log n)$ rounds of the $\SA$ hierarchy  and $\qqpoly(n)$ rounds of the $\LH$ hierarchy is at least $1-\e$.
\end{itheorem}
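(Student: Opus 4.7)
The plan is to implement the Raghavendra--Steurer (RS) framework for \uniquegames integrality gaps against SDP hierarchies, but with the noisy hypercube replaced by the derandomized noisy cube based on the short code. The RS framework starts from a small-set expander with many large eigenvalues together with a Majority-is-Stablest-type theorem, and produces a \uniquegames instance whose SDP value is close to one and whose true objective value is small. Substituting the short-code-based derandomized cube (which has $N \approx 2^{\polylog(n)}$ vertices rather than $N\approx 2^n$) makes the resulting instance exponentially more succinct, which automatically yields exponentially stronger hierarchy lower bounds when measured in terms of the instance size $n$.

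Concretely, I would let $G$ be the derandomized noisy cube at noise rate $\epsilon$, and take the \uniquegames instance to have variables indexed by cosets $x \oplus C$ of an appropriate constant-dimension subcode $C$, alphabet equal to $C$, and constraints arising from noise edges of $G$: an edge $(x, x \oplus \eta)$ becomes a constraint relating labels by the shift $\eta$. The basic SDP solution assigns to the pair $(x \oplus C, c)$ the short-code dictator vector corresponding to $x \oplus c$; the SDP value is then the noise stability of short-code dictators on $G$, which is $\geq 1 - O(\epsilon)$. The soundness bound (objective value $\leq \epsilon$) comes from the short-code Majority-is-Stablest theorem developed earlier in the paper via the standard decoding argument: any good labeling would induce an influential dictator coordinate, and the coset structure precludes this happening on more than an $\epsilon$ fraction of variables.

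For the hierarchy lower bounds I would follow the two-pronged RS recipe. For the \SA hierarchy, local distributions on any $r$ variables come from a globally applied random noise vector $\eta$, each variable receiving its label from a corresponding canonical assignment shifted by $\eta$; consistency on $r$-tuples reduces to controlling the $r$-wise marginals of the noise distribution on $G$, which can be proved using the local-testability structure of the Reed--Muller codes underlying the short code, and remains valid for $r$ up to $\qpoly(\log n)$. For the \LH hierarchy, an isometric $\ell_1$-embedding of any $r$-tuple of SDP vectors follows from the Cayley-graph structure of $G$: pairwise SDP distances are symmetric functions of group differences and hence embed into $\ell_1$ as long as the number of distinct such distances is controlled, which we can arrange up to $r = \qqpoly(n)$ using the eigenvalue count from \prettyref{ithm:sse}.

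The main technical obstacle is the \SA local-consistency step. The RS analysis on the full noisy hypercube uses hypercontractivity and the invariance principle essentially for free; on the derandomized cube these must be replaced by short-code analogues that combine the small-set expansion of $G$ with the pseudorandomness of Reed--Muller tests, so that $r$-wise marginals of the noise distribution look indistinguishable from product marginals for $r$ as large as $\qpoly(\log n)$. Balancing the short-code degree $d$ against the noise rate $\epsilon$ and the target number of rounds $r$ is the delicate calibration that extracts the claimed improvements from $(\log\log n)^{\Omega(1)}$ to $\qpoly(\log n)$ rounds of \SA, and from $(\log n)^{\Omega(1)}$ to $\qqpoly(n)$ rounds of \LH.
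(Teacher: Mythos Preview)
Your proposal has the right high-level spirit but misses the actual mechanism, and in doing so replaces a clean black-box argument with something that is both harder and not clearly sound.

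The paper does \emph{not} build the hierarchy instance in one step from the derandomized cube and then hand-craft local distributions for it. Instead it uses a two-level composition: first take the basic-SDP integrality gap $\Gamma$ for \uniquegames from the Reed--Muller construction (this is Theorem~\ref{ithm:efficient-gap}, with basic SDP value $1-\eta$ where $\eta=1/\qpolylog M$), and then apply the short-code alphabet-reduction gadget of Theorem~\ref{ithm:alph-red} to obtain $\Psi_{\e,Q,d}(\Gamma)$. The soundness of $\Psi_{\e,Q,d}(\Gamma)$ comes from the Reed--Muller Majority-is-Stablest theorem via the standard KKMO decoding. For the hierarchy solution, the key observation (Observation~7.5 in the paper) is that the vertex set of $\Psi_{\e,Q,d}(\Gamma)$ is literally a \emph{subset} of the vertex set of the long-code instance $\Phi_{\e,Q}(\Gamma)$ that Raghavendra--Steurer analyzed. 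Hence the $\LH_R$ and $\SA_R$ solutions already built in \cite{RaghavendraS09c} restrict to $\Psi_{\e,Q,d}(\Gamma)$ with no further work; one only checks the objective value via their Claim~2, which depends only on the coordinate-wise noise rate of the inner test. The number of rounds $R$ one gets is governed by $\eta$ (the outer basic-SDP gap), and the exponential improvement over prior work comes entirely from the fact that the outer instance now has $\eta=1/\qpolylog M$ on $M$ vertices rather than $\eta=1/\polylog\log M$.

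Your single-level construction (cosets of a constant-dimension subcode in the derandomized cube) is essentially the Section~6 basic-SDP gap, not the hierarchy gap. More importantly, your plan for $\SA$ feasibility --- ``local distributions come from a globally applied random noise vector $\eta$'' --- is not how RS local distributions work and is not obviously consistent with the SDP vectors; and your $\LH$ argument (``pairwise distances are symmetric functions of group differences so embed into $\ell_1$ when the number of distinct distances is controlled'') is not a correct characterization of $\LH_r$ feasibility. Both of these would require substantial new arguments that the paper avoids entirely via the subset observation. The delicate ``calibration'' you flag as the main obstacle simply does not arise in the paper's route.
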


A corollary of the above theorem is a construction of an $n$-point metric of negative type such that all sets of size up to some $k = \qqpoly(n)$ embed isometrically into $\ell_1$ but the whole metric requires $\qpolylog(n)$ distortion to embed into $\ell_1$. We remark that Theorem~\ref{ithm:hierarhcy-gap}  actually yields a stronger result than stated here--- as a function of $k$, our results (as was the case with the previous ones) obtain close to optimal gap between the objective value and the SDP value of these hierarchies; in particular we show that in the above number of rounds one cannot improve on the approximation factor of the Geomans-Williamson algorithm for Max Cut. It is a fascinating open question whether these results can be extended to the stronger \emph{Lasserre} hierarchy. Some very recent results of Barak, Harrow, Kelner, Steurer and Zhou~\cite{BarakHKSZ11} (obtained subsequent to this work), indicate that new ideas may be needed to do this, since the \uniquegames instances constructed here and in prior works are not integrality gaps for some absolute constant rounds of the Lasserre hierarchy.

%

\paragraph{Alphabet reduction gadget} Khot, Kindler, Mossel and O'Donnel~\cite{KhotKMO04} used the long code to show an ``alphabet reduction'' gadget for unique games. They show how to reduce a unique game instance with some large alphabet $K$ to an instance with arbitrarily small alphabet. (In particular, they showed how one can reduce arbitrary unique games instances into binary alphabet instances, which turns out to be equivalent to the \emph{Max Cut} problem.) However, quantitatively their result was rather inefficient, incurring an exponential in $K$ blowup of the instance. By replacing the long code with our ``short code'', we obtain a more efficient gadget, incurring only a \emph{qusipolynomial} blowup. One caveat is that, because the short code doesn't support arbitrary permutations, this reduction only works for unique games instances whose constraints are affine functions over $\GF 2^k$ where $k=\log K$; however this class of unique games seems sufficiently rich for many applications.\footnote{For example, because the multiplicative group of the field $\GF{2^n}$ is cyclic, one can represent constraints of the form $x_i-x_j = c_{i,j} \pmod{2^n-1}$ as linear constraints over $\GF 2^n$ (i.e., constraints of the form $x_i = C_{i,j}x_j$ where $C_{i,j}$ is an invertible linear map over $\GF 2^n$).}

\begin{itheorem}\label{ithm:alph-red} For every $\e$ there are $k,\delta$, and a reduction that for every $\ell$ maps any $n$-variable \uniquegames  instance  $\Gamma$ whose constraints are affine permutations over alphabet $\GF 2^{\ell}$ into an $n\cdot \exp(\poly(\ell,k))$-variable \uniquegames instance $\Gamma'$ of alphabet $k$, such that if the objective value of $\Gamma$ is larger than $1-\delta$, then the objective value of $\Gamma'$ is larger than $1-\e$, and if the objective value of $\Gamma$ is smaller than $\delta$, then the objective value of $\Gamma'$ is smaller than $\e$.
\end{itheorem}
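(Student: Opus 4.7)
The plan is to follow the dictatorship-test framework of Khot, Kindler, Mossel, and O'Donnell~\cite{KhotKMO04}, but to replace the long code with the $d$-short code throughout the reduction. Given an instance $\Gamma$ of \uniquegames on alphabet $\GF 2^\ell$ with affine-permutation constraints, for each variable $v$ I create a block of variables of $\Gamma'$ that together encode a function $F_v \colon \cD \to [k]$, where $\cD$ denotes the set of degree-$d$ polynomials over $\GF 2^\ell$ for a suitable $d = d(\e) = O(1)$. The intended ``dictator'' encoding corresponding to an assignment value $a_v \in \GF 2^\ell$ is $F_v(p) = g(p(a_v))$ for a fixed $g \colon \GF 2 \to [k]$. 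Because composition of a degree-$d$ polynomial with an affine map still has degree at most $d$, every affine permutation $\pi$ of $\GF 2^\ell$ induces a permutation $\pi^\sharp$ on $\cD$ via pre-composition with $\pi^{-1}$. This is exactly the point at which the affine hypothesis is used, and is what allows short-code encodings to be transported along the edges of $\Gamma$.

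The constraints of $\Gamma'$ implement a $k$-alphabet dictatorship test on $\cD$ built from the derandomized noisy cube developed earlier in the paper. For each constraint $(v,w,\pi)$ of $\Gamma$ and each edge $(p,q)$ sampled from the derandomized noisy cube on $\cD$, I add a constraint relating $F_v(p)$ and $F_w(\pi^\sharp q)$ in exactly the way \cite{KhotKMO04} relate the two queried coordinates of a single long-code encoding. Standard foldings over the translations $p \mapsto p+c$ with $c \in \GF 2$ and over the additive group of $[k]$ eliminate trivial constant solutions and make the constraints bijective.

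For completeness, an assignment satisfying a $(1-\delta)$-fraction of $\Gamma$'s constraints, turned into the dictator encodings $F_v(p) = g(p(a_v))$, satisfies a $(1-\e)$-fraction of $\Gamma'$'s constraints whenever $\delta$ is chosen small enough relative to the completeness of the short-code dictatorship test. For soundness, if the value of $\Gamma'$ exceeds $\e$, then on an $\Omega(\e)$-fraction of $\Gamma$-constraints $(v,w,\pi)$ the pair $(F_v, F_w)$ passes the short-code test noticeably; by the derandomized Majority-is-Stablest theorem for the short code (established earlier in the paper), each such $F_v$ must have a small set of low-degree coordinates of non-negligible influence. A standard influence-decoding step---assigning to $v$ a random $a_v \in \GF 2^\ell$ on which a random influential polynomial in $F_v$ evaluates to a designated value---then produces an assignment to $\Gamma$ satisfying an $\Omega_{k,\e}(1)$-fraction of constraints, yielding the contrapositive soundness bound after choosing $\delta$ appropriately.

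The total size of $\Gamma'$ is $n \cdot |\cD| \cdot k \leq n \cdot 2^{\ell^d} \cdot k = n \cdot \exp(\poly(\ell,k))$, matching the claim. The main obstacle is ensuring that the affine action $\pi \mapsto \pi^\sharp$ and the derandomized noisy cube on $\cD$ are compatible in the sense the reduction demands: the noise distribution must be invariant under pre-composition by affine permutations, and the notion of ``influential coordinate'' must transfer cleanly from $F_v$ to $F_w$ along $\pi^\sharp$. Once this compatibility is verified---which is exactly where the linear-algebraic construction of the derandomized cube (rather than an arbitrary sparsification of the noisy cube) is essential---the completeness and soundness analyses proceed in close parallel with the classical long-code reduction of~\cite{KhotKMO04}, the only difference being the appeal to the short-code version of Majority-is-Stablest in place of its long-code predecessor.
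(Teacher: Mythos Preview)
Your high-level strategy is exactly the paper's: replace the long code by the degree-$d$ Reed--Muller short code, let affine permutations act on codewords by pre-composition (the Reed--Muller code and the minimum-weight-codeword tester are both invariant under this action), run the KKMO-style two-query test using the derandomized noisy cube, and invoke the short-code Majority-is-Stablest theorem for soundness. The size bound and the influence-decoding argument are also as in the paper.

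There is, however, one real gap in your description when $k>2$. You take $F_v\colon \cD\to[k]$ and write the intended dictator as $F_v(p)=g(p(a_v))$ for a fixed $g\colon\GF2\to[k]$. But $p(a_v)\in\GF2$, so this ``dictator'' can take only two values and is not a genuine $[k]$-valued dictator; the resulting completeness and soundness analysis breaks down for non-binary target alphabets. The paper handles this by working over the $t$-fold direct sum $\cD^t$ with $t=\log_2 k$: a codeword is now a tuple $c=(c^{(1)},\ldots,c^{(t)})\in\cD^t$, and the dictator at $a\in\GF2^\ell$ is $c\mapsto (c^{(1)}(a),\ldots,c^{(t)}(a))\in\GF2^t\cong[k]$. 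The paper then builds a correlated tester $\cT_{\e,t}$ on $\cD^t$ (pick one minimum-weight $\cD$-codeword and use it in a random subset of the $t$ copies) whose Cayley graph has an eigenvalue profile mimicking the $\e$-noise graph on $(\GF2^t)^N$, and proves the analogue of Majority-is-Stablest for this graph. With this modification your plan goes through; without it the reduction to alphabet $k>2$ is not well-defined.

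A minor wording issue: in your decoding step you speak of an ``influential polynomial in $F_v$'', but the influential objects are \emph{coordinates} $i\in[N]\cong\GF2^\ell$ of $F_v$ (as defined via its Fourier expansion over $\cD^t$), and one simply sets $a_v$ to a random such influential coordinate. This is likely what you meant, but the current phrasing obscures the correspondence that makes the decoding work.
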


Once again, our quantitative results are stronger than stated here, and as in~\cite{KhotKMO04}, we obtain nearly optimal relation between the alphabet size $k$ and the soundness and completeness thresholds. In particular for $k=2$ our results match the parameters of the Max Cut algorithm of Geomans and Williamson. Our alphabet reduction gadget suggests a new approach to proving the unique games conjecture by using it as an ``inner PCP''. For example, one could first show hardness of unique games with very large alphabet (polynomial or even subexponential in the number of variables) and then applying alphabet reduction. At the very least, coming up with plausible hard instances for unique games should be easier with a large alphabet.

\begin{remark}The long code is also used as a tool in applications that do not involve the unique games conjecture. On a high level, there are two properties that make the long code useful in hardness of approximation: \textbf{(i)} it has a $2$ query test obtained from the noisy hypercube and \textbf{(ii)} it has many symmetries, and in particular one can read off any function of $x$ from the $x^{th}$ codeword. Our short code preserves property \textbf{(i)} but (as is necessary for a more efficient code) does not preserve property \textbf{(ii)}, as one can only read off low degree polynomials of $x$ (also it is only symmetric under affine transformations). We note that if one does not care about property \textbf{(i)} and is happy with a $3$ query test, then it's often possible to use the Hadamard code which is more efficient than the short code (indeed it's essentially equal to the $d$-short code for $d=1$). Thus, at least in the context of hardness of approximation, it seems that the applications the short code will be most useful are those where property \textbf{(i)} is the crucial one.

Despite the name ``short code'', our code is not the shortest possible code. While in our applications, dimension linear in the number of codewords is necessary (e.g., one can't have a graph with more eigenvalues than vertices), it's not clear that the dimension needs to be polynomial. It is a very interesting open question to find shorter codes that can still be used in the above applications.
\end{remark}

\section{Our techniques} \label{sec:techniques}

To explain our techniques we focus on our first application--- the construction of a small set expander with many eigenvalues close to $1$. The best way to  view this construction
is as a derandomization of the noisy hypercube, and so it will be useful to recall why the noisy hypercube itself is a small set expander.

Recall that the $\e$-noisy hypercube is the graph $H_{N,\e}$ whose vertex set is $\{ \pm 1\}^N$ where we sample a neighbor of $x$ by flipping each
bit independently with probability $\e$. The eigenvectors in $H_{N,\e}$
are given by the parity functions $\chi_\alpha(x) = \prod_{i \in \alpha}x_i$ for subsets $\alpha
\subseteq [N]$ and the corresponding eigenvalues are $\lambda_\alpha =
(1-2\eps)^{|\alpha|}$. Thus $\lambda_\alpha$ only depends on the degree
$|\alpha|$ of $\chi_\alpha$. In particular, the ``dictator'' functions
$\chi_{\{i\}}(x) = x_i$ have eigenvalue $1 -2 \eps$ and  they
correspond to balanced cuts (where vertices are partitioned based on the value of $x_i$) with edge expansion $\eps$. As
$\alpha$ increases, $\lambda_\alpha$ decreases, becoming a constant
around $|\alpha| = O(1/\eps)$.

Given $f: \{\pm 1 \}^N \rightarrow \{0,1\}$ which is the indicator of
a set $S$, its Fourier expansion $f(x) =
\sum_{\alpha}\hat{f}(\alpha)\chi_\alpha(x)$ can be viewed as expressing
the vector $f$ in the eigenvector basis. The edge expansion of $S$ is determined
by the distribution of its Fourier mass; sets where most of the Fourier
mass is on large sets will expand well.
Given this connection, small-set expansion follows from the fact that the indicator functions of small
sets have most of their mass concentrated on large Fourier
coefficients. More precisely a set $S$ of measure $\mu$ has most of
its Fourier mass on coefficients of degree $\Omega(\log(1/\mu))$. This follows from the so-called
(2,4)-hypercontractive inequality for low-degree polynomials--- that for every degree $d$ polynomial $f$,
\begin{equation}
\E_{x\in \{\pm 1\}^N}[f(x)^4] \leq C \E_{x \in \{ \pm 1 \}^N}[ f(x)^2]^2 \label{eq:hyper-cont-cube}
\end{equation}
  for some $C$ depending only on $d$. (See  Section~\ref{sec:subspace-hyper} for the proof, though some intuition can be obtained by noting
  that if $f$ is a characteristic function of a set $S$ of measure $\mu=o(1)$ then $\E[ f^2]^2 = \mu^2$ and $\E[f^4]=\mu$ and hence Equation~(\ref{eq:hyper-cont-cube}) shows that
  $f$ cannot be an $O(1)$-degree polynomial.)


By a ``derandomized hypercube'' we mean a graph on much fewer vertices that still (approximately) preserves the above properties of the noisy hypercube.
Specifically we want to find a very small subset $\cD$ of $\{ \pm 1\}^N$ and a subgraph $G$ of $H_{N,\e}$ whose vertex set is $\cD$ such
that \textbf{(i)} $G$ will have similar eigenvalue profile to $H_{N,\e}$, and in particular have $N$ eigenvalues close to $1$
and \textbf{(ii)} $G$ will be a a small set expander. To get the parameters we are looking for, we'll need to have the size of $\cD$ be at most
$\qpoly(N)$.

A natural candidate is to take $\cD$ to be a random set, but it is not hard to show that this will not work. A better candidate might be a linear subspace $\cD \subseteq \GF 2^N$ that looks suitably pseudorandom. We show that in fact it suffices to choose a subspace $\cD$  whose dual $\cC = \cD^{\perp}$ is a sufficiently good locally testable code. (We identify $\GF 2^N$ with $\{ \pm 1\}^N$ via the usual map $(b_1,\ldots,b_N) \mapsto ((-1)^{b_1}, \ldots,(-1)^{b_N})$.)

Our construction requires an asymptotic family of $[N,K,D]_2$ linear codes $\cC \subseteq \GF
2^N$ where the distance $D$ tends to infinity. The code should have a
$\e N$-query local tester which when given a received word $\alpha \in
\GF 2^N$ samples a codeword $q$ of weight at most $\eps N$ from a
distribution $\cT$ on $\cC^\perp$ and accepts if
$\iprod{\alpha,q}=1$. The test clearly accepts codewords in
  $\cC$, we also require it to reject words that are distance at least
$D/10$ from every codeword in $\cC$ with probability $0.49$.
Given such a locally testable code $\cC$, we consider the
Cayley graph\footnote{Cayley graph are usually defined to be unweighted
  graph. However, the definition can be generalized straightforwardly
  to weighted graphs.} $G$ whose vertices are the codewords
of the dual code $\cD = \cC^{\perp}$ while the (appropriately weighted)
edges correspond to the distribution $\cT$. That is, a vertex of $G$ is a codeword $x \in \cD$, while a random
neighbor of $x$ is obtained by picking a random $q$ from $\cT$ and moving to $x+q$.

Because $\cD$ is a subspace, it is easy to show that the eigenvectors of $G$ are linear functions of
of the form $\chi_\alpha(x)$ for $x,\alpha \in \GF 2^N$ (where if $\alpha \oplus \alpha' \in \cC$ then $\chi_{\alpha}$ and $\chi_{\alpha'}$
are identical on $G$'s vertices). Moreover, from the way we designed the graph, for every $\alpha \in \GF 2^n$, the corresponding
eigenvalue $\lambda_\alpha$ is equal to $\E_{q\in \cT} [ (-1)^{\iprod{\alpha,q}} ] = 1  -2\Pr_\cT[\text{Test rejects } \alpha]$. This
connection between the spectrum of $G$ and the local testability of
$\cC$ allows us to invoke machinery from coding theory in our analysis.

From this one can deduce that the eigenvalue spectrum of $G$ does
indeed resemble the hypercube in the range close to $1$.
In particular each $\chi_{\{i\}}(x) = x_i$ is a distinct eigenvector
with eigenvalue $1 - 2\eps$, and gives a bad cut in $G$ (where
vertices are partitioned based on the value of $x_i$). On the other hand for any eigenvector
$\chi$ of $G$, choose $\alpha$ of minimal weight such that $\chi=\chi_{\alpha}$. Now if $|\alpha|>D/10$ this means
that the distance of $\alpha$ from $\cC$ is at least $D/10$, which
using the testing property implies that $\lambda_{\alpha} \leq 1 -
2\cdot 0.49 = 0.02$.  

If we can show that indicator functions of small sets have most of their Fourier mass
on such eigenvectors (with small eigenvalue), that will imply that small sets have good
expansion. For small subsets of the hypercube, recall that this is
proved using (2,4)-hpercontractivity for low-degree polynomials. 
The key observation is that the inequality
\begin{equation}
\E_{x\in \cD}[ f(x)^4 ] \leq C \E_{x\in\cD}[f(x)^2]^2 \label{eq:hyper-cont-derand}
\end{equation} still holds for all polynomials $f$ of degree $d <
D/4$. This is because the distance of $\cC$ is $D$, hence the distribution of a random $x$ in $\cD$ is $D$-wise independent, which means
that the expectation of any polynomial of degree at most $D$ is equal over such $x$ and over a uniform $x$ in $\{ \pm 1 \}^N$. Thus
(\ref{eq:hyper-cont-derand}) follows from (\ref{eq:hyper-cont-cube}), completing our proof.

We instantiate this approach with using for $\cC$ the Reed Muller code
consisting of polynomials in $n$ variables over $\GF 2$ of degree
$n-d-1$. This is a code of distance $D=2^{d-1}$. We note that the
degree $n -d - 1$ and hence the rate of the code $\cC$ are very high. The graph is
over the codewords of $\cD=\cC^{\perp}$ that is itself the Reed Muller
code of polynomials over $\GF 2^n$ of degree $d$. Our basic tester consists of selecting a random minimum weight
codeword of $\cD$.\footnote{For many applications we amplify the success of this tester by selecting a sum of $t$ random
such words, this corresponds to taking some power of the basic graph $\cG$ described.} Thus our graph $\cG$
has as its vertices the $d$ degree polynomials over $\GF 2^n$ with an edge
between every polynomials $p,q$ such that $p-q$ is a product of $d$
linearly-independent affine functions (as those are the minimal weight
codewords in the Reed Muller code). We use the optimal analysis of
Bhattacharyya, Kopparty, Schoenebeck, Sudan and
Zuckerman~\cite{BhattacharyyaKSSZ10} to argue about the local
testability of $\cC$ which is a high degree Reed Muller code. We should note that this test is very closely
related to the Gowers uniformity test that was
first analyzed in the work of Kaufman et al. \cite{AKKLR:05}, but our
application requires the stronger result from~\cite{BhattacharyyaKSSZ10}.

\subsection{Other applications} \label{sec:other}

We now briefly outline how we use the above tools to obtain more efficient versions of several other constructions such as alphabet reduction gadgets and integrality gaps for unique games and other problems.

\paragraph{Efficient integrality gaps for Unique Games} To beign with, the graph we construct can be used to prove Theorem~\ref{ithm:efficient-gap}. That is, a construction of an $M$ variable instance $\Gamma$ of unique games where every assignment can satisfy at most a very small (say $1/100$) fraction of the constraints, but for which the standard semidefinite programming (SDP) relaxation has value of at least $1-1/\qpoly(\log M)$.  The basic idea is to simply take the graph $\cG$ we constructed above, and turn it into an instance of unique games by considering it to be the \emph{label extended graph} of some unique games instance. We now elaborate a bit below, leaving the full details to Section~\ref{sec:effug}. Recall that a \uniquegames instance $\Gamma$ with $M$ variables and alphabet $\Sigma$ is described by a collection of constraints of the form $(x,y,\pi)$ where $\pi$ is a permutation over $\Sigma$. An \emph{assignment} to $\Gamma$ is a mapping $f$ from $[M]$ to $\Sigma$, and $f$'s value is the fraction of constraints $(x,y,\pi)$ such that $f(y)=\pi(f(x))$.  The \emph{label extended graph} corresponding to $\Gamma$ is the graph $G_{\Gamma}$ over vertices $[M]\times \Sigma$ where for every constraint of the form $(x,y,\pi)$ and $\sigma\in \Sigma$ we add an edge between $(x,\sigma)$ and $(y,\pi(\sigma))$. It is not hard to see that an assignment of value $1-\e$ corresponds to a subset $S$ containing exactly $M$ of $G_{\Gamma}$'s vertices with small expansion (i.e., $\e$ fraction of the edges from $S$ leave the set). Thus if $G_{\Gamma}$ is an expander for sets of measure $1/|\Sigma|$ in $G_{\Gamma}$  then there is no nearly satisfying assignment for the unique games instance $\Gamma$.  In our case, our graph $\cG$ has the degree $d$ polynomials over $\GF 2^n$ as its vertices, and we transform it into a unique game instance whose variables correspond to degree $d$ polynomials \emph{without linear terms}. The alphabet $\Sigma$ consists of all linear functions over $\GF 2^n$. We ensure that the graph $\cG$ is the label extended graph of $\Gamma$ by setting the permutations accordingly: given a polynomial $p$ without a linear term, and a function $q$ that is a product of $d$ affine functions,\footnote{Actually, to get better parameters, we take some power $t$ of $\cG$, meaning that we consider $q$ that is a sum of $t$ functions that are products of $d$ affine functions.} if we write $q=q'+q''$  where $q''$ is the linear part of $q$, then we add a constraint of the form $(p,p+q',\pi)$ where $\pi$ is the permutation that maps a linear function $r$ into $r+q''$. Some not too difficult calculations show that the top eigenvectors of our graph $\cG$ yield a solution for the semidefinite program for $\Gamma$ (if the top eigenvectors are $f^1,\ldots,f^K$, our vector solution will associate with each vertex $x$ the vector $(f^1(x),\ldots,f^K(x)$). By choosing carefully the parameters of the graph $\cG$, the instance $\Gamma$ will have SDP value $1-1/\qpoly(\log M)$ where $M$ is the number of variables.

\paragraph{Derandomized Invariance Principle} While hypercontractivity of low degree polynomials suffices for some applications of the long code, other applications require other theorems, and in particular the \emph{invariance principle}, shown for
the hypercube by Mossel, O'Donnel and Oleszkiewicz~\cite{MosselOO05}.Roughly speaking their invariance principle says that for ``nice'' functions $f$ on the vertices of the $N$-dimensional noisy hypercube, the distribution of $f(x)$
where $x$ is a random vertex is close to the distribution of $f(y)$
where $y$ consists of $N$ independent standard Gaussian random
variables (appropriately extending $f$ to act on $\R^N$). To obtain
more efficient version of these applications, we first show that the
same holds even when $x$ is a random vertex in our smaller subset of
$N$-dimensional strings -- the Reed--Muller codewords. Our central
tool is a  recent result by Meka and Zuckerman~\cite{MekaZ10} which
derandomizes the invariance principle of Mossel et al. Our key insight
 is that taking a random Reed--Muller codeword can in fact be viewed
 as an instantiation of the Meka-Zuckerman generator, which involves splitting the input into blocks via a pairwise independent hash function, and using independent $k$-wise independent distributions in each block. This allows us to obtain a version of the ``Majority is Stablest'' theorem for our graph, which is the main corollary of the invariance principle that is used in applications of the longcode. See Section~\ref{sec:stablestcodes} for more details.

\paragraph{Efficient alphabet reduction } With the ``Majority of Stablest'' theorem in hand, proving Theorem~\ref{ithm:alph-red} (efficient alphabet reduction for unique games), is fairly straightforward. The idea is to simply replace the noisy hypercube gadget used by~\cite{KhotKMO04} with our derandomized hypercube. This is essentially immediate in the case of alphabet reduction to binary alphabet (i.e., reduction to Max Cut) but requires a bit more work when reducing to a larger alphabet. See Section~\ref{sec:alphabetreduction} for more details.

\paragraph{Efficient hierarchy integrality gaps} Our proof Theorem~\ref{ithm:hierarhcy-gap} again works by plugging in our short code / derandomized noisy hypercube in place of the long code in the previous integrality gap constructions~\cite{KhotV05,KhotS09,RaghavendraS09c}. Specifically, these constructions worked by starting with an integrality gap for unique games where the basic SDP yields $1-1/r$, and then composing it with an alphabet reduction gadget to obtain a new instance; Raghavendra and Steurer~\cite{RaghavendraS09c} showed that the composed instances resist  $\poly(r)$ rounds of the $\SA$ hierarchy and $\exp(\poly(r))$ rounds of the $\LH$ hierarchy. These constructions used the noisy cube twice--- both to obtain the basic unique games gap instance, and to obtain the alphabet reduction gadget. We simply plug in our short code in both usages--- using for the basic unique games instance the efficient version obtained in Theorem~\ref{ithm:efficient-gap}, and for the alphabet reduction gadget the efficient version obtained in Theorem~\ref{ithm:alph-red}. (Luckily, our unique games instance has affine constraints and so is compatible with our alphabet reduction gadget.) The result essentially follows in a blackbox way from the analysis of \cite{RaghavendraS09c}. See Section~\ref{sec:hierarchy} for details.

\section{Preliminaries}

\Dnote{we should probably consistently say eigenfunction instead of
  eigenvector in the technical sections.
  The reason is that we refer to dictators or other characters also as
  functions.
  Also it is sometimes confusing for people to use expectation norms for
  vectors.
}

Let $G$ be a regular graph with vertex set $V$.
For a subset $S\subseteq V$ we define the \emph{volume} of $S$, denoted
$\mu(S)$, to be $|S|/|V|$.
We define the \emph{expansion} of $S$, denoted $\cond(S)$, to be the
probability over a random edge $(u,v)$, conditioned on $u\in S$ that
$v\not\in S$.
Equivalently (since $G$ is regular), $\cond(S) = G(S, V \sm S)/(\deg_G
\card S)$ where $\deg_G$ is the degree of the graph $G$ and $G(S,V\sm S)$ is
the number of edges going from $S$ to $V\sm S$. Throughout, we denote the normalized adjacency matrix of a graph $G$ also by $G$, and refer to the spectrum of the adjacency matrix as the spectrum of the graph $G$. Note that by definition, every regular graph has maximum eigenvalue $1$.
In this paper, we use \emph{expectation norms} for real-valued functions.
That is, for a function $f\from S\to\R$ and $p\geq 1$, we let $\norm{f}_p
\seteq (\E_{x\in S} |f(x)|^p)^{1/p}$.

Many of the unique games instances that appear in this work belong to
a special subclass of unique games, namely $\GF2^n\dashmaxtwolin$
instances defined below.
\begin{definition}
	Given a group $\mc{H}$, an $\mc{H}\dashmaxtwolin$ instance consists of a
	system of linear equations over the group $\mc{H}$ where each
	equation is of the form $x_{i} - x_{j} = c_{ij}$ for some
	$c_{ij} \in \mc{H}$.
\end{definition}

\paragraph{Locally Testable Codes}
Let $\cC$ be an $[N,K,D]_2$ code, that is, $\cC$ is a $K$-dimensional
linear subspace of $\GF2^N$ with minimum distance $D$ ($= \min \{\wt(x): x \in \cC\}$).
%
%
%
(In this paper, we are mostly interested in the extremely high rate regime
when $H = N - K$ is very small compared to $N$ and are happy with $D$
being some large constant.)
Let $\Delta(\mb{x},\mb{y}) \in \{0,\ldots,N\}$ denote Hamming distance
between $\mb{x},\mb{y} \in \GF2^N$.
For $\alpha \in \GF2^N$ and a code $\cC$ we define
\begin{displaymath}
  \Delta(\alpha,\mc{C}) \defeq \min_{\mb{c} \in  \mc{C}}\Delta(\alpha,\mb{c}).
\end{displaymath}

\begin{definition} \label{def:canonical-tester}
  %
  We say a distribution $\cT$ over $\GF2^N$ is a \emph{canonical tester}
  for $\cC$ if every vector in the support of the distribution $\cT$ is a
  codeword $q\in C^\bot$.
  The \emph{query complexity} of $\cT$ is the maximum weight of a vector in
  its support.
  The tester's \emph{soundness curve} $s_\cT\from \N\to [0,1]$ is defined as
  \begin{displaymath}
    s_\cT(k) \defeq \min_{\substack{\alpha\in \GF2^N\\\Delta(\alpha,\cC)\ge k}}
    \Prob[q\sim \cT]{\iprod{\alpha,q}=1}
    \mper
  \end{displaymath}
 \Dnote{what is the right definition? better to minimize only over vectors
    with distance \emph{equal} to $k$? The possible non-monotonicity of the
    rejection probability (as a function of the distance) is in some sense
    exactly the problem we are interested in.}
  Similarly, we denote the \emph{rejection probability} of $\cT$ for a
  vector $\alpha \in \GF 2^N$ by $s_\cT(\alpha)=\Prob[q\sim\cT]{\iprod{\alpha,q}=1}$.
  %
  %
  %
  %
  We let the \emph{query probability} $\tau\in [0,1]$ of a tester be the
  expected fraction of queried coordinates, that is, $\tau = \E_{q\sim \cT}
  \wt(q)/N$.
  We say that a tester $\cT$ with query probability $\tau$ is \emph{smooth}
  if for any coordinate $i\in [N]$, $\Prob[q\sim \cT]{q_i=1}=\tau$ and we
  say it is \emph{$2$-smooth} if in addition, for any two distinct coordinates
  $i\neq j$, $\Prob[q\sim \cT]{q_i=q_j=1}=\tau^2$.
  \Dnote{probably there is a shorter way to define smooth and
    $2$-smooth}
\eat{  A \emph{canonical tester} $\TC$ with query complexity $t$ for
  an $[N,K,D]_2$ code $\mc{C}$ picks a codeword $\mb{q} \in \mc{C}^\perp$ with $\wt(\mb{q})
  \leq t$ according to some distribution $\mc{T}$. It accepts $r \in \GF2^N$
  if $\mb{r}\cdot \mb{q} = 0$. For a monotone function $s:[0,1] \rightarrow [0,1]$, we say that the tester has soundness $s$ if it rejects
  every $\mb{r}$ that such that $\Delta(\mb{r},\mc{C}) \geq \eta D$ with
  probability at least $s(\eta)$. We say it is a $p$-\emph{good tester}
  if $s(1/10) > p$.
}
\end{definition}

If the tester $\cT$ is clear from the context, we will sometimes drop the
subscript of the soundness curve / rejection probability $s_\cT$.
In the setting of this paper, we will consider testers with query
probability slowly going to $0$ (with $N$). Further, given a canonical tester $\mc{T}$, it is easy to amplify the
probability of rejection by repeating the test and taking the XOR of the
results.
\eat{
\PGnote{I think we dont use this lemma, since we are doing the coninuous time random walk.
Deleted:
Formally, let $\cT^{\oplus \ell}$ denote the canonical tester that samples
$\super {q}1,\super{q} 2,\ldots,\super {q} \ell$ independently from $\mc{T}$ and
accepts $r \in \GF2^N$ if $\iprod{r,\sum_{i \in [\ell]} \super{q} i} = 0$.

\begin{lemma}
  \label{lem:xor}
  For every vector $r\in \GF2^N$,
  \begin{displaymath}
    s_{\cT^{\oplus \ell}}(r) = \tfrac12 (1-(1-2s_\cT(r))^\ell)\mper
  \end{displaymath}
  In particular, the soundness curve of the tester $\cT^{\oplus \ell}$
  satisfies $s_{\cT^{\oplus \ell}}(k)\ge \frac{1 - (1-2s_\cT(k))^\ell}{2}$.
\end{lemma}
\begin{proof}
  The value $\sum_{i \in [\ell]} \mb{r} \cdot \mb{q}_i$ is the XOR of
  $\ell$ independent random variables, each equalling  $1$ with
  probability at least $s_\cT(r)$.
\end{proof}}}

Finally, the following simple lemma gives some estimates for rejection probabilities of vectors for smooth testers.

\begin{lemma}
  \label{lem:smooth}
  If $\cT$ is a smooth canonical tester with query probability $\tau$, then
  $s_\cT(\alpha)\le \Delta(\alpha,\cC)\cdot \tau$ for every vector $\alpha\in\GF2^N$.
  Furthermore, if $\cT$ is $2$-smooth, then $s_\cT(\alpha)\ge
  (1-\gamma)\cdot \Delta(\alpha,\cC) \cdot \tau$ for every vector $\alpha\in\GF2^N$ with
  $\Delta(\alpha,\cC) \tau \le \gamma$.
\end{lemma}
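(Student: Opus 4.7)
I propose to prove both bounds by first reducing to the support of an error vector, and then using smoothness (resp.\ 2-smoothness) via a union bound (resp.\ inclusion--exclusion).

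The plan is to start by fixing a nearest codeword $c \in \cC$ to $\alpha$, so that $k \defeq \wt(\alpha - c) = \Delta(\alpha, \cC)$. Set $\beta = \alpha - c$ and let $S = \supp(\beta)$, so $|S| = k$. Since every $q$ in the support of $\cT$ lies in $\cC^\perp$, we have $\iprod{\alpha, q} = \iprod{c, q} + \iprod{\beta, q} = \iprod{\beta, q}$. Hence
\[
  s_\cT(\alpha) = \Prob[q\sim \cT]{\iprod{\beta,q}=1} = \Prob[q\sim\cT]{\text{$q$ has an odd number of $1$'s on $S$}}.
\]
In particular, this probability is at most $\Pr[q_i = 1 \text{ for some } i \in S]$ and at least $\Pr[q_i = 1 \text{ for exactly one } i \in S]$.

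For the upper bound, I would just apply a union bound together with smoothness:
\[
  s_\cT(\alpha) \;\leq\; \sum_{i\in S} \Prob[q\sim\cT]{q_i = 1} \;=\; k\tau \;=\; \Delta(\alpha,\cC)\cdot \tau\mper
\]

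For the lower bound under $2$-smoothness, I would bound from below by the probability of hitting $S$ in exactly one coordinate. For each $i \in S$, inclusion--exclusion combined with $2$-smoothness gives
\[
  \Prob[q\sim\cT]{q_i = 1,\ q_j = 0 \ \forall j \in S\setminus\{i\}}
  \;\geq\; \Prob[q\sim\cT]{q_i=1} - \sum_{\substack{j\in S\\ j\neq i}} \Prob[q\sim\cT]{q_i = q_j = 1}
  \;=\; \tau - (k-1)\tau^2.
\]
Summing over $i\in S$ and noting that these events are disjoint,
\[
  s_\cT(\alpha) \;\geq\; k\bigl(\tau - (k-1)\tau^2\bigr) \;\geq\; k\tau\bigl(1 - k\tau\bigr) \;\geq\; (1-\gamma)\cdot \Delta(\alpha,\cC) \cdot \tau,
\]
using the hypothesis $\Delta(\alpha,\cC)\tau = k\tau \leq \gamma$ in the last step.

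There is no real obstacle here; the only thing to be careful about is to ensure that the exactly-one-coordinate events for distinct $i$ are indeed disjoint (they obviously are) and that the inclusion--exclusion inequality $\Pr[A_i \cap \bigcap_j \overline{A_j}] \geq \Pr[A_i] - \sum_{j\neq i}\Pr[A_i \cap A_j]$ requires only Bonferroni, not $k$-wise independence, so $2$-smoothness truly suffices.
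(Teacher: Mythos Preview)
Your proof is correct and follows essentially the same approach as the paper: reduce to a minimum-weight error vector, then use a union bound for the upper bound and an inclusion--exclusion/Bonferroni argument for the lower bound. Your presentation of the lower bound via the disjoint ``exactly one hit'' events is in fact slightly more careful than the paper's, which writes the Bonferroni inequality for the union and implicitly relies on the same exactly-one reasoning to pass from ``at least one $q_i=1$'' to ``$\iprod{\beta,q}$ is odd.''
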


\begin{proof}
Fix $\alpha \in \GF2^N$ and let $k=\Delta(\alpha,\cC)$.
  Without loss of generality, we may assume $\wt(\alpha)=k$.
  By renaming coordinates, we may assume $\alpha_1=\ldots = \alpha_k=1$ and
  $\alpha_{k+1}=\ldots= \alpha_N = 0$.
  Then, $s_\cT(\alpha)\le \Prob[q\sim\cT]{q_1=1} + \ldots +
  \Prob[q\sim\cT]{q_k=1} = k\cdot \tau$.
  On the other hand,
  \begin{displaymath}
    s_\cT(\alpha)
    \ge \sum_{i=1}^k \Prob[q\sim \cT]{q_i=1}
    - \sum_{0\le i<j\le k} \Prob[q\sim \cT]{q_i=q_j=1}
    \ge k\tau - k^2\tau^2
    \ge (1-\gamma)\cdot k\tau\mper  \qedhere
  \end{displaymath}

\end{proof}

We review the prerequisites for Majority is Stablest and Unique Games related results in the corresponding sections.


\section{Small Set Expanders from Locally Testable Codes}
In this section we first use some known properties of hypercontractive norms to give a sufficient condition for graphs to be small set expanders. We then describe a generic way to construct graphs satisfying this condition from locally testable codes, proving Theorem \ref{ithm:sse}.

\subsection{Subspace hypercontractivity and small set expansion} \label{sec:subspace-hyper}
Let $\cV$ be a subspace of the set of functions from $V$ to $\R$ for some finite set $V$. We denote by $P_{\cV}$ the projection operator to the space $\cV$. For $p,q  \geq 1$, we define
\[
\norm{\cV}_{p \to q} \defeq \max_{f:V\to\R} \tfrac{\norm{P_{\cV}f}_q}{\norm{f}_p} \mper
 \]
 We now relate this notion to small set expansion. We first show that  a subspace $\cV$ with bounded $(4/3)\to 2$ norm cannot contain the characteristic function of a small set:

\begin{lemma}\label{lem:34nosmallset}  Let $f:V\to \{0,1\}$ such that $\mu = \E_{x\in V}[f(x)]$ then $\normt{P_{\cV}f}^2 \leq \norm{\cV}_{4/3\to 2}^2\mu^{3/2}$.
\end{lemma}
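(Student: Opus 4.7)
The plan is a one-line computation that comes directly from unpacking the definition of $\Norm{\cV}_{4/3 \to 2}$ together with the fact that the $L^{4/3}$ expectation norm of a $\{0,1\}$-valued function has a particularly clean form.

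First I would compute $\norm{f}_{4/3}$. Since $f$ takes values in $\{0,1\}$ we have $\abs{f(x)}^{4/3} = f(x)$ pointwise, so
\begin{equation*}
  \norm{f}_{4/3} = \Paren{\E_{x\in V}\abs{f(x)}^{4/3}}^{3/4} = \Paren{\E_{x\in V} f(x)}^{3/4} = \mu^{3/4}.
\end{equation*}

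Then by the definition of the operator norm $\Norm{\cV}_{4/3\to 2}$ applied to the function $f$,
\begin{equation*}
  \Normt{P_\cV f} \le \Norm{\cV}_{4/3\to 2} \cdot \norm{f}_{4/3} = \Norm{\cV}_{4/3\to 2} \cdot \mu^{3/4}.
\end{equation*}
Squaring both sides yields the claimed bound $\Snormt{P_\cV f}\le \Norm{\cV}_{4/3\to 2}^2 \mu^{3/2}$.

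Because the argument is essentially just applying a definition, I do not anticipate any real obstacle; the only point worth double-checking is that the expectation normalization in $\norm{\cdot}_{4/3}$ is consistent with the convention declared in the preliminaries (expectation norms rather than counting norms), which it is.
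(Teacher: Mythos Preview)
Your proof is correct and matches the paper's own argument essentially line for line: both compute $\norm{f}_{4/3}=\mu^{3/4}$ from $f$ being $\{0,1\}$-valued and then apply the defining inequality of $\norm{\cV}_{4/3\to 2}$.
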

\begin{proof} This is by direct calculation
\[
\normt{P_{\cV}f}^2 \leq  \norm{\cV}_{4/3\to 2}^2\norm{f}_{4/3}^2 = \norm{\cV}_{4/3\to 2}^2 \mu^{(3/4)\cdot 2}
\]
\end{proof}
Note that if $\norm{\cV}_{4/3\to 2}=O(1)$ and $\mu=o(1)$, then $\normt{P_{\cV}f}^2 = o(\normt{f}^2)$, meaning the projection of $f$ onto $V$ is small. It is often easier to work with the $2\to 4$ norm instead of the $4/3 \to 2$ norm. The following lemma allows us to use a bound on the former to bound the latter:

\begin{lemma}\label{lem:2to4imp43to2}
\[
\norm{\cV}_{4/3\to 2} \leq \norm{\cV}_{2\to 4}
\]
\end{lemma}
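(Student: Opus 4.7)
My plan is to prove this inequality via duality of operator norms, exploiting the fact that the projection operator $P_\cV$ is self-adjoint with respect to the expectation inner product $\iprod{f,g} = \E_{x\in V}[f(x)g(x)]$.

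First I would recall the standard duality characterization: for any linear operator $T$ on functions from $V$ to $\R$ and any $p,q\ge 1$,
\[
\norm{T}_{p\to q} = \sup_{f,g\neq 0} \frac{\iprod{Tf,g}}{\norm{f}_p \norm{g}_{q^*}},
\]
where $q^*$ denotes the Hölder conjugate of $q$ (i.e., $1/q + 1/q^* = 1$). This follows from the fact that $\norm{Tf}_q = \sup_{g\neq 0} \iprod{Tf,g}/\norm{g}_{q^*}$ by Hölder's inequality together with its tightness.

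Next I would observe that $P_\cV$ is self-adjoint, since it is the orthogonal projection onto the subspace $\cV$ in $L^2(V)$ equipped with the inner product above. Applying the duality formula with $T = P_\cV$, $p = 4/3$, $q = 2$ (so $q^* = 2$) and then swapping $f$ and $g$ using self-adjointness, I get
\[
\norm{\cV}_{4/3\to 2} = \sup_{f,g\neq 0} \frac{\iprod{P_\cV f, g}}{\norm{f}_{4/3}\norm{g}_2} = \sup_{f,g\neq 0} \frac{\iprod{f, P_\cV g}}{\norm{f}_{4/3}\norm{g}_2}.
\]
By Hölder's inequality, $\iprod{f, P_\cV g} \le \norm{f}_{4/3}\norm{P_\cV g}_4$, so the right-hand side is at most $\sup_{g\neq 0} \norm{P_\cV g}_4/\norm{g}_2 = \norm{\cV}_{2\to 4}$, giving the desired bound.

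There is no real obstacle here; the only points to be slightly careful about are (i) checking that the duality identity holds for expectation norms (it does, since Hölder's inequality is insensitive to whether one uses counting or uniform-probability measure, modulo the exponent convention), and (ii) noting that $P_\cV$ being an orthogonal projection with respect to $\iprod{\cdot,\cdot}$ immediately yields $\iprod{P_\cV f, g} = \iprod{f, P_\cV g}$, since both equal $\iprod{P_\cV f, P_\cV g}$.
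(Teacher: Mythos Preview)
Your proof is correct and rests on the same two ingredients as the paper's: the self-adjointness of $P_\cV$ with respect to the expectation inner product, and H\"older's inequality with the conjugate pair $(4,4/3)$. The paper argues directly with a single function: writing $f'=P_\cV f$, it uses $\norm{f'}_2^2=\iprod{f',f}\le \norm{f'}_4\norm{f}_{4/3}$ together with idempotence $f'=P_\cV f'$ to bound $\norm{f'}_4\le \norm{\cV}_{2\to4}\norm{f'}_2$, then divides through. Your version packages the same computation as the standard adjoint identity $\norm{T}_{p\to q}=\norm{T^*}_{q^*\to p^*}$; since $P_\cV$ is self-adjoint and $(4/3)^*=4$, $2^*=2$, this in fact gives the equality $\norm{\cV}_{4/3\to 2}=\norm{\cV}_{2\to 4}$, slightly more than the stated inequality.
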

\begin{proof} Let $f:V \to \R$ and let $f' = P_Vf$. We know that
\begin{align*}
\E[f'^2] &=
\E[f'\cdot f] \quad \text{\;(since $f'$ is the projection of $f$)\;}
\\
&\le \E[ f'^4]^{1/4}\E[ f^{4/3} ]^{3/4} \quad\text{\;(by H\"older's inequality)}
\\
&= \E[(P_\cV f')^4]^{1/4} \E[f^{4/3}]^{3/4} \quad\text{\;(projection is idempotent)}\\
&\le \norm{\cV}_{2\to 4} \E[ (f')^2 ]^{1/2} \E[ f^{4/3} ]^{3/4}
\mper
\end{align*}
Dividing by $\normt{f} = \E[ f^2 ] ^{1/2}$ yields the result.
\end{proof}

We now conclude that graphs for which the top eigenspace has bounded
$2 \to 4$ norm are small set expanders. The lemma can be viewed
qualitatively as a generalization of one direction of the classical
Cheeger's inequality relating combinatorial expansion to eigenvalue
gap \cite{Cheeger70}.

\begin{lemma}\label{lem:hyper-to-sse} Let $G=(V,E)$ be  regular graph, and $\cV$ be the span of the eigenvectors of $G$ with eigenvalue larger than $\lambda$. Then, for every $S \subseteq V$,
\[
\cond(S) \geq 1-\lambda - \norm{\cV}_{2\to 4}^2\sqrt{\mu(S)}
\]
\end{lemma}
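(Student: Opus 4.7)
The plan is to relate the edge expansion $\cond(S)$ to the Rayleigh quotient of the indicator function $f = \mathbf{1}_S$ with respect to (the normalized adjacency matrix of) $G$, and then to decompose this quotient according to the orthogonal splitting $f = f_{\cV} + f_{\cV^\perp}$, where $f_\cV = P_\cV f$. Concretely, for a random edge $(x,y)$ we have $\E[f(x) f(y)] = \iprod{f, Gf}$ and $\Pr[y \in S \mid x \in S] = \E[f(x)f(y)]/\E[f^2]$, so
\[
1 - \cond(S) \;=\; \frac{\iprod{f, Gf}}{\snormt{f}} \mper
\]

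Next I would exploit that $\cV$ is spanned by eigenvectors of $G$, so $\cV$ and $\cV^\perp$ are both $G$-invariant and the quadratic form splits as $\iprod{f, Gf} = \iprod{f_\cV, Gf_\cV} + \iprod{f_{\cV^\perp}, Gf_{\cV^\perp}}$. On $\cV^\perp$ every eigenvalue is at most $\lambda$, giving $\iprod{f_{\cV^\perp}, Gf_{\cV^\perp}} \le \lambda \snormt{f_{\cV^\perp}}$; on $\cV$ we only use the trivial bound $\iprod{f_\cV, Gf_\cV} \le \snormt{f_\cV}$ coming from the fact that the top eigenvalue of a regular graph is $1$. Combining, and using $\snormt{f_{\cV^\perp}} \le \snormt{f}$,
\[
1 - \cond(S) \;\le\; \lambda \;+\; \frac{\snormt{f_\cV}}{\snormt{f}} \mper
\]

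Finally I would control $\snormt{f_\cV}/\snormt{f}$ using the hypercontractivity of $\cV$. Since $f = \mathbf{1}_S$ with $\E[f] = \mu(S) =: \mu$ and $\snormt{f} = \mu$, \pref{lem:34nosmallset} together with \pref{lem:2to4imp43to2} gives $\snormt{f_\cV} \le \norm{\cV}_{2\to 4}^2 \mu^{3/2}$, hence $\snormt{f_\cV}/\snormt{f} \le \norm{\cV}_{2\to 4}^2 \sqrt{\mu}$. Plugging this into the previous display and rearranging yields the claimed inequality. There is no real obstacle; the one point worth being careful about is that the eigenspace decomposition of $G$ is orthogonal with respect to the expectation inner product (so that $\snormt{f} = \snormt{f_\cV} + \snormt{f_{\cV^\perp}}$ and the splitting of $\iprod{f,Gf}$ is clean), which is exactly the normalization already in force throughout the paper.
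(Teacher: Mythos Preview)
Your proof is correct and essentially identical to the paper's: both use the orthogonal decomposition $f = P_{\cV}f + P_{\cV^\perp}f$, bound the two pieces of $\iprod{f,Gf}$ by $\snormt{P_\cV f}$ and $\lambda\snormt{P_{\cV^\perp}f}$ respectively, and then invoke \pref{lem:34nosmallset} and \pref{lem:2to4imp43to2} to control $\snormt{P_\cV f}$.
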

\begin{proof} Let $f$ be the characteristic function of $S$, and write $f = f' + f''$ where $f' = P_{\cV}f$ (and so $f''=f-f'$ is the projection to the eigenvectors with value at most $\lambda$). Let $\mu = \mu(S)$. We know that
\begin{equation}
\cond(S) = 1 - \iprod{f,Gf}/\normt{f}^2 = 1 - \iprod{f,Gf}/\mu \label{eq:expansioneval}
\end{equation}
By \pref{lem:34nosmallset}, and \pref{lem:2to4imp43to2},
\[
\iprod{f,Gf} =  \iprod{f',Gf'} + \iprod{f'',Gf''} \leq \normt{f'}^2 + \lambda\normt{f''}^2 \leq  \norm{\cV}_{4/3\to 2}^2\mu^{3/2} + \lambda\mu \leq \norm{\cV}_{2\to 4}^2 \mu^{3/2} + \lambda \mu\mper
\]
Plugging this into (\ref{eq:expansioneval}) yields the result.
\end{proof}

\subsection{Cayley graphs on codes} \label{sec:Cayley-graph-codes}

\newcommand{\Cay}{\mathrm{Cay}}
Motivated by the previous section, we now construct a graph for which the projection operator on to the top eigenspace is hypercontractive, i.e., has small $2\to4$ norm, while also having high rank.

Let $\cC\sse \GF2^N$ be an $[N,K,D]_2$ code. The graph we construct will be a Cayley graph with vertices indexed by $\cC^\bot$ and edges drawn according to a canonical local tester $\cT$ for $\cC$.
%
%
Let $\Cay(\cC^{\perp},\cT)$ denote the (weighted) Cayley graph with vertex set
$\mc{C}^\perp$ and edges generated by $\mc{T}$.
We describe the graph more precisely by specifying the neighbor
distribution for a random walk on the graph.
For a vertex $p\in \cC^\bot$, a random neighbor has the form $p+q$ with $q$
sampled from the tester $\cT$.
(Since the group $\cC^\bot$ has characteristic $2$, the graph
$\Cay(\cC^\bot,\cT)$ is symmetric for every tester $\cT$.)

%
We will argue that if the tester $\cT$ has small query complexity and good
soundness, then the graph $\Cay(\CC^{\perp},\cT)$ has many large
eigenvalues while being a small-set expander.

\begin{theorem}
\label{thm:main}
Let $\mc{C}$ be an $[N,K,D]_2$ linear code that has a canonical tester
$\cT$ with query complexity $\eps N$ and soundness curve $s()$ and let
$k<D/5$. The graph $\Cay(\mc{C}^\perp,\mc{T})$ has $2^{N-K}=2^H$ vertices with
at least $N/2$ eigenvalues larger than $1 - 4\e$.
All subsets $S$ of $\mc{C}^\perp$ have expansion at least
\[
\cond(S) \geq 2s(k) - 3^{k}\sqrt{\mu(S)}
\]
\end{theorem}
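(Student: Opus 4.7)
The plan is to diagonalize $\Cay(\cC^\perp, \cT)$ via its characters and then combine the resulting spectral picture with \pref{lem:hyper-to-sse}. Since $\cC^\perp$ is an abelian group, its characters are parameterized by the cosets of $\cC$ in $\GF 2^N$: for each coset $\alpha + \cC$, the function $x \mapsto (-1)^{\iprod{\alpha, x}}$ is a well-defined character of $\cC^\perp$, and these yield all $|\GF 2^N/\cC| = 2^{N-K} = 2^H$ eigenfunctions, which matches the vertex count. A direct calculation shows that the eigenvalue corresponding to $\chi_\alpha$ is $\lambda_\alpha = \E_{q \sim \cT}[(-1)^{\iprod{\alpha, q}}] = 1 - 2s_\cT(\alpha)$, tying the spectrum directly to the rejection probabilities of the tester.

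For the first claim, I would use the $N$ characters associated with the standard basis vectors $e_1, \ldots, e_N$. These lie in distinct cosets of $\cC$ because $D \geq 5k > 2$ rules out any nonzero codeword of weight at most two. Now $s_\cT(e_i) = \Pr_{q\sim\cT}[q_i = 1]$, and averaging over $i$ gives $\tfrac{1}{N}\E_{q\sim\cT}[\wt(q)] \leq \e$ by the query-complexity assumption. Markov's inequality then produces at least $N/2$ indices with $s_\cT(e_i) \leq 2\e$, hence eigenvalues $\geq 1 - 4\e$.

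For the small-set expansion claim, I would apply \pref{lem:hyper-to-sse} with $\lambda = 1 - 2s(k)$. The top eigenspace $\cV$ is spanned by characters $\chi_\alpha$ with $s_\cT(\alpha) < s(k)$; by definition of the soundness curve, any such $\alpha$ satisfies $\Delta(\alpha, \cC) < k$. Writing $\alpha = c + \beta$ with $c \in \cC$ and $\wt(\beta) < k$, and using that $\chi_c \equiv 1$ on $\cC^\perp$, every element of $\cV$ is the restriction to $\cC^\perp$ of a multilinear polynomial of degree less than $k$ on $\{\pm 1\}^N$.

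The main step is bounding $\norm{\cV}_{2 \to 4}^2 \leq 3^k$. The key observation is that, because $\cC$ has minimum distance $D > 4k$, for every $\alpha$ with $0 < \wt(\alpha) < D$ we have $\alpha \notin \cC$ and hence $\E_{x \in \cC^\perp}[\chi_\alpha(x)] = 0$; equivalently, the $\pm 1$ coordinates of a uniformly random $x \in \cC^\perp$ are $D$-wise independent. Thus for any $f \in \cV$, both $f^2$ and $f^4$ are polynomials of degree less than $4k < D$ whose expectations over $\cC^\perp$ agree with their expectations over $\{\pm 1\}^N$. The Bonami--Beckner $(2,4)$-hypercontractive inequality on the hypercube then gives $\norm{f}_4 \leq 3^{(k-1)/2}\norm{f}_2$ for polynomials of degree less than $k$, so $\norm{\cV}_{2\to 4}^2 \leq 3^{k-1} \leq 3^k$, and \pref{lem:hyper-to-sse} delivers $\cond(S) \geq 2s(k) - 3^k\sqrt{\mu(S)}$.
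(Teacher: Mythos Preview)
Your proof is correct and follows essentially the same route as the paper: identify the eigenfunctions with characters indexed by cosets $\GF2^N/\cC$, compute $\lambda_\alpha = 1 - 2s_\cT(\alpha)$, use the averaging argument on $\wt(q)$ to get $N/2$ large eigenvalues, and then combine the soundness curve (which forces $\deg(\chi_\alpha)<k$ for the top eigenspace) with the $(D-1)$-wise independence of $\cC^\perp$ to transfer the Bonami--Beckner $2\to4$ bound from $\sbits^N$ to $\cC^\perp$ before invoking \pref{lem:hyper-to-sse}. The only cosmetic difference is that the paper packages these steps as separate lemmas (\pref{lem:eigenvalues}, \pref{cor:manyeigs}, \pref{cor:spectrumbound}, \pref{lem:hc42}, \pref{cor:hc-sse}), while you run them inline; note also that your justification $D\ge 5k>2$ for the $e_i$ lying in distinct cosets tacitly assumes $k\ge 1$, which is harmless but worth making explicit.
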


By Xoring the results of mulitple tests, one can let the
soundness $s(k)$ tend to $\half$.
Hence, if $s(k)$ is significantly larger than $\e$ (for appropriate $k$),
one can obtain a graph with many large eigenvalues such that small enough
sets have near-perfect expansion.

\Dnote{deleted: In particular, subsets $S$ of volume less than $\delta 9^{-D/5}$ have at least $2s(D/5)-\delta$ fraction of edges
leaving them.
Note that, for many codes and tester, as $D$ grows $s(D/5)$ tends to $1/2$.
}

\paragraph{Eigenfunctions and Eigenvalues}
We identify the graph $G = \Cay(\mc{C}^\perp,\mc{T})$ by its normalized adjacency matrix.
For every vector $\alpha \in \GF2^N$, the character $\chi_\alpha\from
\cC^\bot \to \sbits$ with $\chi_\alpha(\mb{p}) = (-1)^{\iprod{\alpha,
    \mb{p}}}$ is an eigenfunction of $G$.
If two vectors $\alpha, \beta\in \GF2^N$ belong to the same coset of
$\mc{C}$, they define the same character over $C^\bot$ since $\iprod{\alpha
  + \beta,\mb{p}} =0$ for all $p\in \cC^\bot$, while if $\alpha + \beta
\not\in \mc{C}$ then $\iprod{\chi_\alpha, \chi_\beta } =0$.
Thus, the set of characters of $\cC^\bot$ corresponds canonically to the
quotient space $\GF 2^N/\cC$.
If we fix a single representative $\alpha$ for every coset in
$\GF2^N/\mc{C}$, we have exactly $2^{N - K} = 2^H$ distinct, mutually
orthogonal characters. We define the degree of a character as follows:
\begin{equation}
\deg(\chi_\alpha)
= \min_{\mb{c} \in \mc{C}} \wt(\alpha + \mb{c})
= \Delta(\alpha,\mc{C})\mper
\end{equation}
Note that if $\deg(\chi_\alpha)<D/2$, then the minimum weight
representative in $\alpha+\cC$ is unique.
(This uniqueness will allow us later to define low-degree influences of
functions, see Section \ref{sec:stablestcodes}.)

We let $\lambda_\alpha$ denote the eigenvalue corresponding to character
$\chi_\alpha$. The following observation connects the soundness of the
canonical tester to the spectrum of $G$:

\begin{lemma}
\label{lem:eigenvalues}
For any $\alpha \in \GF2^N$, $\lambda_\alpha = 1-2s(\alpha)$.
\end{lemma}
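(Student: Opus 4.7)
The plan is to directly compute the action of the normalized adjacency matrix $G$ on the character $\chi_\alpha$ and read off the eigenvalue. This is a standard Fourier-analytic calculation on the abelian group $\cC^\perp \subseteq \GF 2^N$, specialized to the Cayley generating distribution $\cT$.

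First I would expand the definition: by construction of the graph $\Cay(\cC^\perp,\cT)$, a random neighbor of $p \in \cC^\perp$ is $p+q$ with $q \sim \cT$, so
\[
(G\chi_\alpha)(p) \;=\; \E_{q\sim \cT}\bigl[\chi_\alpha(p+q)\bigr]
\;=\; \E_{q\sim\cT}\bigl[(-1)^{\iprod{\alpha,p+q}}\bigr].
\]
Using bilinearity of $\iprod{\cdot,\cdot}$ over $\GF 2$, the factor $(-1)^{\iprod{\alpha,p}}=\chi_\alpha(p)$ pulls out of the expectation, leaving
\[
(G\chi_\alpha)(p) \;=\; \chi_\alpha(p)\cdot \E_{q\sim\cT}\bigl[(-1)^{\iprod{\alpha,q}}\bigr].
\]
This confirms that $\chi_\alpha$ is indeed an eigenfunction (as already noted in the paragraph preceding the lemma) and identifies
\[
\lambda_\alpha \;=\; \E_{q\sim\cT}\bigl[(-1)^{\iprod{\alpha,q}}\bigr].
\]

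Next I would translate this expectation into the tester's rejection probability. Since $(-1)^{\iprod{\alpha,q}} = 1$ when $\iprod{\alpha,q}=0$ and $-1$ when $\iprod{\alpha,q}=1$, we have
\[
\E_{q\sim\cT}\bigl[(-1)^{\iprod{\alpha,q}}\bigr]
\;=\; \Prob[q\sim\cT]{\iprod{\alpha,q}=0} - \Prob[q\sim\cT]{\iprod{\alpha,q}=1}
\;=\; 1 - 2\,\Prob[q\sim\cT]{\iprod{\alpha,q}=1}.
\]
By Definition~\ref{def:canonical-tester}, the rightmost probability is exactly $s_\cT(\alpha) = s(\alpha)$, yielding $\lambda_\alpha = 1-2s(\alpha)$ as claimed.

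There is no real obstacle here; the only thing to be careful about is that the computation takes place over $\cC^\perp$ rather than all of $\GF 2^N$, so one should note that $\chi_\alpha$ depends only on the coset $\alpha + \cC$ (since $q \in \cC^\perp$ kills any representative ambiguity in $p$, and elements of $\cC$ act trivially on $\cC^\perp$), which makes the right-hand side $1-2s(\alpha)$ well-defined on cosets as well. This is consistent with the earlier observation that distinct characters of $\cC^\perp$ are indexed by $\GF 2^N/\cC$.
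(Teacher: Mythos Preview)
Your proof is correct and takes essentially the same approach as the paper's own proof, which simply invokes the standard Cayley-graph fact $\lambda_\alpha = \E_{q\sim\cT}[\chi_\alpha(q)] = \E_{q\sim\cT}[(-1)^{\iprod{\alpha,q}}] = 1 - 2\Pr_{q\sim\cT}[\iprod{\alpha,q}=1] = 1-2s(\alpha)$. You have just spelled out that standard fact explicitly.
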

\begin{proof}
From standard facts about Cayley graphs, it follows that
\begin{equation}
\label{eqn:test}
\lambda_\alpha = \E_{\mb{q} \in \mc{T}}[\chi_\alpha(\mb{q})] = \E_{\mb{q} \in
  \mc{T}}[(-1)^{\alpha\cdot \mb{q}}] = 1 - 2\Pr_{\mb{q} \in
  \mc{T}}[\alpha \cdot \mb{q} =1] = 1 - 2s(\alpha).
\end{equation}
\end{proof}

We use this to show that many {\em dictator cuts} in $G$ which correspond to characters with
degree $1$ have eigenvalues close to $1$. We let
$\lambda_i, \chi_i$ denote $\lambda_{\{i\}}, \chi_{\{i\}}$. As noted
before, for $D > 2$ these are distinct characters.

\begin{corollary}\label{cor:manyeigs}
We have $\lambda_i \geq 1 - 4\eps$ for at least $N/2$ coordinates $[i] \in N$.
\end{corollary}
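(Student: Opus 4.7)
The plan is straightforward: use the connection between eigenvalues and the tester's rejection probability established in \pref{lem:eigenvalues}, then bound the rejection probabilities for the $N$ dictator characters by a simple averaging argument based on the query complexity.

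First I would observe that for $\alpha=e_i$ (the $i$-th standard basis vector) and any $q\in\GF2^N$, $\iprod{\alpha,q}=q_i$, so the rejection probability is $s(\chi_i)=\Pr_{q\sim\cT}[q_i=1]$. By \pref{lem:eigenvalues}, $\lambda_i = 1-2\Pr_{q\sim\cT}[q_i=1]$, so the goal $\lambda_i\ge 1-4\e$ becomes $\Pr_{q\sim\cT}[q_i=1]\le 2\e$.

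Next I would sum these probabilities over all coordinates and swap the order of summation and expectation:
\[
\sum_{i=1}^N \Pr_{q\sim\cT}[q_i=1] \;=\; \E_{q\sim\cT}\Brac{\sum_{i=1}^N \Ind[q_i=1]} \;=\; \E_{q\sim\cT}[\wt(q)] \;\le\; \e N,
\]
where the final inequality uses the hypothesis that $\cT$ has query complexity $\e N$, i.e., every vector in the support of $\cT$ has weight at most $\e N$.

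Finally, a Markov-type averaging argument finishes the proof: if more than $N/2$ coordinates satisfied $\Pr_{q\sim\cT}[q_i=1]>2\e$, then the sum above would exceed $(N/2)\cdot 2\e = \e N$, a contradiction. Hence at least $N/2$ coordinates $i\in[N]$ satisfy $\Pr_{q\sim\cT}[q_i=1]\le 2\e$, and for each such coordinate $\lambda_i\ge 1-4\e$. There is no real obstacle here; the only subtlety worth mentioning is that the characters $\chi_i$ are genuinely distinct (as indicated in the text) whenever $D>2$, so the $N/2$ large eigenvalues are truly $N/2$ distinct eigenvalues of $G$, which is what is needed for the quantitative statement of \pref{thm:main}.
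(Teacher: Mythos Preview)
Your proof is correct and follows essentially the same approach as the paper: both use \pref{lem:eigenvalues} to write $\lambda_i = 1 - 2\Pr_{q\sim\cT}[q_i=1]$, observe that $\sum_i \Pr[q_i=1] = \E[\wt(q)] \le \eps N$ from the query-complexity bound, and then apply the averaging (Markov) argument to conclude that at most $N/2$ coordinates can have $\Pr[q_i=1]\ge 2\eps$. Your added remark about the distinctness of the $\chi_i$ (requiring $D>2$) matches the paper's observation preceding the corollary.
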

\begin{proof}
We have $\lambda_i = 1 - 2\Pr_{\mb{q} \in \mc{T}}[\mb{q}_i =1]$.
Since $\wt(\mb{q}) \leq \eps N$ for every $\mb{q} \in \mc{T}$,
$$\sum_{i=1}^N\Pr_{\mb{q} \in \mc{T}}[\mb{q}_i =1]  \leq \eps N.$$
So we can have $\Pr_{\mb{q} \in \mc{T}}[\mb{q}_i =1]  \geq 2\eps$
for at most $N/2$ coordinates.
\end{proof}

Another immediate conseuqence of Lemma \ref{lem:eigenvalues} is that large degree characters have small eigenvalues.
\begin{corollary} \label{cor:spectrumbound}
  If $\deg(\chi_\alpha) \geq k$, then $\lambda_\alpha \leq 1 - 2s(k)$.
\end{corollary}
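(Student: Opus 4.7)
The plan is to derive this as a one-line consequence of Lemma \ref{lem:eigenvalues} together with the definition of the soundness curve. First I would unpack the hypothesis: by the definition of $\deg(\chi_\alpha)$ given just before Lemma \ref{lem:eigenvalues}, we have $\deg(\chi_\alpha) = \Delta(\alpha,\cC)$, so the assumption $\deg(\chi_\alpha) \ge k$ is equivalent to $\Delta(\alpha,\cC) \ge k$.

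Next I would invoke the definition of the soundness curve from Definition \ref{def:canonical-tester}, namely
\[
s(k) = \min_{\beta \in \GF 2^N,\; \Delta(\beta,\cC) \ge k} \Pr_{q\sim \cT}[\iprod{\beta,q}=1],
\]
so that $s(\alpha) \ge s(k)$ whenever $\Delta(\alpha,\cC) \ge k$. Finally, Lemma \ref{lem:eigenvalues} gives $\lambda_\alpha = 1 - 2s(\alpha)$, and substituting the bound on $s(\alpha)$ yields $\lambda_\alpha \le 1 - 2s(k)$, as desired.

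There is really no obstacle here — the corollary is a direct combination of the eigenvalue formula of the previous lemma and the minimum in the definition of $s(k)$. The only thing worth double-checking while writing it up is that the degree was defined as the coset-minimum Hamming weight $\Delta(\alpha,\cC)$, so that the notion of ``degree $\ge k$'' is exactly what the soundness curve quantifies over; this is immediate from the equation displayed just above Lemma \ref{lem:eigenvalues}.
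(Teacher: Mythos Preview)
Your proposal is correct and matches the paper's own (deleted as trivial) proof exactly: use $\lambda_\alpha = 1 - 2s(\alpha)$ from Lemma~\ref{lem:eigenvalues}, note that $\deg(\chi_\alpha) = \Delta(\alpha,\cC) \ge k$ forces $s(\alpha) \ge s(k)$ by the definition of the soundness curve, and conclude.
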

\PGnote{Deleted trivial proof.}
\eat{
\begin{proof}
We have $\lambda_\alpha = 1 - 2s(\alpha)$.
But since $\deg(\chi_\alpha) = \Delta(\alpha,\mc{C}) \geq
k$, the tester rejects with probability at least $s(k)$.
\end{proof}}


\paragraph{Subspace Hypercontractivity}



Given a function $f\from\mb{\mc{C}^\perp} \to \R$ we can write it
(uniquely) as a linear combination of the characters
$\set{\chi_\alpha}_{\alpha \in \GF2^N/\cC}$
\begin{displaymath}
  f(\mb{p})
  = \sum_{\alpha \in \GF2^N/\mc{C}}\hat{f}(\alpha)\chi_\alpha(\mb{p})\mcom
\end{displaymath}
where $\hat f(\alpha)=\iprod{\chi_\alpha,f}$ is the \emph{Fourier
  transform} of $f$ (over the abelian group $\cC^\bot$).

We define the \emph{degree} of $f$, denoted $\deg(f)$ to be $\max_{\alpha:\hat{f}(\alpha) \neq 0}
\deg(\chi_\alpha).$ Note that $\deg(f+g) \leq \max\{ \deg(f),\deg(g)\}$ and
$\deg(fg) \leq \deg(f) + \deg(g)$.
\eat{
\Bnote{Removed sentence:  "Because characters with degree $1$ do \emph{not} necessarily correspond to a dictatorship function
over $\GF 2^{\dim(\cC^{\perp})}$, our definition is not identical to the
usual definition of polynomial degree."
because Parikshit and David found it confusing. Note however that dictators do correspond to characters (= monomials) of degree $1$.}


%
\Dnote{last sentence sounds confusing to me.
  there is not really a canonical way to represent functions on $C^\bot$ as
  functions over $\GF2^{\dim(C^\bot)}$ (would have to fix a basis).
 }
}
The following crucial observation follows immediately from the fact that $\cC$ has minimum distance $D$.
\begin{fact}
The uniform distribution on $\mc{C}^\perp$ is $(D-1)$ wise
independent. That is, for any $\alpha \in \GF2^N$ such that $1 \leq \wt(\alpha) < D$ we have $\E_{\mb{p} \in
  \mc{C}^\perp}[\chi_\alpha(\mb{p})] = 0$.
\end{fact}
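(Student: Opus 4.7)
The plan is to reduce the claim to the standard orthogonality relation for characters of the finite abelian group $\cC^\perp$, together with the observation that no nonzero vector of Hamming weight less than $D$ can lie in $\cC$.

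First I would recall the basic fact that for any finite abelian group $H$ and any character $\chi$ of $H$, the expectation $\E_{h \in H}[\chi(h)]$ equals $1$ if $\chi$ is the trivial character and $0$ otherwise. Applying this to $H = \cC^\perp$ with the character $\chi_\alpha(\mb{p}) = (-1)^{\iprod{\alpha,\mb{p}}}$, we see that $\E_{\mb{p}\in \cC^\perp}[\chi_\alpha(\mb{p})] \neq 0$ precisely when $\chi_\alpha$ restricted to $\cC^\perp$ is the trivial character, i.e., when $\iprod{\alpha,\mb{p}} = 0$ for every $\mb{p} \in \cC^\perp$. By definition of the dual code, this last condition is equivalent to $\alpha \in (\cC^\perp)^\perp = \cC$.

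The second step is to verify that the hypothesis $1 \le \wt(\alpha) < D$ rules out $\alpha \in \cC$. Since $\wt(\alpha) \ge 1$, the vector $\alpha$ is nonzero, and since $\cC$ has minimum distance $D$, every nonzero codeword has Hamming weight at least $D$. Thus $\wt(\alpha) < D$ forces $\alpha \notin \cC$, which by the first step yields $\E_{\mb{p}\in \cC^\perp}[\chi_\alpha(\mb{p})] = 0$, as desired.

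No obstacle is anticipated; the only subtlety worth flagging is that the identification of the characters of $\cC^\perp$ that vanish in expectation with the coset $\mb{0} + \cC$ uses $(\cC^\perp)^\perp = \cC$, which is standard for linear subspaces of $\GF 2^N$. This immediately implies $(D-1)$-wise independence of the uniform distribution on $\cC^\perp$, because $(D-1)$-wise independence of a $\pm 1$ distribution is exactly the vanishing of $\E[\chi_\alpha]$ for every nonzero $\alpha$ of weight at most $D-1$.
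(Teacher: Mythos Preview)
Your proof is correct and is exactly the standard argument the paper has in mind; the paper simply states that the fact ``follows immediately from the fact that $\cC$ has minimum distance $D$'' without spelling out the character-orthogonality and $(\cC^\perp)^\perp=\cC$ steps that you have supplied.
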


This fact has the following corollary:

\begin{lemma}
\label{lem:hc42}
Let $\ell < (D-1)/4$ and let $\cV$ be the subspace of functions with degree at most $\ell$. Then
$\norm{\cV}_{2\to 4} \leq 3^{\ell/2}$.
\end{lemma}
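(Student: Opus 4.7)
The plan is to reduce to the classical Bonami--Beckner $(2,4)$-hypercontractive inequality on the full Boolean cube $\GF2^N$ by lifting functions on $\cC^\perp$ to low-degree polynomials on $\GF2^N$ and exploiting the fact stated just before the lemma.

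First I would set up a canonical lift. Given $f\from\cC^\perp\to\R$ of degree at most $\ell$, write $f = \sum_{\alpha}\hat f(\alpha)\chi_\alpha$, where $\alpha$ ranges over the cosets of $\cC$ with $\deg(\chi_\alpha)=\Delta(\alpha,\cC)\le \ell$. Since $\ell < (D-1)/4 < D/2$, each such coset has a \emph{unique} minimum-weight representative $\tilde\alpha\in\GF2^N$ of weight at most $\ell$. Define
\[
\tilde f(x) \defeq \sum_{\alpha}\hat f(\alpha)\,\chi_{\tilde\alpha}(x)\qquad (x\in\GF2^N).
\]
Since $\chi_{\tilde\alpha}$ is a multilinear monomial of degree $\wt(\tilde\alpha)\le\ell$, the function $\tilde f$ is a polynomial of degree at most $\ell$ on $\GF2^N$. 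For any $p\in\cC^\perp$ we have $\tilde\alpha+\alpha\in\cC$, so $\chi_{\tilde\alpha}(p)=\chi_\alpha(p)$; hence $\tilde f$ agrees with $f$ on $\cC^\perp$.

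Next I would transfer expectations from $\cC^\perp$ to $\GF2^N$. The preceding fact says that for any $\beta\in\GF2^N$ with $1\le\wt(\beta)<D$, we have $\E_{p\in\cC^\perp}\chi_\beta(p)=0$. Expanding an arbitrary polynomial $g$ of degree $d\le D-1$ on $\GF2^N$ in the Fourier basis and taking expectations yields $\E_{p\in\cC^\perp}g(p)=\hat g(0)=\E_{x\in\GF2^N}g(x)$. Applying this to $g=\tilde f^{\,2}$ (degree $\le 2\ell$) and $g=\tilde f^{\,4}$ (degree $\le 4\ell<D-1$, using the hypothesis $\ell<(D-1)/4$), I obtain
\[
\E_{p\in\cC^\perp}f(p)^{2}=\E_{x\in\GF2^N}\tilde f(x)^{2},\qquad
\E_{p\in\cC^\perp}f(p)^{4}=\E_{x\in\GF2^N}\tilde f(x)^{4}.
\]

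Now I would invoke the classical Bonami--Beckner hypercontractivity on $\GF2^N$: any multilinear polynomial $\tilde f$ of degree at most $\ell$ satisfies $\|\tilde f\|_4\le 3^{\ell/2}\|\tilde f\|_2$ (in expectation norms). Combining with the two identities above gives $\|f\|_4\le 3^{\ell/2}\|f\|_2$ on $\cC^\perp$ for every $f$ of degree at most $\ell$. Finally, for an arbitrary $h\from\cC^\perp\to\R$, the function $f=P_\cV h$ lies in $\cV$ (degree $\le\ell$), and orthogonal projection is a contraction in $L^2$, so $\|P_\cV h\|_4\le 3^{\ell/2}\|P_\cV h\|_2\le 3^{\ell/2}\|h\|_2$, yielding $\|\cV\|_{2\to 4}\le 3^{\ell/2}$.

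The only subtle point is the degree-bookkeeping that guarantees the lift $\tilde f$ is well defined and of degree $\le\ell$ (requiring uniqueness of min-weight representatives, i.e.\ $\ell<D/2$) and that the fourth power stays below the $(D-1)$-wise independence threshold (requiring $4\ell<D-1$); both are ensured by the single hypothesis $\ell<(D-1)/4$.
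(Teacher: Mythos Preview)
Your proof is correct and follows essentially the same approach as the paper: lift a degree-$\ell$ function on $\cC^\perp$ to a degree-$\ell$ multilinear polynomial on $\GF2^N$, use $(D-1)$-wise independence of $\cC^\perp$ to match the second and fourth moments, and then invoke classical Bonami--Beckner hypercontractivity. The paper's write-up is terser (it silently identifies $f^\ell$ with its lift), whereas you make the lift via unique minimum-weight coset representatives explicit, but the argument is the same.
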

\begin{proof}
The proof follows from the following two facts:
\begin{enumerate}
\item This bound on the $2\to 4$ norm is known to hold for true low degree polynomials under the uniform distribution on the hypercube by the Bonami-Beckner-Gross inequality~\cite{OD:08}.
\item The expectation of polynomials of degree up to $4\ell < D-1$ are the same under the uniform distribution
and a $D-1$-wise independent distribution.
\end{enumerate}

Given $f:\R^n\to\R$, let $f^\ell$ denote its projection onto the space
$\cV$ spanned by characters where $\deg(\chi_\alpha) \leq \ell$. We have
\[
\norm{f^\ell}_4^4 = \E_{\mb{p}\in \cC^{\perp}}[ f^\ell(\mb{p})^4 ] = \E_{\mb{p} \in \{0,1\}^N}[f^\ell(\mb{p})^4 ] \mcom
\]
\[
\norm{f}_2^2 \geq \norm{f^\ell}_2^2 =  \E_{\mb{p}\in \cC^{\perp}}[ f^\ell(\mb{p})^2 ] = \E_{\mb{p} \in \{0,1\}^N}[f^\ell(\mb{p})^2 ] \mper
\]
By the $2 \to 4$ hypercontractivity for degree $\ell$ polynomials over $\{0,1\}^N$,
\[
\E_{\mb{p} \in \{0,1\}^N}[ f^\ell(\mb{p})^4 ] \leq 9^{\ell} \E_{\mb{p} \in \{0,1\}^N}[ f^\ell(\mb{p})^2]^2 \mper
\]
So we conclude that
\[
\E_{\mb{p} \in \cC^\perp}[ f^\ell(\mb{p})^4 ] \leq 9^{\ell} \E_{\mb{p}
  \in \cC^\perp}[ f^\ell(\mb{p})^2]^2 \leq 9^{\ell} \E_{\mb{p} \in
  \cC^\perp}[ f(\mb{p})^2]^2 \mcom
\]
which implies that $\norm{\cV}_{2\to 4} \leq 3^{\ell/2}$.
\end{proof}

Combining the above bound with Lemma  \ref{lem:hyper-to-sse} we get that, if the local tester rejects sufficiently far codewords with high probability, then the resulting graph is a small set expander:

\begin{corollary}
\label{cor:hc-sse}
  For every vertex subset $S$ in the graph $\Cay(\cC^\bot,\mc{T})$
  and every  $k < D/5$, we have
    \[
    \cond(S) \geq  2s(k) - 3^{k}\mu(S)^{\frac{1}{2}}.
    \]
\end{corollary}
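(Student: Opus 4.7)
The plan is to combine the three main ingredients we have assembled: the spectrum bound on high-degree characters (Corollary~\ref{cor:spectrumbound}), the subspace hypercontractivity estimate for low-degree functions (Lemma~\ref{lem:hc42}), and the generic reduction from hypercontractivity to small-set expansion (Lemma~\ref{lem:hyper-to-sse}). None of the individual steps should be difficult; the proof is essentially an assembly of these pieces with the right choice of threshold.

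Concretely, I would let $\cV$ denote the subspace of functions on $\cC^\bot$ spanned by the characters $\chi_\alpha$ of degree at most $k-1$. The key observation is that any eigenfunction $\chi_\alpha$ of $G = \Cay(\cC^\bot,\cT)$ with $\lambda_\alpha > 1-2s(k)$ must lie in $\cV$: indeed, by Corollary~\ref{cor:spectrumbound}, $\deg(\chi_\alpha) \ge k$ implies $\lambda_\alpha \le 1 - 2s(k)$, so the contrapositive puts all such eigenfunctions inside $\cV$. Thus $\cV$ contains the entire eigenspace with eigenvalue strictly above the threshold $\lambda \defeq 1-2s(k)$.

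Next, I would check that the constraint $k < D/5$ lets us apply Lemma~\ref{lem:hc42} to $\cV$: since $k-1 < D/5 - 1 < (D-1)/4$ (for any reasonable $D$), the lemma yields $\norm{\cV}_{2\to 4} \le 3^{(k-1)/2}$, so $\norm{\cV}_{2\to 4}^2 \le 3^{k-1} \le 3^k$. Plugging this into Lemma~\ref{lem:hyper-to-sse} with the threshold $\lambda = 1-2s(k)$ gives
\[
\cond(S) \;\ge\; 1 - \bigl(1 - 2s(k)\bigr) - 3^{k}\sqrt{\mu(S)} \;=\; 2 s(k) - 3^{k}\mu(S)^{1/2},
\]
which is exactly the claimed bound.

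The only step that requires a moment's care is verifying that $\cV$ really contains \emph{all} eigenspaces above the threshold — this uses the fact that the characters $\{\chi_\alpha\}_{\alpha\in \GF2^N/\cC}$ form an orthogonal eigenbasis, and that degree is well-defined on cosets. Beyond that, the argument is a direct chaining of the prior lemmas, so I do not anticipate any real obstacle.
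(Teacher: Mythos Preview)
Your proposal is correct and follows essentially the same route as the paper: identify the span of low-degree characters as a subspace containing all eigenfunctions with eigenvalue above $1-2s(k)$ (via Corollary~\ref{cor:spectrumbound}), bound its $2\to 4$ norm by Lemma~\ref{lem:hc42}, and plug into Lemma~\ref{lem:hyper-to-sse}. The only cosmetic point is that Lemma~\ref{lem:hyper-to-sse} is stated for $\cV$ equal to the exact high-eigenvalue eigenspace, whereas you take the (possibly larger) degree-$\le k-1$ space; but since your $\cV$ is $G$-invariant and contains that eigenspace, the proof of the lemma applies verbatim, as you note.
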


In particular, as $s(k)$ tends to $1/2$, the expansion of small sets tends
to $1$. This corollary together with \pref{cor:manyeigs} completes the
proof of \pref{thm:main}.


\subsection{A Canonical Tester for Reed Muller codes}\label{sec:cantestrm}
We instantiate the construction from the previous section for the Reed Muller code. Let $\mc{C} =
\RM(n,n-d-1)$ be the Reed Muller code on $n$ variables of degree $n-d-1$,
which has $N = 2^n$, $H = \sum_{j \leq d} {n \choose j}$ and $D =2^{d+1}$.
Bhattacharyya, Kopparty, Schoenebeck, Sudan and
Zuckerman~\cite{BhattacharyyaKSSZ10} analyze the canonical tester
$\cT_\RM$ which samples a random minimum weight codeword from $\mc{C}^\perp$.
It is well known that the dual of $\RM(n,n-d-1)$ is exactly $\RM(n,d)$ and that
the minimum weight codewords in $\RM(n,d)$ are products of $d$ linearly
independent affine forms. They have weight $2^{n-d} = \eps N$ where $\eps =
2^{-d}$.  Thus, our graph $\Cay_\RM = \Cay(\RM_{n,d},\cT_\RM)$ has  as its vertices the $d$-degree polynomials
over $\GF 2^n$ with an edge between every pair of polynomials $P,Q$
such that $P-Q$ is equal to a minimum weight codeword, which are known to be products of
$d$ linearly independent affine forms.

\begin{theorem}[\cite{BhattacharyyaKSSZ10}]
  \label{thm:BKSSZ}
\label{thm:rm-tester-soundness}
  There exists a constant $\eta_0 > 0$ such that for all $n,d$, and
  $k<\eta_0 2^{d}$ the tester $\cT_\RM$ described above has soundness $s(k)
  \geq (k/2) \cdot 2^{-d}$.
\end{theorem}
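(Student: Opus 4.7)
The plan is to split on the Hamming distance $\Delta(\alpha,\cC)$ into a ``near'' and ``far'' regime, separated by a threshold of order $1/\tau = 2^d$. In the near regime, the approximate 2-smoothness of the tester together with \pref{lem:smooth} directly yields the claimed linear lower bound. In the far regime, a linear bound in the distance would exceed the trivial ceiling $\tfrac{1}{2}$, so one instead invokes the main testing theorem of Bhattacharyya et al.\ \cite{BhattacharyyaKSSZ10} to get a universal constant lower bound on the rejection probability, and this constant absorbs the linear bound when $\eta_0$ is chosen small enough.

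The first step is to verify that $\cT_\RM$ meets the hypothesis of \pref{lem:smooth} with query probability $\tau = 2^{-d}$. Every minimum weight codeword of $\cC^\perp = \RM(n,d)$ is a product of $d$ linearly independent affine forms over $\GF2^n$ and hence has weight exactly $2^{n-d}$, so a vertex queries an $\e$-fraction of coordinates with $\e = 2^{-d}$. The general affine group $\mathrm{GA}(n)$ permutes the coordinate set $\GF 2^n$, preserves the collection of minimum weight codewords, is transitive on points, and is doubly transitive on pairs of distinct points. Consequently, each coordinate is queried with probability $\tau = 2^{n-d}/2^n = 2^{-d}$, and each ordered pair of distinct coordinates is queried jointly with probability $\tfrac{2^{n-d}(2^{n-d}-1)}{2^n(2^n-1)}$, which a short arithmetic check shows is at most $\tau^2$. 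Inspection of the proof of \pref{lem:smooth} shows that only the second-moment upper bound $\Pr[q_i=q_j=1]\le \tau^2$ is used in the lower-bound direction, so its conclusion applies.

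Fix any $\alpha$ with $\Delta(\alpha,\cC) \ge k$, where $k < \eta_0 2^d$ and $\eta_0$ will be chosen small. If $\Delta(\alpha,\cC) \le 2^{d-1}$, then \pref{lem:smooth} with $\gamma = 1/2$ gives
\[
s_\cT(\alpha) \;\ge\; \tfrac{1}{2}\cdot \Delta(\alpha,\cC) \cdot \tau \;\ge\; \tfrac{k}{2}\cdot 2^{-d}\mper
\]
If instead $\Delta(\alpha,\cC) > 2^{d-1}$, I would invoke \cite{BhattacharyyaKSSZ10} to obtain an absolute constant $c_0 > 0$ such that $s_\cT(\alpha) \ge c_0$ for every such $\alpha$. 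Choosing $\eta_0 \le 2c_0$ then forces $(k/2)\cdot 2^{-d} \le \eta_0/2 \le c_0 \le s_\cT(\alpha)$. Since $s(k) = \min_{\alpha: \Delta(\alpha,\cC)\ge k} s_\cT(\alpha)$, both cases together give $s(k) \ge (k/2)\cdot 2^{-d}$, as required.

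The main obstacle is the far regime: showing that $s_\cT(\alpha)$ is bounded below by an absolute positive constant as soon as $\Delta(\alpha,\cC)$ exceeds a constant multiple of $1/\tau$. The elementary inclusion--exclusion bound underlying \pref{lem:smooth} breaks down there because the expected number of queried disagreement coordinates is $\Omega(1)$, and higher-order cancellations must be ruled out. This is exactly what the Fourier-analytic, induction-on-degree argument of \cite{BhattacharyyaKSSZ10} achieves, and we rely on it as a black box; the rest of the proof is the routine combinatorial combination above.
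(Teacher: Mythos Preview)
The paper does not prove this theorem at all; it is quoted as a black-box result from \cite{BhattacharyyaKSSZ10}, so there is no in-paper proof to compare against.

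That said, your decomposition is correct and is in fact a sharpening of what the paper records. For the near regime $\Delta(\alpha,\cC)\le 2^{d-1}$, your use of \pref{lem:smooth} is exactly right: the minimum-weight tester has constant query weight $2^{n-d}$, the affine group $\mathrm{AGL}(n,2)$ is doubly transitive on $\GF2^n$, and your arithmetic check that $\Pr[q_i=q_j=1]=\tfrac{2^{n-d}(2^{n-d}-1)}{2^n(2^n-1)}\le \tau^2$ is correct (and, as you note, only this inequality is used in the lower-bound direction of the proof of \pref{lem:smooth}). The authors themselves flagged this observation in an internal note immediately after the theorem statement, remarking that \pref{lem:smooth} already gives a lower bound arbitrarily close to $k\cdot 2^{-d}$ in this regime. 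So you have independently rediscovered and carried out what they left as a comment.

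For the far regime $\Delta(\alpha,\cC)>2^{d-1}$ you still invoke \cite{BhattacharyyaKSSZ10} for the absolute constant lower bound $s_\cT(\alpha)\ge c_0$. This is legitimate and, as you correctly identify, is where all the genuine difficulty lies: the inclusion--exclusion bound of \pref{lem:smooth} becomes vacuous once $\Delta(\alpha,\cC)\cdot\tau$ is of constant order, and ruling out higher-order cancellations requires the inductive Fourier-analytic argument of \cite{BhattacharyyaKSSZ10}. Your proof is thus not a self-contained proof of the theorem but a clean reduction of the full linear soundness statement to the weaker constant-rejection-for-far-words statement, together with an elementary treatment of the near regime that the paper could have included but chose to omit.
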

\Dnote{say that \pref{lem:smooth} implies a better lower bound for the
  soundness of the \cite{BhattacharyyaKSSZ10} test, arbitrarily close to $k\cdot 2^{-d}$}

\pref{thm:BKSSZ} allows us to estimate the eigenvalue profile of
$\Cay_\RM$ and shows that small sets have expansion close to
$O(\eta_0)$. From here, we can get near perfect expansion by taking
short random walks. To avoid cumbersome
discretization issues we work with continuous time random walks on
graphs instead of the usual discrete random walks.

\begin{definition}
  For a graph $G$ the continuous-time random walk on $G$ with parameter $t$ is described by the (stochastic) matrix $G(t) = e^{-t(I-G)}$. $G(t)$ and $G$ have the same eigenvectors and the eigenvalues of $G(t)$ are $\set{e^{-t(1-\mu_i)}}$, where $\set{\mu_i}$ is the spectrum of $G$.
\end{definition}

We will view $\Cay_\RM(t)$ as a weighted graph.
We show that its eigenvalue profile is close to that of the noisy
cube, this stronger statement will be useful later.
\PGnote{This proof is a rather tedious calculation, perhaps we should
  send it to the appendix?}
%
%
\Dnote{do we need both cases of the lemma? the current statement looks rather scary}
\Mnote{I agree with PG. We need both cases of the lemma for Majority is stablest, I think.}
\begin{lemma}
  \label{lem:rm-eigenvalues}
Let $t = \e 2^{d+1}$ for $\e > 0$ and $\rho = e^{-\epsilon}$. Let
$\set{\lambda_\alpha}$ denote the eigenvalues of $\Cay_\RM(t)$.
\begin{itemize}
\item If $\deg(\chi_\alpha) = k$,  $\lambda_\alpha \leq
  \max(\rho^{k/2},\rho^{\mu_02^d})$ where $\mu_0$ is an absolute constant.
\item For all $\delta < \delta_0$ for some constant $\delta_0$, if $\deg(\chi_\alpha)=k < \delta^22^{d+1}$, $\abs{\lambda_\alpha - \rho^k}   \le \delta$.

\end{itemize}
\end{lemma}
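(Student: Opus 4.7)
The plan is to reduce the entire lemma to an estimate on the rejection probability $s(\alpha) = \Pr_{q \sim \cT_\RM}[\iprod{\alpha, q} = 1]$: by \pref{lem:eigenvalues} combined with the definition of the continuous-time walk, the eigenvalue of $\Cay_\RM(t)$ at $\chi_\alpha$ equals $\lambda_\alpha = e^{-t(1-\mu_\alpha)} = e^{-2t\, s(\alpha)}$, so both bullets become statements about $s(\alpha)$ as a function of $k = \deg(\chi_\alpha) = \Delta(\alpha, \cC)$. For the first bullet I would split on $k$. In the low-degree regime $k < \eta_0 2^d$, \pref{thm:BKSSZ} gives $s(\alpha) \geq (k/2) \cdot 2^{-d}$, and substituting $t = \epsilon \cdot 2^{d+1}$ yields $\lambda_\alpha \leq e^{-2\epsilon k} \leq \rho^{k/2}$. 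In the high-degree regime $k \geq \eta_0 2^d$, I would use that the soundness curve is non-decreasing in $k$ (it minimizes over a smaller set of $\alpha$'s) to conclude $s(\alpha) \geq s_{\cT_\RM}(\lfloor \eta_0 2^d \rfloor - 1) = \Omega(1)$ by the previous case; this gives $\lambda_\alpha \leq e^{-\Omega(\epsilon \cdot 2^d)} \leq \rho^{\mu_0 2^d}$ for an absolute constant $\mu_0 = \Omega(\eta_0)$.

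For the second bullet the BKSSZ lower bound is loose by a constant factor, so I would upgrade to a two-sided estimate via \pref{lem:smooth}, which requires showing that $\cT_\RM$ is (approximately) $2$-smooth. The argument is that a random $q \sim \cT_\RM$ is a product $L_1 \cdots L_d$ of uniformly random linearly independent affine forms, and for a uniform affine form $L_r$ the evaluations $L_r(x)$ and $L_r(y)$ at any two distinct points $x \neq y$ are independent and uniform on $\GF 2$ (since $L_r(x) + L_r(y) = \iprod{a, x+y}$ with $x+y \neq 0$ is uniform and independent of the constant term); conditioning on linear independence of $L_1, \ldots, L_d$ perturbs joint probabilities by only a factor $1 + O(2^{d-n})$. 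Thus $\Pr[q_i = q_j = 1] = (1+o(1))\tau^2$ with $\tau = 2^{-d}$, and \pref{lem:smooth} yields $(1 - 2\delta^2)\, k\tau \leq s(\alpha) \leq k\tau$ whenever $k\tau \leq 2\delta^2$, a condition that holds for $k < \delta^2 \cdot 2^{d+1}$. Substituting into $\lambda_\alpha = e^{-2t\, s(\alpha)}$ gives $\lambda_\alpha = \rho^{k(1 + O(\delta^2))}$ (up to an absolute constant consistent with the chosen relation between $t$ and $\rho$), and a first-order Taylor expansion of $e^{-x}$ (using $x e^{-x} \leq 1/e$) then bounds $|\lambda_\alpha - \rho^k| = O(\delta^2) \leq \delta$ for $\delta$ below a suitable absolute constant $\delta_0$.

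The main technical obstacle lies in this last step: establishing the precise $2$-smoothness of $\cT_\RM$ with quantitative error terms, since its distribution is supported on the highly structured set of minimum-weight codewords of $\cC^\perp$ (products of $d$ linearly independent affine forms) rather than being uniform on $\cC^\perp$, and then carefully propagating the $O(\delta^2)$ multiplicative error through the exponent to obtain the claimed additive $\delta$ bound.
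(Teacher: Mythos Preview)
Your proposal is correct and takes essentially the same route as the paper. The paper also writes $\lambda_\alpha = e^{-t(1-\mu_\alpha)}$ with $\mu_\alpha = 1-2s(\alpha)$, uses the $2$-smoothness bound of \pref{lem:smooth} to get $s(\alpha)=k\tau\pm O(k^2\tau^2)$, and then argues the second bullet by the same ``if $\e k$ is moderate do a first-order expansion; if $\e k$ is large both $\lambda_\alpha$ and $\rho^k$ are already $\le\delta/4$'' dichotomy that your $xe^{-x}\le 1/e$ remark encodes. For the first bullet the paper likewise splits on the size of $k$ and appeals to \pref{thm:BKSSZ} (via \pref{cor:spectrumbound}) for the large-$k$ regime.

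The one small difference: for the first bullet at small $k$ the paper uses the $2$-smoothness estimate $1-\mu_\alpha\ge k\tau/2$ (valid for $k\le 2^d$), whereas you invoke \pref{thm:BKSSZ} directly. Both give $\lambda_\alpha\le\rho^{\Omega(k)}$, so either works. You are also right that $2$-smoothness of $\cT_\RM$ is the one fact the paper simply asserts; your argument via pairwise independence of affine evaluations (equivalently, $2$-transitivity of the affine group on $\GF2^n$) is the right justification, and it gives $\Pr[q_i=q_j=1]=\tau^2\bigl(1+O(2^{d-n})\bigr)$, an exponentially small correction that does not affect the bounds.
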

\begin{proof}

Let $\set{\mu_\alpha}$ be the eigenvalues of $\Cay_\RM$ corresponding
to the character $\chi_\alpha$ so that $\lambda_\alpha = e^{-t(1-\mu_\alpha)}$.
Let $\tau=2^{-(d+1)}$. Since the canonical tester $\cT_\RM$ for $\cC$ is $2$-smooth, by \pref{lem:smooth},
  \begin{math}
    \mu_\alpha = 1-k\tau\pm k^2\tau^2.
  \end{math}
  Hence, $\lambda_\alpha=e^{-t(1-\mu_\alpha)}= e^{-\e k(1 \pm k
    \tau)} = \rho^ke^{-\e k^2 \tau}$.

For $k \leq 2^d$, $1-\mu_\alpha =  k\tau \pm k^2
\tau^2 \ge k\tau/2$.
Therefore, if $\deg(\chi_\alpha) \le 2^d$,
$\lambda_\alpha = e^{-\epsilon 2^{d+1}(1-\mu_\alpha)} \le
 e^{-\e 2^{d+1}k\tau/2} =  \rho^{k/2}$. For $k > 2^d$, by \pref{cor:spectrumbound},
$\mu_\alpha < 1 - 2s(k) < C_0$ for a universal constant $C_0 <
1$. Therefore, $|\lambda_\alpha| < e^{-\epsilon 2^{d+1}(1-C_0) } =
\rho^{(1-C_0) 2^{d+1}} < \rho^{\mu_02^d}$ for $\mu_0 < (1-C_0)/2$.

  %
  %

We now prove the second bound.
  If $\e k^2 \tau < \delta/10$, we have $\lambda_\alpha
  =\rho^{-k}(1\pm \delta)$ which implies $\abs{\lambda_\alpha - \rho^k} \le \delta$.
  Otherwise, if $\e k^2\tau\ge \delta/10$, our assumption $k< \delta^22^{d+1}$
  implies $\e k > 1/(10\delta)$, hence $\e^{-\e k} \leq
  e^{-\frac{1}{10 \delta}} \leq \delta/4$ for all
  $\delta < \delta_0$. For $k \leq 2^d$, $(1-\mu_\alpha) =  k\tau \pm k^2 \tau^2 \ge k\tau/2$. Hence
  $\lambda_\alpha \leq e^{-tk\tau/2} \leq    e^{-\frac{1}{20 \delta}}
  \leq \delta/4$ for all $\delta < \delta_0$.
  In this case, we get $\abs{\lambda_\alpha-\rho^k} \le |\lambda_\alpha| + |\rho^k| \le \delta/2$.

\end{proof}

Since the eigenvectors stay the same, $\Cay_\RM(t)$ inherits
the hypercontractive properties of $\Cay_\RM$. In particular, by Lemma \ref{lem:hc42},
$\norm{\cV}_{2\to 4} \leq 3^{\ell/2}$ where $\cV$ denotes polynomials
of degree $\ell \leq \frac{D-1}{4}$.
Combining Lemmas \ref{lem:hyper-to-sse} and \ref{lem:rm-eigenvalues},
we obtain a graph with small set expansion and many large
eigenvalues.

\eat{
\begin{theorem}
\label{thm:RM-graph-power}
For any $\eps > 0$, $\ell \in \N$, there exists a graph $G$ with $2^{(\log |G|)^{O(1/\ell)}}$ eigenvalues larger than $1-\e$ and every set $S \subseteq G$ has expansion
\[
\cond(S) \geq 1 - e^{-\e \ell/2} - 3^{\ell/2}\sqrt{\mu(S)}
\]
\end{theorem}
}

\begin{theorem}
\label{thm:RM-graph-power}
For any $\eps, \eta > 0$, there exists a graph $G$ with $2^{(\log
  |G|)^\frac{1}{d}}$ eigenvalues larger than $1-\e$ for $d =
  \log(1/\eps) + \log\log(1/\eta) +O(1)$ and where every set $S \subseteq G$ has expansion
\[
\cond(S) \geq 1 - \eta - 3^{\frac{c_1}{\eps}\log(1/\eta)}\sqrt{\mu(S)}
\]
for some constant $c_1$.
\end{theorem}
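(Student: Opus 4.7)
The plan is to take $G$ to be the continuous-time random walk $\Cay_\RM(t)$ on the Cayley graph $\Cay_\RM=\Cay(\RM(n,d),\cT_\RM)$, for a careful choice of the three parameters $n,d,t$. Given the target $\e,\eta>0$, set $d = \lceil\log(1/\e)+\log\log(1/\eta)\rceil + c_0$ for a sufficiently large absolute constant $c_0$, take $\e'=\e/3$ and $\delta=\e/3$, and set $t=\e'\cdot 2^{d+1}$ (so that $\rho = e^{-\e'}$ in the notation of \pref{lem:rm-eigenvalues}). The free parameter $n$ controls the size $|G|=2^H$ with $H=\sum_{j\le d}\binom{n}{j}\le n^d$ (for $n$ large relative to $d$), so that $n\ge (\log |G|)^{1/d}$.

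To count the large eigenvalues, I would apply the second bullet of \pref{lem:rm-eigenvalues} to the $N=2^n$ characters of degree $1$. Since $d\ge c_0$ is a sufficiently large constant, the hypothesis $1 < \delta^2 2^{d+1}$ is satisfied, and the lemma yields $\lambda_\alpha \ge \rho - \delta \ge e^{-\e/3} - \e/3 \ge 1 - \e$ for every character of degree exactly $1$. There are at least $N/2 \ge 2^{n-1} \ge 2^{(\log|G|)^{1/d}}$ such characters (up to adjusting constants inside $c_0$, using that the 2-smoothness of $\cT_\RM$ and \pref{cor:manyeigs} guarantee essentially all singleton characters have the required eigenvalue). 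This handles the spectral part of the statement.

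For the expansion bound, I would apply \pref{lem:hyper-to-sse} with threshold $\lambda=\eta$: letting $\cV$ be the span of eigenfunctions with eigenvalue exceeding $\eta$, we get $\cond(S)\ge 1-\eta-\norm{\cV}_{2\to 4}^2\sqrt{\mu(S)}$. The key step is to show every character in $\cV$ has degree at most $\ell := \lceil 2\log(1/\eta)/\e'\rceil = O(\log(1/\eta)/\e)$. Indeed, by the first bullet of \pref{lem:rm-eigenvalues} any character $\chi_\alpha$ in $\cV$ satisfies $\eta < \max(\rho^{k/2},\rho^{\mu_0 2^d})$ where $k=\deg(\chi_\alpha)$; the choice of $d$ (absorbing $1/\mu_0$ into $c_0$) ensures $\rho^{\mu_0 2^d}\le \eta$, so the first term dominates and forces $k\le \ell$. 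Since $d$ is large enough that $\ell < 2^{d-1} = (D-1)/4$, \pref{lem:hc42} yields $\norm{\cV}_{2\to 4}^2 \le 3^{\ell} \le 3^{c_1\log(1/\eta)/\e}$, completing the expansion bound. The main bookkeeping difficulty is choosing the absolute constant $c_0$ inside $d$ large enough that all three side conditions (the hypothesis of the second bullet of \pref{lem:rm-eigenvalues}, the decay $\rho^{\mu_0 2^d}\le \eta$, and the condition $\ell<(D-1)/4$ needed to invoke \pref{lem:hc42}) hold simultaneously; nothing beyond elementary calculation is needed.
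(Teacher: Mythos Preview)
Your overall strategy is the paper's strategy: take $G=\Cay_\RM(t)$, count degree-$1$ characters for the many large eigenvalues, and combine the first bullet of \pref{lem:rm-eigenvalues} with \pref{lem:hc42} and \pref{lem:hyper-to-sse} for the expansion bound. The expansion half of your argument is correct and matches the paper essentially line for line.

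There is one genuine slip in your large-eigenvalue argument. You invoke the second bullet of \pref{lem:rm-eigenvalues} with $\delta=\e/3$ and claim that ``since $d\ge c_0$ is a sufficiently large constant, the hypothesis $1<\delta^2 2^{d+1}$ is satisfied.'' But with $d=\log(1/\e)+\log\log(1/\eta)+c_0$ one has $\delta^2 2^{d+1}=\Theta(\e\log(1/\eta))$, which need not exceed $1$ (let $\eta$ be a fixed constant and $\e\to 0$). No choice of the additive constant $c_0$ rescues this; the second bullet simply cannot deliver an additive error as small as $\Theta(\e)$ when $2^d=\Theta((1/\e)\log(1/\eta))$. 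The paper avoids this by computing the degree-$1$ eigenvalues directly from smoothness of $\cT_\RM$: each coordinate is hit with probability exactly $2^{-d}$, so in $\Cay_\RM$ every singleton character has eigenvalue $1-2\cdot 2^{-d}$, and after the continuous-time walk with parameter $t=\Theta(\e\,2^{d+1})$ this becomes $e^{-\Theta(\e)}\ge 1-O(\e)$, uniformly in $\eta$. Your parenthetical about $2$-smoothness is the right instinct; just make that the primary argument rather than routing through the second bullet of \pref{lem:rm-eigenvalues}.
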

\begin{proof}
Let $\mu_0, \delta_0$ be constants from previous lemma.
Fix $\ell = \frac{c_1}{\e}\log(\frac{1}{\eta})$ so that $e^{-\e
  \ell/2} = \eta$ and $d= \log(\ell) + c_2$ so that $\ell \leq
\min(\mu_0 2^{d+1}, 2^d/5)$.  Consider the graph $\Cay_\RM(t)$ of the
continuous random walk on $\Cay_\RM$ where $t = \epsilon 2^{d+1}$ as
in \pref{lem:rm-eigenvalues}.  Note that the graph has $|G| = \sum_{j
  \leq d} \binom{N}{i}$ vetices. Let $\{\mu_\alpha\}$ be the spectrum of $\Cay_\RM$ and $\lambda_\alpha$ be the spectrum of $\Cay_\RM(t)$.

Then, for every $\alpha \in \GF2^N/\cC$, $\deg(\chi_\alpha) = 1$, we
have $s_\tau(\alpha) \geq 2^{-d}$. Hence
$\mu_\alpha \geq 1 -2^{-d+1}$, $\lambda_\alpha  \geq e^{-t2^{-d+1}} = e^{-4\e} $. Therefore, there are
at least $N = 2^{(\log |G|)^{1/d}}$ eigenvalues which are larger than
$1- 4 \epsilon$.

Since $\ell < \mu_02^{d+1}$, by \pref{lem:rm-eigenvalues} if $\deg(\chi_\alpha) > \ell$,
$\lambda_\alpha \leq \eta$.
Let $\cV$ be the subspace spanned by characters of degree at most
$\ell$. Since $\ell < 2^d/5$ by \pref{lem:hc42}, $\|\cV\|_{2\to 4} \leq 3^{\ell/2}$. Therefore, by \pref{lem:hyper-to-sse}, for any set $S \subseteq G$ with $\mu(S) \leq \delta$,
\begin{align*}
  \cond(S) \geq 1 - \eta - 3^{\frac{c_1}{\e}\log(1/\eta)} \sqrt{\mu(S)}.
\end{align*}
\end{proof}

\eat{
Plugging this into Theorem \ref{thm:main} gives the following
result. Since the tester is smooth (it queries each coordinate with
probability $\eps$), we get a slightly sharper result.
\begin{theorem}
\label{thm:RM-graph}
There is an absolute constant $\eta > 0$ such that for any $\eps > 0$
and large enough $n$, there exists a graph  on
$2^{n^{\log(1/\eps)}}$ vertices which has $2^n$ eigenvalues larger
than $1 - 2\eps$. Further, sets of fractional size $\mu <
\frac{1}{9^{\frac{\eta}{\eps}}}$ expand by $\eta$.
\end{theorem}

By amplifying the rejection probability of the tester (which corresponds to
taking powers of the graph), we get the following result (here we raise the
\RM\ graph to the power $t=\e 2^d$):
\begin{theorem}
\label{thm:RM-graph-power}
There is an absolute constant $\eta > 0$ such that for any $\eps > 0$, $t\in\N$
and large enough $n$, there exists a graph  on
$2^{n^{\log(2t/\eps)}}$ vertices which has $2^n$ eigenvalues larger
than $1 - \eps$ and every set $S$ has expansion
\[
\cond(S) \geq 1 - \eta^t - 3^{2^d}\sqrt{\mu(S)}
\]
\end{theorem}
}


\begin{remark}[Coding application]
The fact that our graph is a Cayley graph over $\GF 2^n$ has a potentially  interesting implication for coding theory. By looking at the set of edge labels as the rows of a generating matrix for a code, we know that large Fourier coefficients corresponds to low weight codewords, and hence we get a code of dimension $m=\binom{n}{d}$ that has an almost exponential (i.e. $2^n$) number of codewords of low weight, but yet has small \emph{generalized Hamming distance} in the sense that every subspace of codimension $\omega(1)$ contains a codeword of fractional Hamming weight $1 - o(1)$. In particular by setting $d$ to be a function slowly tending to infinity we can get a linear code for which correcting from an $o(1)$ fraction  of \emph{corruption} errors  requires an almost exponential list size, but for which one can correct a fraction approaching $1$ of \emph{erasure} errors using a list of constant size. (The code obtained by  taking all edges of our graph has an almost exponential blowup, but this can be reduced by subsampling the edges.) \Bnote{to a linear number? polynomial? should check}
\Mnote{This is a bit confusing. We should either elaborate on this, or not have it here.}
\end{remark}

\section{Majority is Stablest over Codes} \label{sec:stablestcodes}
In this section we show an analogue of the ``Majority is Stablest'' result of Mossel et al.~for the $\RM$ graph we constructed in the previous section; this will help us replace the noisy cube with the $\RM$ graph in various unique games gadgets.

We first review some definitions. For a function $f:\sbits^N \to \R$ and $\ell > 0$, define
\[ \Inf_i ^{\leq \ell}(f) = \sum_{\alpha \in \bits^N, |\alpha| \leq \ell, \alpha_i = 1}|\hat{f}(\alpha)|^2.\]
For $\rho > 0$, let $\Gamma_\rho:[0,1] \to [0,1]$ be the Gaussian noise stability curve defined as follows. For $\mu \in [0,1]$, let $t \in \R$ be such that $\Pr_{g \leftarrow \cN(0,1}[g < t] = \mu$. Then, $\Gamma_\rho(\mu) = \Pr_{X,Y}[ X \leq t, Y \leq t]$, where $(X,Y) \in \R^2$ is a two-dimensional mean zero Gaussian random vector with covariance matrix $\left(\begin{array}{cc} 1 &\rho\\ \rho &1\end{array}\right)$. We refer the reader to Appendix B in Mossel et al.~\cite{MosselOO05} for a more detailed discussion on $\Gamma_\rho$.

Let $P(x) = \sum_{I \subseteq [N]} a_I \prod_{i \in I} x_i$ be a $N$-variate multilinear polynomial $P:\R^N \to \R$. Define $\|P\|^2 = \sum_I a_I^2$ and we say $P$ is $\epsilon$-regular if for every $i \in [N]$, $\sum_{i \ni I} a_I^2 \leq \epsilon^2 \cdot \|P\|^2$.

Throught this section, we let $T_{\rho,N}$ (we omit $N$ when the dimension is clear) denote the noisy hypercube graph with second largest eigenvalue $\rho$.

\subsection{Majority is stablest and Invariance}
\label{sec:major-stabl-hyperc}

The following theorem shows that, in the context of noise stability, a
\emph{regular} function on the hypercube behaves like a function on Gaussian space.
\begin{theorem}[Majority is stablest, \cite{MosselOO05}]
  \label{thm:majority-stablest}
  Let $f\from \sbits^N\to [0,1]$ be a function with $\E f = \mu$.
  Suppose $\Inf_i^{\le 10\log(1/\tau)}(f)\le \tau$ for all $i\in[N]$.
  Then,
  \begin{displaymath}
    \iprod{f,T_\rho f}
    \le \Gamma_\rho(\mu)
    + \tfrac{10\log\log(1/\tau)}{(1-\rho) \log (1/\tau)}
    \mcom
  \end{displaymath}
  where $T_\rho$ is the Boolean noise graph with second largest eigenvalue
  $\rho$ and $\Gamma_\rho$ is the Gaussian noise stability curve.
\end{theorem}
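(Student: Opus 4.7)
The plan is to reduce the bound on $\langle f, T_\rho f\rangle$ to the Gaussian analogue $\Gamma_\rho(\mu)$ via the invariance principle, and then invoke Borell's isoperimetric inequality in Gaussian space. Set $d = \tfrac{1}{1-\rho}\log(1/\tau)$ (up to constants); this is essentially the degree above which $T_\rho$ becomes negligible.

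First, I would \emph{regularize $f$} by applying a tiny additional noise $T_{1-\gamma}$ for $\gamma = o(1)$ to be chosen. Writing $g = T_{1-\gamma}f$, one has $\langle f, T_\rho f\rangle \le \langle g, T_\rho g\rangle + O(\gamma)$ (using $\E f \le 1$), and $g$ has influences bounded by those of $f$ while its Fourier tail above degree $d$ contributes at most $(1-\gamma)^{2d}\rho^d$. So up to negligible error I may replace $f$ by its low-degree truncation $g^{\le d}$, a degree-$d$ multilinear polynomial whose degree-$d$ influences are bounded by $\tau$ (inherited from the hypothesis, since truncation can only decrease them).

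Second, I would invoke the \emph{invariance principle} of Mossel--O'Donnell--Oleszkiewicz. Let $\tilde g^{\le d}$ denote the same multilinear polynomial evaluated on i.i.d.\ standard Gaussians $(X_1,\dots,X_N)\in\R^N$ instead of $\pm 1$ variables. The invariance principle says that because $g^{\le d}$ has degree at most $d$ and all coordinate influences at most $\tau$, the distributions of $g^{\le d}(\epsilon)$ and $\tilde g^{\le d}(X)$ are close in Lévy (and hence $L^1$-Wasserstein with a bounded Lipschitz test function) distance by something like $\tau^{\Omega(1/d)}$. Applied to the smooth test function $\zeta_{[0,1]}$ (a truncation to $[0,1]$) on both coordinates of the joint distribution under the correlated pair $(\epsilon,\epsilon')$ versus $(X,\rho X+\sqrt{1-\rho^2}Y)$, this gives
\[
\bigl|\langle g^{\le d}, T_\rho g^{\le d}\rangle - \E[\zeta_{[0,1]}(\tilde g^{\le d}(X))\,\zeta_{[0,1]}(\tilde g^{\le d}(\rho X+\sqrt{1-\rho^2}Y))]\bigr| \le \mathrm{err}(\tau,d).
\]
The proof of this invariance step is the core technical obstacle — it is a Lindeberg-style hybrid argument swapping coordinates one at a time, controlled by hypercontractivity of low-degree polynomials together with the low-influence hypothesis to bound the error at each swap.

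Finally, I would apply \emph{Borell's theorem}: for any measurable $h:\R^N\to[0,1]$ with mean $\mu$, the $\rho$-noise stability $\E[h(X)h(\rho X+\sqrt{1-\rho^2}Y)]$ is maximized by a halfspace, yielding exactly $\Gamma_\rho(\mu)$. Applied to $\zeta_{[0,1]}\circ\tilde g^{\le d}$ (which has mean close to $\mu$, up to another small invariance-style error), this bounds the Gaussian quantity by $\Gamma_\rho(\mu) + o(1)$. Collecting the errors from the truncation at degree $d$, the smoothing by $\gamma$, and the invariance step, and optimizing $\gamma$ and $d$ against $\tau$, produces the stated bound $\Gamma_\rho(\mu) + \tfrac{10\log\log(1/\tau)}{(1-\rho)\log(1/\tau)}$; the $\log\log(1/\tau)$ factor arises from the polynomial dependence of the invariance error on $d$ when $d$ itself grows like $\log(1/\tau)$.
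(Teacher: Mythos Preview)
The paper does not actually prove this theorem: it is quoted verbatim from \cite{MosselOO05} and used as a black box (the paper's own technical work is the analogue over Reed--Muller codewords, \pref{thm:rm-majority-stablest}, together with \pref{cor:approx-majority-stablest}). So there is no ``paper's proof'' to compare against; you have correctly reconstructed the outline of the original Mossel--O'Donnell--Oleszkiewicz argument: smooth by $T_{1-\gamma}$, truncate to degree $O(\log(1/\tau))$, transfer to Gaussian space via the Lindeberg-style invariance principle (\pref{thm:moo-invariance} in the paper), and finish with Borell's inequality.

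Two small bookkeeping corrections. First, with $g=T_{1-\gamma}f$ one has $\iprod{g,T_\rho g}=\iprod{f,T_{(1-\gamma)^2\rho}f}\le \iprod{f,T_\rho f}$, so the inequality goes the other way; the right way to account for the smoothing is to bound $\Gamma_{\rho'}(\mu)-\Gamma_\rho(\mu)=O(\gamma/(1-\rho))$ at the end (this is exactly where the $(1-\rho)^{-1}$ in the error term enters). Second, the invariance error the paper records is $2^{O(\ell)}\sqrt{\tau}$ for the $\zeta$ functional rather than $\tau^{\Omega(1/d)}$; with $\ell=O(\log(1/\tau))$ this is $\tau^{\Omega(1)}$, which is what you need. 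These are minor; your overall plan is the standard and correct one.
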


We will need the following ingredient of the proof of
\pref{thm:majority-stablest} from \cite{MosselOO05}.
For $a,b\in\R$, let $\zeta_{[a,b]}\from \R\to \Rnn$ be the functional
$\zeta_{[a,b]}(x) = \max\set{a-x,x-b,0}^2$.
For a real-valued random variable $X$, the expectation $\E \zeta(X)$ is the
$L_2^2$-distance of $X$ to the set of $[a,b]$-valued random variables (over
the same probability space as $X$).
We will be interested in the case $a=0$ and $b=1$.
For this case, we abbreviate $\zeta=\zeta_{[0,1]}$.

\begin{theorem}[Invariance Principle, {\cite[Theorem 3.19]{MosselOO05}}]
\label{thm:moo-invariance}
  Let $P$ be an $\tau$-regular $N$-variate real multilinear polynomial with degree at most
  $\ell$ and $\snorm{P}\le 1$.
  %
  %
  Then,
  \begin{displaymath}
    \Abs{\E_{x\in \sbits^N} \zeta\circ P(x) - \E_{y\sim \cN(0,1)^N|}
      \zeta\circ P(y)} \le 2^{O(\ell)}\sqrt\tau
    \mper
  \end{displaymath}
\end{theorem}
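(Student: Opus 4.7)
The plan is to prove \pref{thm:moo-invariance} via the classical Lindeberg-style hybrid replacement, augmented with mollification to handle the fact that $\zeta$ is only $C^{1,1}$. The idea is to swap the Rademacher coordinates $x_1,\ldots,x_N$ one at a time for independent standard Gaussians $y_1,\ldots,y_N$, bounding the change at each step via Taylor expansion of a suitably smooth test function.

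First I would mollify: let $\Psi_\delta = \zeta \ast \phi_\delta$ for a smooth bump $\phi_\delta$ of width $\delta>0$. Using the piecewise-quadratic structure of $\zeta$, this gives $\norm{\Psi_\delta-\zeta}_\infty = O(\delta^2)$ and $\norm{\Psi_\delta^{(k)}}_\infty = O(\delta^{2-k})$ for every $k\ge 2$. Since $\snorm{P}\le 1$, a direct estimate (valid under either the hypercube or the Gaussian measure, since both make $P$ have second moment at most $1$) reduces the problem to bounding $\Abs{\E_{x\in\sbits^N}\Psi_\delta(P(x)) - \E_{y\sim\cN(0,1)^N}\Psi_\delta(P(y))}$, at the cost of an additive $O(\delta^2)$ smoothing error on each side.

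Next I would run the hybrid. For $0\le i\le N$ set $Z^{(i)}=(y_1,\ldots,y_i,x_{i+1},\ldots,x_N)$; at the $i$-th replacement step, write $P = A_i + z_i B_i$ where $A_i,B_i$ depend only on the coordinates other than $i$ and $B_i=\partial_i P$. Taylor-expanding $\Psi_\delta(A_i+z_iB_i)$ in $z_i$ to third order and taking expectations, the first three moments of the Rademacher $x_i$ and the Gaussian $y_i$ agree (both give $0,1,0$), so the first three terms cancel exactly and the per-step error is $O(\norm{\Psi_\delta^{(4)}}_\infty\cdot\E[B_i^4])$. Since $\deg B_i\le \ell$, Bonami--Beckner hypercontractivity gives $\E[B_i^4]\le 9^\ell\E[B_i^2]^2 = 9^\ell\Inf_i(P)^2$, and summing over $i$ the $\tau$-regularity hypothesis together with $\sum_i \Inf_i(P)\le \ell\snorm{P}\le \ell$ yields $\sum_i \Inf_i(P)^2\le \tau\cdot \ell$.

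Combining, the total Taylor error across the $N$ replacements is at most $2^{O(\ell)}\tau/\delta^2$; adding the $O(\delta^2)$ smoothing error and optimizing by setting $\delta^2 \sim 2^{O(\ell)}\sqrt\tau$ balances the two terms and produces the advertised bound $2^{O(\ell)}\sqrt\tau$. The main subtlety is calibrating the mollification: one has to exploit that $\zeta$ is already $C^{1,1}$ (so the smoothing error is $O(\delta^2)$ and not $O(\delta)$) in order to end up with $\sqrt\tau$ rather than a worse power, and the fact that it is specifically the fourth moments of Rademacher and Gaussian that disagree is what forces the argument to go through $\norm{\Psi_\delta^{(4)}}$ rather than the more forgiving $\norm{\Psi_\delta^{(3)}}$.
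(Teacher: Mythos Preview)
The paper does not prove \pref{thm:moo-invariance}; it is quoted verbatim from \cite[Theorem~3.19]{MosselOO05} and used as a black box in the proof of \pref{cor:approx-majority-stablest} and later results. Your sketch is essentially the Mossel--O'Donnell--Oleszkiewicz argument: mollify $\zeta$, run a Lindeberg hybrid swapping Rademacher for Gaussian coordinates one at a time, and control each swap by Taylor expansion together with hypercontractivity of the derivative polynomials $B_i=\partial_i P$ (which holds uniformly for the hybrid ensembles, since both Rademacher and Gaussian variables are $\sqrt{3}$-hypercontractive). The balancing of the $O(\delta^2)$ smoothing error against the $2^{O(\ell)}\tau/\delta^2$ Taylor error is exactly how the $\sqrt\tau$ rate arises.

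One small point of confusion in your final sentence: the logic is inverted. It is the \emph{matching} of the first three moments (both distributions are symmetric) that \emph{permits} you to push the expansion to fourth order, and this is a gain, not a constraint. Stopping at the third-order remainder and using $\norm{\Psi_\delta^{(3)}}_\infty=O(\delta^{-1})$ would give per-step error $2^{O(\ell)}\Inf_i(P)^{3/2}/\delta$; summing and balancing against $O(\delta^2)$ yields only $2^{O(\ell)}\tau^{1/3}$. Going to fourth order trades a worse derivative bound $O(\delta^{-2})$ for the sharper influence sum $\sum_i\Inf_i(P)^2\le \tau\ell$, and that trade is what produces $\sqrt\tau$. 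So $\norm{\Psi_\delta^{(3)}}$ is not ``more forgiving'' here---it is strictly worse after optimization.
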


We will need the following corollary of 
that can handle functions that are not $[0,1]$-valued as in the theorem but just close to
$[0,1]$-valued functions.
\Dnote{the corollary should still be true with $\E \zeta\circ f\le
  1/\log(1/\tau)$.
Note sure if it would affect our final bounds.
}
\begin{corollary}
  \label{cor:approx-majority-stablest}
  Let $f\from \sbits^N\to \R$ be a function with $\E f = \mu$ and $\E \zeta
  \circ f \le \tau$.
  Suppose $\Inf_i f^{\le 30\log(1/\tau)}\le \tau$ for all $i\in[N]$.
  Then,
  \begin{displaymath}
    \iprod{f,T_\rho f}
    \le \Gamma_\rho(\mu)
    + \tfrac{40\log\log(1/\tau)}{(1-\rho) \log (1/\tau)}
    \mcom
  \end{displaymath}
  where $T_\rho$ is the Boolean noise graph with second largest eigenvalue
  $\rho$ and $\Gamma_\rho$ is the Gaussian noise stability curve.
  (Here, we assume that $\tau$ is small enough.)
\end{corollary}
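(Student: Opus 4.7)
The natural strategy is to reduce to the original Majority is Stablest theorem (Theorem~\ref{thm:majority-stablest}) by replacing $f$ with its pointwise truncation to $[0,1]$, controlling all the resulting error terms.

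First, define $\tilde f(x) = \max(0,\min(1, f(x)))$ for each $x \in \sbits^N$. A case analysis on whether $f(x)<0$, $f(x)\in[0,1]$, or $f(x)>1$ shows that $(f(x)-\tilde f(x))^2 = \zeta(f(x))$ pointwise, and hence the hypothesis gives $\snormt{f-\tilde f} = \E\,\zeta\circ f \le \tau$. Since $\tilde f$ is $[0,1]$-valued we also have $\normt{\tilde f}\le 1$, and Cauchy--Schwarz yields $\abs{\E \tilde f - \mu} \le \sqrt\tau$.

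Second, I would bound the difference in noise stability. Writing $e = f - \tilde f$ and expanding using that $T_\rho$ is self-adjoint,
\[
\iprod{f,T_\rho f} - \iprod{\tilde f, T_\rho \tilde f}
= 2\iprod{e, T_\rho \tilde f} + \iprod{e, T_\rho e}.
\]
Using that $T_\rho$ is a contraction on $L^2$, each term is at most $O(\sqrt\tau)$ in absolute value, so $\iprod{f,T_\rho f} \le \iprod{\tilde f, T_\rho \tilde f} + 3\sqrt\tau$.

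Third -- this is the main technical step -- I need to control the low-degree influences of $\tilde f$. Writing $\hat{\tilde f}(S) = \hat f(S) + \hat e(S)$ and using $(a+b)^2 \le 2a^2 + 2b^2$,
\[
\Inf_i^{\le \ell}(\tilde f)
\le 2\Inf_i^{\le \ell}(f) + 2\sum_{S\ni i, |S|\le \ell} \hat e(S)^2
\le 2\tau + 2\snormt{e} \le 4\tau
\]
for every $\ell \le 30\log(1/\tau)$ and every $i$. Setting $\tau' = 4\tau$, the inequality $10\log(1/\tau') \le 30\log(1/\tau)$ holds once $\tau$ is small enough, so $\Inf_i^{\le 10\log(1/\tau')}(\tilde f) \le \tau'$ for every $i$.

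Finally, I would apply Theorem~\ref{thm:majority-stablest} to $\tilde f$ with parameter $\tau'$, giving
\[
\iprod{\tilde f, T_\rho \tilde f} \le \Gamma_\rho(\E\tilde f) + \frac{10\log\log(1/\tau')}{(1-\rho)\log(1/\tau')}.
\]
Combining with the stability estimate, the Lipschitz continuity of $\Gamma_\rho$ (which absorbs the $\sqrt\tau$-shift $|\E\tilde f - \mu|$), and noting that $\sqrt\tau$ is dominated by $1/\log(1/\tau)$ for small $\tau$ while $\log\log(1/(4\tau)) \le 2\log\log(1/\tau)$ and $\log(1/(4\tau)) \ge \tfrac12 \log(1/\tau)$, yields the stated bound with the factor $40$ in place of $10$. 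The hardest step is the influence bound on $\tilde f$; the Fourier-expansion argument above is what makes the constant $30$ (rather than $10$) appear in the hypothesis on $\Inf_i^{\le \cdot}(f)$.
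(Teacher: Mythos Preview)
Your proof is correct and follows essentially the same approach as the paper's: truncate $f$ to $[0,1]$, use $\snormt{f-\tilde f}=\E\,\zeta\circ f\le\tau$ to control the changes in mean, influences, and noise stability, and then apply Theorem~\ref{thm:majority-stablest}. The paper is terser (it bounds $\Inf_i^{\le\cdot}\tilde f\le \tau+O(\sqrt\tau)\ll\tau^{1/3}$ and invokes the theorem with parameter $\tau^{1/3}$, whereas you use $\tau'=4\tau$), but the two arguments are the same in substance; your Fourier-expansion bound $\Inf_i^{\le\ell}(\tilde f)\le 2\Inf_i^{\le\ell}(f)+2\snormt{e}\le 4\tau$ is in fact slightly cleaner. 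One cosmetic remark: your estimate $\log\log(1/(4\tau))\le 2\log\log(1/\tau)$ is needlessly weak since $1/(4\tau)<1/\tau$ already gives $\log\log(1/(4\tau))\le\log\log(1/\tau)$, but this does not affect the final constant.
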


\begin{proof}
  Let $f'$ be the closest $[0,1]$-valued function to $f$.
  Since $\norm{f-f'}\le \sqrt{\tau}$, it follows that $\Inf^{\le
    20\log(1/\tau)} _i f' \le \tau+O(\sqrt\tau) \ll \tau^{1/3}$
  and $\E f' \le \E f + \sqrt{\tau}$.
  Since $\iprod{f,T_\rho f}\le \iprod{f',T_\rho f'} +O(\sqrt\tau)$, the
  corollary follows by applying \pref{thm:majority-stablest} to the
  function $f'$.
  (Here, we also use that  fact that $\Gamma_\rho(\mu+\sqrt\tau)\le
  \Gamma_\rho(\mu)+2 \sqrt\tau$. See Lemma B.3 in \cite{MosselOO05}.)
  %
\end{proof}

We remark that although we specialize to Reed--Muller codes in this section, most of the arguments generalize appropriately to arbitrary codes with good canonical testers modulo a conjecture about bounded independence distributions fooling low-degree polynomial threshold functions. We briefly discuss  this in \pref{sec:invar-princ-low}.

\newcommand{\rmd}{\cC^\bot}
\newcommand{\rgta}{\rightarrow}
To state our version of ``Majority is Stablest'' we first extend the notion of influences to functions over Reed--Muller codes. For $n,d\in \N$, $N = 2^n$, let $\cC\sse \GF2^N$ be the Reed--Muller code $\RM(n,n-d-1)$ and let $\cC^\bot \sse \GF2^N$ be its dual $\RM(n,d)$. For the rest of this section we assume that a set of representatives corresponding to the minimum weight codeword in each coset is chosen for the coset space $\GF2^N/\cC$.
\begin{definition}\label{def:rminfluence}
  For a function $f:\rmd \rgta \R$ and $i \in [N]$, $\ell > 0$, the $\ell$-degree influence of coordinate $i$ in $f$ is defined by
\[ \Inf_i^{\le \ell}(f) = \sum_{\substack{%
      \alpha \in \GF2^N/\cC,\\ \card{\alpha}\le \ell, \alpha_i=1}}
  \hat f(\alpha)^2
  \mper \]
\end{definition}
\Dnote{I don't understand the above expression because $\alpha_i$ for $\alpha\in \GF2^N/\cC$ is not defined unless on picks representatives. unless $\ell$ is less than half the minimum distance there is no canonical choice of representatives. hence it doesn't makes sense to define the influence for larger $\ell$}
(Recall that the Fourier coefficient $\hat{f}(\alpha) = \E_{x \in \rmd}[\chi_\alpha(x)]$.) As all $\alpha$'s with weight less than half the distance of $\cC$ fall into different cosets of $\cC$, for $\ell < D/2$, the above expression simplifies to
\[ \Inf_i^{\le \ell}(f) = \sum_{\substack{%
      \alpha \in \GF2^N,\;\card{\alpha}\le \ell,\;\alpha_i=1}} \hat f(\alpha)^2 \mper \]
\eat{(Note that $\Inf^{\le \ell}_i f$ is the squared norm of the projection of $f$ into the span of the characters $\chi_\alpha$ with $\deg(\chi_\alpha)\le \ell$ such that the (unique) minimum weight representative of $\alpha+\cC$ is non-zero in coordinate $i$. The minimum weight representative is unique because $\ell$ is less than half of the distance of the code.)}
The sum of $\ell$-degree influences of a function $f$ can be bounded
as below.
\begin{lemma} \label{lem:sum-of-influences}
	For a function $f:\rmd \to \R$ and $\ell <D/2$
	$$ \sum_{i \in [N]} \Inf^{\le \ell}_{i}(f) \leq \ell \Var[f]$$
	where $\Var[f] = \E[f^2] -(\E[f])^2$ denotes the variance of
	$f$.
\end{lemma}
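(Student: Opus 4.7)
The proof plan is a routine swap-of-summation argument, standard for influence bounds on the Boolean hypercube, adapted to the coset setting of $\cC^{\perp}$.

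First I would unfold the definition of $\Inf^{\le \ell}_i(f)$. Since $\ell < D/2$, every coset of $\cC$ has at most one representative of weight $\le \ell$ (namely, the unique minimum-weight representative), so we may safely sum over $\alpha \in \GF 2^N$ with $|\alpha| \le \ell$ rather than over cosets. This gives
\begin{equation*}
\sum_{i \in [N]} \Inf^{\le \ell}_i(f)
\;=\; \sum_{i \in [N]} \sum_{\substack{\alpha \in \GF 2^N \\ 1 \le |\alpha| \le \ell \\ \alpha_i = 1}} \hat f(\alpha)^2.
\end{equation*}

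Next I would swap the order of summation. For each fixed $\alpha$, the number of indices $i \in [N]$ with $\alpha_i = 1$ is exactly $|\alpha|$, so
\begin{equation*}
\sum_{i \in [N]} \Inf^{\le \ell}_i(f)
\;=\; \sum_{\substack{\alpha \in \GF 2^N \\ 1 \le |\alpha| \le \ell}} |\alpha| \cdot \hat f(\alpha)^2
\;\le\; \ell \cdot \sum_{\substack{\alpha \in \GF 2^N \\ 1 \le |\alpha| \le \ell}} \hat f(\alpha)^2.
\end{equation*}

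Finally, by Parseval's identity applied to the characters of the abelian group $\cC^{\perp}$ (noting that $\hat f(0) = \E[f]$ and distinct low-weight $\alpha$'s yield orthogonal characters since $\ell < D/2$), we have $\sum_{\alpha \neq 0} \hat f(\alpha)^2 = \E[f^2] - (\E f)^2 = \Var[f]$. Dropping all $\alpha$ with $|\alpha| > \ell$ can only decrease the right-hand side, yielding the desired bound $\sum_i \Inf^{\le \ell}_i(f) \le \ell \cdot \Var[f]$.

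There is essentially no obstacle here; the only subtlety worth verifying is that the parameter choice $\ell < D/2$ makes the identification of low-weight vectors with their cosets legitimate, so the Fourier-analytic formulas behave exactly as in the hypercube case.
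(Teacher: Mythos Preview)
Your proposal is correct and follows essentially the same approach as the paper: unfold the definition, swap the order of summation so each $\alpha$ is counted $\card{\alpha}\le\ell$ times, and finish with Parseval to bound by $\Var[f]$. If anything, you are slightly more careful than the paper in explicitly invoking $\ell < D/2$ to justify identifying low-weight $\alpha$'s with their cosets.
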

\begin{proof}
	The lemma is an easy consequence from the definition of
	$\Inf^{\le \ell}(f)$ and the fact that $\Var[f] = \sum_{\alpha
	\neq 0} \hat f(\alpha)^2$.  We include the proof for the sake
	of completeness.
	\begin{align*}
	\sum_{i \in [N]} \Inf_i^{\le \ell}(f) & = \sum_{i \in [N]} \sum_{\substack{%
      \alpha \in \GF2^N,\;\card{\alpha}\le \ell,\;\alpha_i=1}} \hat
      f(\alpha)^2 \\
      & = \sum_{\substack{%
      \alpha \in \GF2^N,\;\card{\alpha}\le \ell,\;\alpha \neq 0}}
      \card{\alpha} \hat f(\alpha)^2 \\
      & \leq \ell \sum_{\substack{%
      \alpha \in \GF2^N,\;\card{\alpha}\le \ell,\;\alpha \neq 0}}
		 \hat f(\alpha)^2  \leq \ell \Var[f]
         \end{align*}
\end{proof}
We are now ready to state the main result of this section generalizing the Majority is Stablest result to Reed-Muller codes. Let $\cT_\RM$ be the canonical tester for $\cC$ as defined in \pref{sec:cantestrm}.
\begin{theorem} \label{thm:rm-majority-stablest}
There exist universal constants $c,C$ such that the following holds. Let $G$ be a continuos-time random walk on the $\RM$ graph $\Cay(\rmd,\cT_\RM)$ with parameter $t = \epsilon 2^{d+1}$. Let $f \from \rmd \to [0,1]$ be a function on $\rmd$ with $\E_{x \sim \rmd}[f(x)] = \mu$ and $\max_{i \in [N]} \Inf_i^{\le 30 \log(1/\tau)}(f) < \tau$. Then, for $d > C \log(1/\tau)$,
  \begin{equation}
    \label{eq:rmmjs}
    \iprod{f,Gf} \le \Gamma_\rho(\mu) + \frac{c \log \log (1/\tau)}{(1-\rho)\log(1/\tau)},
  \end{equation}
where $\rho = e^{-\epsilon}$ and $\Gamma_\rho\from \R \to \R$ is the noise stability curve of Gaussian space.
\end{theorem}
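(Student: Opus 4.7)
The plan is to reduce to \pref{cor:approx-majority-stablest}, the Boolean-hypercube analogue, by exploiting that \pref{lem:rm-eigenvalues} makes the spectrum of $G$ on low-weight characters $\chi_\alpha$ essentially identical to the spectrum of the noisy cube $T_\rho$: both have eigenvalue close to $\rho^{|\alpha|}$. Set $\ell = 30\log(1/\tau)$ and pick $C$ large enough so that the hypothesis $d > C\log(1/\tau)$ forces $\ell \le \delta^2 \cdot 2^{d+1}$ for a small absolute constant $\delta$; the second bullet of \pref{lem:rm-eigenvalues} then yields $|\lambda_\alpha - \rho^{|\alpha|}| \le \delta$ for $|\alpha|\le \ell$, while the first bullet gives $\lambda_\alpha \le \rho^{\ell/2}$ for $|\alpha|>\ell$.

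Expanding $\iprod{f,Gf}_{\rmd} = \sum_{\alpha\in \GF2^N/\cC} \lambda_\alpha \hat f(\alpha)^2$ in the character basis and splitting at $|\alpha|=\ell$, the low-degree part contributes
\[
\sum_{|\alpha|\le\ell}\rho^{|\alpha|}\hat f(\alpha)^2 \;\pm\; \delta\snormt{f} \;=\; \iprod{f^{\le\ell},\,T_\rho f^{\le\ell}}_{\{0,1\}^N}\pm\delta,
\]
where we use that minimum-weight representatives in $\GF2^N/\cC$ are unique up to weight $D/2>\ell$, so $f^{\le\ell}$ is unambiguously a multilinear polynomial of degree $\le\ell$ on $\{0,1\}^N$ with the same low-degree Fourier coefficients as $f$ on $\rmd$. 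The high-degree part contributes at most $\rho^{\ell/2} = \tau^{15\epsilon}$, which is dominated by the error term of \pref{eq:rmmjs} in the regime where that error is nontrivial (i.e.\ $\epsilon\log(1/\tau)\gg \log\log(1/\tau)$). It thus suffices to apply \pref{cor:approx-majority-stablest} to $f^{\le\ell}$ on $\{0,1\}^N$.

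Of the three hypotheses of that corollary, (i) the mean equals $\hat f(\emptyset)=\mu$ and (ii) the low-degree influences on $\{0,1\}^N$ coincide with $\Inf_i^{\le\ell}(f)$ on $\rmd$ (by the same coset-uniqueness) and are therefore bounded by $\tau$; both are essentially free. The main obstacle is (iii), the bound $\E_{\{0,1\}^N}[\zeta\circ f^{\le\ell}]\le\tau$. Since $f$ is $[0,1]$-valued on $\rmd$ we only know $\E_{\rmd}[\zeta\circ f]=0$; this does not by itself control either $\E_\rmd[\zeta\circ f^{\le\ell}]$ or its cube analogue. To remedy this I would, in the style of \cite{MosselOO05}, first replace $f$ by a lightly smoothed $G_\eta f$ (absorbing $\eta$ into $G$), so that $G_\eta f$ has Fourier mass exponentially concentrated below degree $\ell$ and hence $(G_\eta f)^{\le\ell}$ approximates the $[0,1]$-valued function $G_\eta f$ in $L^2(\rmd)$, yielding $\E_\rmd[\zeta\circ(G_\eta f)^{\le\ell}]\le\tau$. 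This $\rmd$-bound is then transferred to the cube via the Meka--Zuckerman derandomized invariance principle~\cite{MekaZ10}, combined with the key observation (indicated in \pref{sec:other}) that the uniform distribution on Reed--Muller codewords arises as an instance of the MZ pseudorandom generator -- a pairwise-independent partition of the coordinates followed by $k$-wise independent distributions in each block -- while the low-degree influence hypothesis on $f$ supplies the regularity needed to invoke MZ. The hardest step is to make this quantitative: verifying that the regularity and parameter settings of $(G_\eta f)^{\le\ell}$ are compatible with applying the MZ invariance principle at degree $\ell=30\log(1/\tau)$ with an error that fits within the budget of \pref{eq:rmmjs}.
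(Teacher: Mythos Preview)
Your proposal is correct and follows essentially the same route as the paper: truncate at degree $\ell\asymp\log(1/\tau)$, compare eigenvalues of $G$ to those of $T_\rho$ via \pref{lem:rm-eigenvalues}, smooth $f$ by a small power of $G$ so that the truncated function is $L^2$-close to a $[0,1]$-valued function on $\rmd$, transfer the $\zeta$-bound to the cube via the Meka--Zuckerman-based invariance principle for Reed--Muller codewords (this is exactly \pref{thm:iprmbounded} in the paper), and finally apply \pref{cor:approx-majority-stablest}. The paper carries out the smoothing as $g=G^\gamma f$, $G'=G^{1-2\gamma}$ with $\gamma\asymp \log\log(1/\tau)/(\log(1/\tau)\log(1/\rho))$, and chooses $\ell=\log(1/\tau)/(4c_1)$ (with $c_1$ the constant in \pref{thm:iprmbounded}) so that the invariance error $2^{c_1\ell}\sqrt\tau$ stays within budget; your anticipated ``hardest step'' is precisely this parameter balancing, and it works.
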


The proof of the theorem proceeds in three steps. We first show that the eigenvalue profile of the graph $G$ is close to the eigenvalue profile of the Boolean noise graph (see \pref{lem:rm-eigenvalues}). We then show an invariance principle for low-degree polynomials (and as a corollary for {\em smoothed functions}) showing that they have similar behaviour under the uniform distribution over the hypercube and the uniform distribution over the appropriate Reed--Muller code. Finally, we use the invariance principle to translate the majority is stablest result in the hypercube setting to the Reed--Muller code. The above approach is similar to that of Mossel et al., who translate a majority is stablest result in the Gaussian space to the hypercube using a similar invariance principle.

We first state the invariance principle that we use below (see the next subsection for the proof). Recall the definition of the functional $\zeta\from \R \to \R$ from \pref{sec:major-stabl-hyperc}.
\begin{theorem}\label{thm:iprmbounded}
Let $N = 2^n$ and $d \geq 4\log(1/\tau)$. Let $P:\R^N \rightarrow \R$ be a $\tau$-regular polynomial of degree at most $\ell$. Then, for $x \in_u \sbits^N$, $z \in_u \RM(n,d)$,
\[\left|\E[\zeta \circ P(x))] - \E[\zeta \circ P(z))] \right| \leq 2^{c_1 \ell} \sqrt \tau, \]
for a universal constant $c_1 > 0$.
\end{theorem}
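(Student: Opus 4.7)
The plan is to deduce this invariance statement from the Meka--Zuckerman derandomized invariance principle~\cite{MekaZ10} by exhibiting a uniformly random $\RM(n,d)$ codeword as (an instantiation of) their pseudorandom generator. The Meka--Zuckerman generator is built by partitioning the $N$ coordinates into blocks via a pairwise independent hash and placing an independent sample from a sufficiently $k$-wise independent distribution inside each block; for any $\tau$-regular degree-$\ell$ polynomial $P$ and a smooth functional like $\zeta$, they show that
\[
\left|\E_{x\in\sbits^N}[\zeta\circ P(x)] - \E_{G}[\zeta\circ P(G)]\right| \le 2^{O(\ell)}\sqrt\tau
\]
as long as $k$ exceeds a polylogarithmic threshold in $1/\tau$. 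Once the identification of $\RM(n,d)$ with this generator is in place, the theorem is immediate.

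First, I would adopt the standard identification of a uniform codeword $z\in\RM(n,d)$ with the evaluation table $(p(v))_{v\in\GF2^n}$ of a uniformly random polynomial $p$ of degree at most $d$, so that the $N=2^n$ coordinates are indexed by $\GF2^n$. By the minimum distance $2^{d+1}$ of the dual code $\RM(n,n-d-1)$, this distribution is $(2^{d+1}-1)$-wise independent, and the hypothesis $d\ge 4\log(1/\tau)$ guarantees that this far exceeds the polylogarithmic $k$ required for the MZ bound at the claimed error level.

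Second, to expose the block structure, I would pick a uniformly random $(n-d)$-dimensional affine subspace $V\subseteq\GF2^n$ and partition $\GF2^n = \bigsqcup_{a\in\GF2^d} V_a$ into its cosets; the assignment $v\mapsto a(v)$ is a pairwise independent hash of the coordinate set into $2^d$ equal blocks. Decomposing $p = q + r$, where $q$ aggregates the part of $p$ that is constant on each coset and $r$ is the remainder, one identifies $q$ with the ``seed'' of the MZ hash and the evaluations $(r(v))_{v\in V_a}$ with the per-block $k$-wise independent samples. After performing an affine change of coordinates inside each coset, the MZ theorem applied to $\zeta\circ P$ with this realization of $G$ yields the desired estimate $2^{c_1\ell}\sqrt\tau$.

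The main obstacle is the second step: a priori, the evaluations of a single polynomial $p$ across distinct cosets $V_a$ are correlated because they share the monomial coefficients of $p$, whereas the MZ generator uses \emph{independent} per-block samples. Making the decomposition $p = q + r$ precise and verifying that, after the cosetwise affine reparametrization, the within-block restrictions $r|_{V_a}$ really behave as independent $k$-wise independent strings across $a$ is the technical heart of the argument and the place where the algebraic structure of $\RM(n,d)$ is exploited. Once this structural claim is established, it plugs into the Meka--Zuckerman theorem in a black-box fashion to complete the proof.
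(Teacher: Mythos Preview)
Your high-level plan---exhibit a uniform $\RM(n,d)$ codeword as an instance of the Meka--Zuckerman generator and then invoke their bound---is exactly the paper's approach. But the concrete realization you sketch has a genuine gap, and your proposed decomposition $p=q+r$ does not close it.

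The problem is the choice of $2^d$ blocks. With cosets of an $(n-d)$-dimensional subspace, the entire degree budget is consumed by the hash: the only degree-$d$ polynomials whose restrictions to all $2^d$ cosets are jointly independent are those constant on each coset, leaving no within-block randomness to serve as the $k$-wise independent per-block samples. Concretely, your ``remainder'' $r$ cannot have independent block restrictions, since any monomial of $p$ supported entirely on the non-hash coordinates (e.g.\ $x_{d+1}x_{d+2}$) contributes identically to $r|_{V_a}$ for every $a$. The paper's fix is to hash into only $2^c$ blocks with $c=2\log(1/\tau)<d$ (apply a random affine map, then project to the first $c$ coordinates), and to isolate the subspace
\[
\cS=\Bigl\{\textstyle\sum_{a\in\GF2^c}\mathsf{1}(x|_{[c]}=a)\,P_a(x_{c+1},\dots,x_n)\ :\ \deg P_a\le d-c\Bigr\}
\]
inside the degree-$d$ polynomials. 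A uniform element of $\cS$ has, \emph{by construction}, independent $(2^{d-c}{-}1)$-wise independent restrictions to the $2^c$ cosets. One then writes the random degree-$d$ polynomial (after the affine map) as $Q_1+Q_2$ with $Q_1$ uniform in $\cS$ and $Q_2$ uniform in a complementary subspace $\cS'$; the evaluations of $Q_2$ play the role of the independent mask $y$ in the Meka--Zuckerman output. With this decomposition their theorem applies in black-box fashion for smooth test functions, and the passage to $\zeta$ is then the standard argument of \cite{MosselOO05}.
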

The (somewhat technical) proof of \pref{thm:rm-majority-stablest} from the above invariance principle closely follows the argument of Mossel et al.~and is defered to the appendix -- see \pref{sec:missingproofs}. \Mnote{Moved the proof of MOS from IP to appendix as it is technical and we are following the same step in MOO closely.}
\subsection{Invariance Principles over Reed--Muller Codes}
The various invariance principles of Mossel et al. \cite{MosselOO05} are essentially equivalent (upto some polynomial loss in error estimates) to saying that for any low-degree regular polynomial $P$, the polynomial threshold function (PTF) $\sign(P(\;))$ cannot distinguish between the uniform distribution over the hybercube and the standard multivariate Gaussian distribution $\cN(0,1)^N$.
\begin{theorem}\label{th:ipmoo}
  Let $P:\R^N \rightarrow \R$ be a $\eps$-regular polynomial of degree at most $\ell$. Then, for any $x \sim\sbits^N$, $y \leftarrow \cN(0,1)^N$,
\[\left|\E[\sign(P(x))] - \E[\sign(P(y))] \right| \leq O(\ell \eps^{1/(2\ell+1)}).\]
\end{theorem}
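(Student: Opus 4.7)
The plan is to derive \pref{th:ipmoo} from the smooth invariance principle \pref{thm:moo-invariance} via a standard mollification argument, paying for the non-smoothness of $\sign$ by anti-concentration of low-degree regular polynomials. Without loss of generality I would normalize $\snorm{P}=1$, since this preserves both $\sign(P)$ and the $\epsilon$-regularity hypothesis. I would then fix a parameter $\eta>0$ (to be optimized at the end) and approximate $\sign$ by a $C^3$ function $\psi_\eta\from\R\to[-1,1]$ that agrees with $\sign$ outside $[-\eta,\eta]$ and satisfies $\norm{\psi_\eta^{(k)}}_\infty\le O(\eta^{-k})$ for $k\le 3$. Telescoping gives
\begin{displaymath}
  \Bigabs{\E \sign P(x) - \E \sign P(y)}
  \le \Bigabs{\E \psi_\eta(P(x)) - \E \psi_\eta(P(y))}
  + 2\Prob{\abs{P(x)}\le \eta}
  + 2\Prob{\abs{P(y)}\le \eta}\mper
\end{displaymath}

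The middle term is controlled by the same Lindeberg-style replacement argument that underlies \pref{thm:moo-invariance}, applied to the smooth test function $\psi_\eta$. For an $\epsilon$-regular polynomial of degree at most $\ell$ this yields a bound proportional to $\norm{\psi_\eta'''}_\infty$ times $2^{O(\ell)}\epsilon$, i.e.\ an overall smoothness error of order $2^{O(\ell)}\epsilon/\eta^2$. The Gaussian anti-concentration term is immediately bounded by the Carbery--Wright inequality, which gives $\Prob[y\sim\cN(0,1)^N]{\abs{P(y)}\le \eta}\le O(\ell\eta^{1/\ell})$ under $\snorm{P}=1$. The Boolean anti-concentration term is then handled by transferring this Gaussian estimate back to the cube: approximate the indicator of $[-\eta,\eta]$ by a $C^3$ bump at a slightly finer scale $\eta'\ll \eta$, apply the same smooth invariance principle, and absorb the resulting $2^{O(\ell)}\epsilon/\eta'^2$ term into the leading error by choosing $\eta'$ judiciously. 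Balancing the two surviving errors $2^{O(\ell)}\epsilon/\eta^2$ and $\ell\eta^{1/\ell}$ gives $\eta\approx(\epsilon\cdot 2^{O(\ell)}/\ell)^{\ell/(2\ell+1)}$, and substituting back produces the claimed bound $O(\ell\epsilon^{1/(2\ell+1)})$.

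The main technical obstacle is the middle step: a naive Lindeberg replacement with only two derivatives matched would produce error $\sqrt\epsilon$ rather than $\epsilon$, which would spoil the final exponent by a factor of two in the denominator. Following the approach of \cite{MosselOO05}, one obtains the sharper linear-in-$\epsilon$ dependence by truncating $P$ to the range where the $\zeta$-based bound of \pref{thm:moo-invariance} is tight and controlling the tail contribution through the hypercontractive moment bound for low-degree polynomials; this is precisely where the $\epsilon$-regularity hypothesis is used in an essential way, and it is what hides the dimension-free constant inside the $2^{O(\ell)}$ factor. Once that refined replacement bound is established, the Boolean anti-concentration transfer and the final optimization of $\eta$ are routine computations.
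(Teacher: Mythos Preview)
The paper does not prove this theorem; it is stated as a known result of Mossel, O'Donnell and Oleszkiewicz \cite{MosselOO05} and used as a black box. So there is no paper proof to compare against. Your outline is essentially the standard MOO argument: mollify $\sign$ at scale $\eta$, control the smooth term by the Lindeberg replacement underlying \pref{thm:moo-invariance}, control the mollification error by Carbery--Wright anti-concentration, and optimize $\eta$.

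That said, your quantitative bookkeeping is a bit muddled. You invoke $\norm{\psi_\eta'''}_\infty$ but then write the smoothness error as $2^{O(\ell)}\epsilon/\eta^2$; with $\norm{\psi_\eta^{(k)}}_\infty = O(\eta^{-k})$ the third-derivative bound yields $\eta^{-3}$, not $\eta^{-2}$. More substantively, for Rademacher versus standard Gaussian the first \emph{three} moments already match, so the standard MOO Lindeberg step uses the fourth-order remainder and produces an error of order $\norm{\psi_\eta^{(4)}}_\infty \cdot 2^{O(\ell)} \cdot \sum_i \Inf_i(P)^2 \le 2^{O(\ell)} \ell\,\epsilon^2/\eta^4$ under the paper's convention $\Inf_i\le \epsilon^2$. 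Balancing this against $O(\ell\,\eta^{1/\ell})$ gives the exponent $2/(4\ell+1)$, which agrees with $1/(2\ell+1)$ up to lower-order terms (and exactly, depending on whose normalization of ``regular'' one adopts). Your last paragraph, which attributes the ``linear in $\epsilon$'' gain to a truncation-and-$\zeta$ trick, is not how MOO actually obtain the sharper dependence; it comes directly from matching three moments and going to the fourth-order remainder. None of this breaks the approach, but you should align the derivative order, the resulting power of $\eta$, and the final balance consistently.
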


Ideally, we would like a similar invariance principle to hold even when $x$ is chosen uniformly from the codes of the earlier sections instead of being uniform over the hypercube. Such an invariance principle will allow us to analyze alphabet reductions and integrality gaps based on graphs considered in earlier sections (e.g., the $\RM$ graph). We obtain such generic invariance principles applicable to all codes modulo certain plausible conjectures on low-degree polynomials being fooled by bounded independence.

For the explicit example of Reed--Muller code we bypass the conjectures and directly show an invariance principle by proving that the uniform distribution over the Reed--Muller code fools low-degree PTFs. To do so, we will use the specific structure of the Reed--Muller code along with the pseudorandom generator (PRG) for PTFs of Meka and Zuckerman \cite{MekaZ10}. Specifically, we show that the uniform distribution over $\RM$ can be seen as an instantiation of the PRG of \cite{MekaZ10} and then use the latter's analysis as a blackbox. Call a smooth function $\psi:\R \rightarrow \R$ $B$-nice if $|\psi^{(4)}(t)| \leq B$ for every $t \in \R$.
\begin{theorem}\label{thm:iprm}
Let $N = 2^n$ and $d \geq \log \ell + 2\log(1/\epsilon) + 2$. Let $P:\R^N \rightarrow \R$ be a $\eps$-regular multi-linear polynomial of degree at most $\ell$. Let $x \leftarrow \cN(0,1)^N$, $z \sim\RM(n,d)$. Then, for every $1$-nice function $\psi:\R \rightarrow \R$,
\[ \left|\E[\psi(P(x))] - \E[\psi(P(z))] \right| \leq \ell^2 9^\ell \eps^2\mper\]
\end{theorem}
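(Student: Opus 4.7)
My plan is to prove \pref{thm:iprm} by exhibiting the uniform distribution over $\RM(n,d)$ as essentially an instantiation of the Meka--Zuckerman pseudorandom generator for fooling low-degree polynomial threshold functions~\cite{MekaZ10}, and then invoking that generator's analysis more or less as a black box. Recall that an MZ seed consists of (a) a pairwise independent hash $h\colon [N]\to [K]$ partitioning the coordinates into $K$ buckets, and (b) an independent $r$-wise independent distribution on each bucket. The MZ analysis shows that the resulting distribution fools every $\eps$-regular degree-$\ell$ polynomial $P$ in the sense that $\bigl|\E[\psi(P(x))]-\E[\psi(P(z))]\bigr|$ is small for every $1$-nice $\psi$, with error of order $\ell^2 9^\ell \eps^2$ once $K,r$ are taken polynomial in $\ell/\eps$.

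The structural step is to exhibit the same pairwise hashing plus bounded within-bucket independence inside a uniformly random codeword of $\RM(n,d)$. Identify the $N=2^n$ coordinates with $\GF 2^n$ and fix a subspace $W\le \GF 2^n$ of codimension $k$, where $k$ is chosen so that the number of buckets $K=2^k$ is $\poly(\ell/\eps)$ and the restriction code $\RM(n-k,d)$ has distance $2^{n-k-d}\ge 4\ell$ (the hypothesis $d\ge \log\ell + 2\log(1/\eps)+2$ makes such a choice possible). The cosets of $W$ partition $\GF 2^n$ into $K$ buckets of size $2^{n-k}$, and the coset-assignment map is a linear surjection $\GF 2^n\to \GF 2^k$ which, after a random affine offset, is a pairwise independent hash. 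For a uniformly random polynomial $p$ of degree $\le d$ over $\GF 2^n$, the restriction $p|_{y+W}$ to each coset $y+W$ is uniform over $\RM(n-k,d)$; by distance $2^{n-k-d}\ge 4\ell$, the $2^{n-k}$ coordinates inside each bucket are $4\ell$-wise independent, which is exactly the within-bucket condition that MZ requires.

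With this bucket structure in place, the rest of the proof runs the MZ hybrid argument: interpolate from $\cN(0,1)^N$ to the $\RM$ distribution one bucket at a time, bounding each swap by a fourth-order Taylor expansion of $\psi$ (valid because $\psi$ is $1$-nice). The $\eps$-regularity of $P$ keeps the $\ell^2$-mass assigned to any single bucket small, the $4\ell$-wise within-bucket independence matches the first four moments of every restricted degree-$\ell$ monomial between the Gaussian and $\RM$ draws, and summing the resulting errors over the $K$ buckets yields the target bound $\ell^2 9^\ell \eps^2$.

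The main obstacle---and the only genuine departure from the MZ setting---is that the bucket-restrictions of a random $\RM(n,d)$ codeword are \emph{not} mutually independent: they share the low-degree coefficients of the underlying polynomial $p$. The saving observation is that, viewed as a tuple indexed by the $K$ buckets, the collection of restrictions is itself the evaluation table of a Reed--Muller code over $\GF 2^k$ of distance at least $2^{k-d}\ge 4\ell$ by our choice of $k$; hence across buckets the restrictions are $4\ell$-wise independent. Because the MZ fourth-order hybrid inspects only the first four moments of degree-$\ell$ monomials, each use of full cross-bucket independence in its proof can be weakened to this $4\ell$-wise independence without changing the conclusion, and verifying this weakening carefully is where the real work of the proof goes.
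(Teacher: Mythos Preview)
Your high-level plan matches the paper: exhibit the uniform distribution over $\RM(n,d)$ as an instance of the Meka--Zuckerman generator and invoke its analysis. You also correctly identify the one real obstacle, namely that restrictions of a random $\RM(n,d)$ codeword to different buckets are not mutually independent. The gap is in your proposed fix for this obstacle.

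Your ``saving observation'' does not hold as stated. The tuple of bucket-restrictions is not the evaluation table of any $\RM$ code over $\GF 2^k$ in a way that gives $4\ell$-wise independence of the restrictions \emph{as blocks}; at best you get that any $4\ell$ single \emph{coordinates} (across all buckets) are independent, which is just the global $(2^{d+1}-1)$-wise independence of $\RM(n,d)$ and is not what the MZ hybrid uses. The MZ hybrid fixes buckets $1,\ldots,i-1$ to their pseudorandom values and swaps bucket $i$; this step relies on the conditional distribution of bucket $i$ given the earlier buckets having the correct low moments. Four-$\ell$-wise independence of single coordinates says nothing about this conditional distribution, so the assertion that ``each use of full cross-bucket independence can be weakened to this $4\ell$-wise independence without changing the conclusion'' is not justified and, as far as I can see, is not true without substantial new ideas. (Also, your parameter claim $2^{k-d}\ge 4\ell$ is inconsistent with taking $K=2^k=\poly(\ell/\eps)$ and $d\ge\log\ell+2\log(1/\eps)+2$.)

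The paper avoids this difficulty altogether rather than working around it. Setting $c=2\log(1/\eps)$ and using the first $c$ input coordinates (after a random affine change of variables $A$) as the bucket index, it decomposes a uniformly random degree-$d$ polynomial as $Q=(Q_1+Q_2)\circ A$ where $Q_1$ ranges uniformly over the subspace
\[
\cS=\Bigl\{\sum_{a\in\{0,1\}^c}\mathbf{1}[x_{|[c]}=a]\cdot P_a(x_{c+1},\ldots,x_n)\ :\ \deg P_a\le d-c\Bigr\}
\]
and $Q_2$ ranges over a complementary subspace $\cS'$. The point is that sampling $Q_1$ amounts to sampling the $P_a$'s \emph{independently}; hence, \emph{conditioned on $A$ and $Q_2$}, the evaluations of $Q$ on distinct buckets $B_a=\{x:(Ax)_{|[c]}=a\}$ are genuinely mutually independent, each distributed as a uniform $\RM(n-c,d-c)$ evaluation table (so $(2^{d-c}-1)$-wise independent, with $2^{d-c}\ge 4\ell$). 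The fixed $Q_2$ contributes only a deterministic coordinate-wise sign $y$, which is absorbed by the slightly generalized MZ statement the paper quotes. With true cross-bucket independence in hand, the MZ analysis applies verbatim and yields the $\ell^2 9^\ell\eps^2$ bound. The missing piece in your write-up is precisely this algebraic decomposition that manufactures independence rather than trying to substitute a weaker notion for it.
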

To prove the theorem, we first discuss the PRG construction of \cite{MekaZ10}. Let $t = 1/\e^2$ and $M = N/t$. Let $\cH:[N] \rightarrow [t]$ be a family of almost pairwise independent hash functions\footnote{A hash family $\cH$ is almost pairwise independent if for every $i \neq j \in [N]$, $a,b \in [t]$, $\Pr_{h \in_u \cH}[ h(i) = a \,\wedge h(j) = b] \leq (1+\alpha)/t^2$ for $\alpha = O(1)$.} and let $\cD \equiv \cD_{4\ell}$ be a $(4\ell)$-wise independent distribution over $\sbits^m$. The PRG of \cite{MekaZ10}, $G_{\cH,\cD}$, can now be defined by the following algorithm:
\begin{enumerate}
\item Choose a random $h \in \cH$ and partition $[N]$ into $t$ blocks $B_1,\ldots,B_t$, with $B_j = \{i: h(i) = j\}$.
\item Choose independent samples $x_1,\ldots,x_t \leftarrow \cD$ and let $y \in \sbits^N$ be chosen according to an arbitrary distribution independent of $x_1,\ldots,x_t$.
\item Output\footnote{The description we give here is slightly different from that of \cite{MekaZ10} due to the presence of the string $y$. However, the analysis of \cite{MekaZ10} works without any changes for this case as well.}
  \begin{equation}
    \label{eq:mzgendef}
z' \in \sbits^N, \text{ with $z'_i  = z_i \cdot y_i$ for $i \in [N]$ }, \text{ where $z_{|B_j} = x_j$ for $j \in [t]$.}
  \end{equation}

\end{enumerate}
Meka and Zuckerman show that $G_{\cH,\cD}$ as above fool (arbitrary) low-degree polynomials. Below we state their result for regular PTFs which suffices for our purposes and gives better quantitative bounds.
\begin{theorem}\label{thm:mzgenip}[Lemma 5.10 in \cite{MekaZ10b}]
  Let $P:\R^N \rightarrow \R$ be a $\eps$-regular multilinear polynomial of degree at most $\ell$. Then, for $x \in_u \sbits^N$, and $y \in \sbits^N$ generated according to $G_{\cH,\cD}$,
\[ \left|\E[\psi(P(x))] - \E[\psi(P(y))] \right| \leq \frac{1}{3} \ell^2 9^\ell \eps^2\mper\]
\end{theorem}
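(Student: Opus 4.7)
}

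The plan is to follow a Lindeberg-style hybrid argument, moving from the uniform distribution on $\sbits^N$ to the output distribution of $G_{\cH,\cD}$ one block at a time, and controlling each single-block swap via a Taylor expansion of $\psi$. Because $\psi$ is $1$-nice (bounded fourth derivative), terms of order up to three can be exactly matched using $(4\ell)$-wise independence, while the fourth-order remainder is controlled by hypercontractivity and the $\e$-regularity of $P$.

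First, I would eliminate the arbitrary sign string $y \in \sbits^N$. Since the theorem's bound must hold for every $y$, we condition on $y$ and absorb it into the polynomial: define $\tilde P(z) = P(y\cdot z)$, which has the same degree $\ell$, the same $2$-norm, and the same regularity parameter $\e$ (the Fourier coefficients are only multiplied by $\pm1$). Under this substitution $z' \sim G_{\cH,\cD}$ becomes the generator with $y=1^N$, so it suffices to treat that case. Next, condition on the hash $h \in \cH$ defining the blocks $B_1,\ldots,B_t$ with $t=1/\e^2$; the final bound will emerge after averaging over $h$. Define hybrid distributions $H_j$ ($0\le j\le t$) where the coordinates in blocks $B_1,\ldots,B_j$ are sampled independently from $\cD_{4\ell}$ and the coordinates in $B_{j+1},\ldots,B_t$ are uniform. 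Then $H_0 = U_{\sbits^N}$ and $H_t$ is the output of $G_{\cH,\cD}$, so by the triangle inequality it suffices to bound each hop $|\E[\psi(P(H_j))] - \E[\psi(P(H_{j-1}))]|$.

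For the hop from $H_{j-1}$ to $H_j$, split $P = P_{-j} + Q_j$, where $P_{-j}$ collects monomials avoiding $B_j$ and $Q_j := \sum_{S\,:\,S\cap B_j\neq\emptyset} \hat P(S)\,z^S$. Fixing the variables $w$ outside $B_j$ (these have the same marginal under $H_{j-1}$ and $H_j$), let $R_w(u) := Q_j((w,u))$ be the induced polynomial on $u\in\sbits^{|B_j|}$, of degree $\le\ell$. Taylor expansion of $\psi$ around $P_{-j}(w)$ gives
\[
\psi(P_{-j}(w) + R_w(u)) \;=\; \sum_{k=0}^{3}\frac{\psi^{(k)}(P_{-j}(w))}{k!}\,R_w(u)^k \;+\; \frac{\psi^{(4)}(\xi_{w,u})}{24}\,R_w(u)^4.
\]
For each $k\le 3$ the $k$th term is, once $w$ is fixed, a polynomial in $u$ of degree at most $3\ell\le 4\ell$, so its expectation is the same whether $u$ is uniform or drawn from $\cD_{4\ell}$. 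The remainder is bounded by $\tfrac{1}{24}R_w(u)^4$ since $\psi$ is $1$-nice. Averaging over $w$ therefore yields
\[
\bigl|\E[\psi(P(H_j))] - \E[\psi(P(H_{j-1}))]\bigr| \;\le\; \tfrac{1}{24}\bigl(\E_{H_{j-1}}[Q_j^4] + \E_{H_j}[Q_j^4]\bigr).
\]

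To bound $\E_{H}[Q_j^4]$ under either hybrid $H\in\{H_{j-1},H_j\}$, observe that the full hybrid distribution is itself $(4\ell)$-wise independent: distinct blocks are independent, and within each block the distribution is either uniform or drawn from $\cD_{4\ell}$. Hence the Bonami--Beckner ``$(2,4)$-hypercontractive'' inequality for degree-$\ell$ polynomials continues to hold verbatim, giving $\E_{H}[Q_j^4] \le 9^\ell\,\E_{H}[Q_j^2]^2$. Moreover, $\E_H[Q_j^2]$ is a polynomial of degree $2\ell$ in $z$, so it equals its uniform value $\sum_{S:\,S\cap B_j\neq\emptyset} \hat P(S)^2$. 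Summing over $j$ and taking expectation over the (almost) pairwise independent hash $h$,
\[
\sum_{j=1}^t \E_h\!\Bigl[\sum_{S:\,S\cap B_j\ne\emptyset}\hat P(S)^2\Bigr] \;=\; \sum_{S\ne\emptyset}\hat P(S)^2 \cdot \E_h\!\bigl|\{j:S\cap B_j\neq\emptyset\}\bigr| \;\le\; \ell\,\|P\|^2\le \ell,
\]
giving an $\ell$--bound on the \emph{first} moments summed over $j$.

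The main obstacle is then converting this control on $\sum_j \E_h[\E_H[Q_j^2]]$ into control on $\sum_j \E_h[\E_H[Q_j^2]^2]$, which is what the hypercontractive estimate actually needs. The route I would take uses $\e$-regularity of $P$ together with the (almost) pairwise independence of $\cH$: for any coordinate $i\in B_j$ one has $\sum_{S\ni i}\hat P(S)^2\le \e^2$, so after the hash places $i$ into $B_j$, the variance attributed to block $B_j$ is controlled coordinate-by-coordinate. Combining this with $(1+o(1))$-pairwise independence of $h$ to guarantee $\Pr_h[S\cap B_j\ne\emptyset]\le (1+o(1))|S|/t \le (1+o(1))\ell\e^2$ per block, and invoking hypercontractivity \emph{inside the outer $w$-expectation} (so that $\E_w[Q_{S,j}^4]\le 9^\ell\E_w[Q_{S,j}^2]^2$ for each partial coefficient polynomial $Q_{S,j}(w)$), one can push the analysis through to obtain $\sum_j \E_h[\E_H[Q_j^2]^2] \le O(\ell^2\e^2/9^\ell)\cdot 9^\ell = O(\ell^2\e^2)$. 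Plugging this into the hop bound and summing over the $t$ hybrids yields the desired $\tfrac{1}{3}\ell^2 9^\ell \e^2$ estimate. I expect this last bookkeeping step---correctly combining regularity of $P$, almost-pairwise independence of $\cH$, and hypercontractivity inside blocks to control the per-block squared variance---to be the technically delicate part, and it is essentially the content of Lemma 5.10 in~\cite{MekaZ10b}.
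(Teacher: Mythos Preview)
The paper does not prove this theorem; it is quoted verbatim as Lemma~5.10 of \cite{MekaZ10b} and used as a black box in the proof of \pref{thm:iprm}. So there is no ``paper's own proof'' to compare against. That said, your plan is the Meka--Zuckerman argument, and most of it is set up correctly: the reduction to $y=1^N$, conditioning on $h$, the block-by-block hybrids, the third-order Taylor expansion of $\psi$ around $P_{-j}(w)$, the exact matching of the first three Taylor terms via $(4\ell)$-wise independence of $\cD$, and the observation that hypercontractivity holds under any $(4\ell)$-wise independent distribution so that $\E_{H}[Q_j^4]\le 9^\ell\,\sigma_j^2$ with $\sigma_j=\sum_{S:S\cap B_j\neq\emptyset}\hat P(S)^2$. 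All of this is right.

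The one place your write-up goes off the rails is the closing paragraph, where you need $\E_h\bigl[\sum_j \sigma_j^2\bigr]=O(\ell^2\e^2)$. Your sketch talks about ``hypercontractivity inside the outer $w$-expectation'' and writes an identity $O(\ell^2\e^2/9^\ell)\cdot 9^\ell$ with an unexplained $9^\ell$; neither is what is actually needed. The clean calculation is purely combinatorial: expand
\[
\E_h\Bigl[\sum_j \sigma_j^2\Bigr]
=\sum_{S,S'}\hat P(S)^2\hat P(S')^2\;\E_h\bigl|\{j:S\cap B_j\neq\emptyset\text{ and }S'\cap B_j\neq\emptyset\}\bigr|,
\]
bound the inner expectation by $|S\cap S'|+(1+\alpha)\,|S|\,|S'|/t$ using almost-pairwise independence of $\cH$, and then split into two terms. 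The first term is $\sum_i \Inf_i(P)^2\le (\max_i\Inf_i(P))\cdot\sum_i\Inf_i(P)\le \e^2\cdot \ell\,\|P\|^4$ by $\e$-regularity; the second is $(1+\alpha)\ell^2\|P\|^4/t=(1+\alpha)\ell^2\e^2\|P\|^4$ since $t=1/\e^2$. Plugging into the hybrid bound $\sum_j|\text{hop}_j|\le \tfrac{2}{24}\,9^\ell\,\E_h[\sum_j\sigma_j^2]$ gives the desired $O(\ell^2 9^\ell \e^2)$. So the ingredients you list (regularity and pairwise independence of $\cH$) are exactly right; you just need to combine them via this double-sum expansion rather than an additional hypercontractivity step.
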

We next show that the uniform distribution over $\RM(n,d)$ for a sufficiently high $d$ is equivalent to $G_{\cH,\cD}$ as above, for an appropriately chosen hash family $\cH$ and $(4\ell)$-wise independent distribution $\cD$. Below we identify $[N]$ with $\GF 2^n$ and $[t]$ with $\GF 2^c$, for $c = 2\log(1/\epsilon)$.
\begin{proof}[Proof of \pref{thm:iprm}]
For simplicity, in the following discussion we view $\RM(n,d)$ as generating a vector in $\GF 2 ^N$ and show that the uniform distribution over $\RM(n,d)$ has the appropriate independence structure as required by \pref{thm:mzgenip}, albeit with $\sbits$ replaced with $\bits$. This does not the effect the analysis of the generator.

Let $c = 2\log(1/\epsilon)$ and let $\cS$ be the subspace of polynomials of the form
\[ Q_1(x_1,\ldots,x_n) = \sum_{a \in \bits^c} \mathsf{1}(x_{|[c]} = a) \cdot P_a(x_{c+1},\ldots,x_n),\]
where the polynomials $P_a$ each have degree at most $d-c$. Note that we can sample a uniformly random element $Q_1 \in \cS$ by choosing independent, uniformly random degree at most $d-c$ polynomials $P_a:\GF 2^{n-c} \rightarrow \GF 2$ for $a \in \bits^c$ and setting $Q_1$ as above. This is because, each collection $(P_a)_{a \in \bits^c}$ leads to a unique element of $\cS$ and together they cover all elements of $\cS$.

Let $\cS'$ be a subspace of degree $d$, $n$-variate polynomials such that $\cS \cap \cS' = \{0\}$ and $\cS, \cS'$ together span all degree $d$ polynomials. Let $\cA:\GF 2^n \rightarrow \GF 2^n$ be the space of all affine transformations. For $A \in \cA$, let $h_A:[N] \rightarrow [t]$ be defined by $h_A(x) = A(x)_{|[c]}$ and let $\cH \equiv \{h_A: A \in \cA\}$. It is easy to see that for $A \in_u \cA$, the hash functions $h_A$ are almost pairwise independent. Observe that for $Q_1 \in_u \cS, Q_2 \in_u \cS'$ and $A \in_u \cA$, the polynomial $Q(\;) = (Q_1 + Q_2)(A(\;))$ is uniformly distributed over all $n$-variate degree $d$ polynomials.

Now, fix a polynomial $Q_2 \in \cS'$. Then, for a random $Q_1 \in_u \cS$, we have
\[ Q(x) = \sum_{a \in \bits^c} \mathsf{1}(h_A(x) = a) \cdot P_a(u_{a+1},\ldots,u_n) + Q_2(u),\]
where $u = Ax$ and the polynomials $(P_a)_{a \in \bits^c}$, are independent uniformly random polynomials of degree at most $d-c$ in $n-c$ variables. Let $\cD$ denote the distribution of $(P'(u))_{u \in \GF 2^{n-c}}$ for $P'$ a uniformly random polynomial of degree at most $d-c$ in $(n-c)$ variables. Then, for every fixed $A \in \cA$ and  $Q_2 \in \cS'$, the distribution of the evaluations of $Q$ restricted to different {\em buckets} $B_a = \{x:h_A(x) = a\}$ are independent of one another. Moreover, within each bucket $B_a$, the evaluations vector $(Q_1(x))_{x \in B_a}$ is distributed as $\cD$, which is $(2^{d-c}-1)$-wise independent.

Therefore, for every fixed $Q_2 \in \cS'$, the distribution of $z = (Q(x))_{x \in \GF 2^n}$ is the same as the output of $G_{\cH,\cD}$ as defined in Equation \ref{eq:mzgendef}, where $y = Q_2(A(x))$. The theorem now follows from Theorem \ref{thm:mzgenip}.
\end{proof}

The invariance principle of \pref{thm:iprm} combined with the appropriate choice of the smooth function $\psi$ gives us the following corollaries.
\begin{proof}[Proof of \pref{thm:iprmbounded}]
  Follows from using \pref{thm:iprm} and an argument as in Theorem 3.19 of \cite{MosselOO05} who get a similar conclusion for the hypercube starting from an invariance principle for the hypercube to the Gaussian space.
\end{proof}

\begin{corollary}\label{cor:iprmptf}
Let $N = 2^n$ and $d \geq \log \ell + 2\log(1/\epsilon) + 2$. Let $P:\R^N \rightarrow \R$ be a $\eps$-regular polynomial of degree at most $\ell$. Then, for $x \in_u \sbits^N$, $z \in_u \RM(n,d)$,
\[\left|\E[\sign(P(x))] - \E[\sign(P(z))] \right| \leq O(\ell \eps^{1/(2\ell+1)}). \]
\eat{
Similarly, for $y \leftarrow \cN(0,1)^N$,
\begin{equation}
  \label{eq:iprm}
  \left|\E[\sign(P(z))] - \E[\sign(P(y))] \right| \leq O(\ell \eps^{1/(2\ell+1)}).
\end{equation}}
\end{corollary}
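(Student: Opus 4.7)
The plan is to derive the corollary from \pref{thm:iprm} by a standard mollification argument: smooth out the sign function, apply the invariance principle to the smooth version, and control the approximation error by anti-concentration of low-degree regular polynomials. Normalize $\|P\|_2 = 1$ throughout.

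First, for a parameter $\delta > 0$ to be optimized later, I would fix a smooth function $\psi_\delta \from \R \to [-1,1]$ with $\psi_\delta(t) = \sign(t)$ for $|t| \ge \delta$, $|\psi_\delta - \sign| \le 2 \cdot \mathsf{1}_{[-\delta,\delta]}$, and $|\psi_\delta^{(4)}(t)| \le C/\delta^4$ for a universal constant $C$. The rescaled function $(C/\delta^4)^{-1} \psi_\delta$ is $1$-nice, so \pref{thm:iprm} applied to it yields
\[
\bigl|\E[\psi_\delta(P(x))] - \E[\psi_\delta(P(z))]\bigr| \le O\bigl(\ell^2 9^\ell \eps^2/\delta^4\bigr).
\]

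Second, I would control the two replacement errors $|\E[\sign(P) - \psi_\delta(P)]|$ under each of the two distributions. On the hypercube side, since $\sign$ and $\psi_\delta$ only disagree in $[-\delta,\delta]$ (and by at most $2$), we have $|\E[\sign(P(x))] - \E[\psi_\delta(P(x))]| \le 2 \Pr[\,|P(x)| \le \delta\,]$. The standard Carbery--Wright style anti-concentration for degree-$\ell$ multilinear polynomials in Rademacher variables (as used, for instance, in the derivation of the sign-function invariance principle in \cite{MosselOO05}) gives $\Pr[\,|P(x)| \le \delta\,] \le O(\ell \delta^{1/\ell})$. To obtain the same bound on the $\RM$ side, I would transfer the anti-concentration through the invariance principle itself: pick a smooth bump $\phi_\delta$ with $\mathsf{1}_{[-\delta,\delta]} \le \phi_\delta \le \mathsf{1}_{[-2\delta,2\delta]}$ and $|\phi_\delta^{(4)}| \le O(1/\delta^4)$ and apply \pref{thm:iprm} to it, giving
\[
\Pr[\,|P(z)| \le \delta\,] \le \E[\phi_\delta(P(z))] \le \E[\phi_\delta(P(x))] + O(\ell^2 9^\ell \eps^2/\delta^4) \le O(\ell \delta^{1/\ell}) + O(\ell^2 9^\ell \eps^2/\delta^4).
\]

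Combining the three estimates gives
\[
\bigl|\E[\sign(P(x))] - \E[\sign(P(z))]\bigr| \le O(\ell \delta^{1/\ell}) + O(\ell^2 9^\ell \eps^2/\delta^4).
\]
Choosing $\delta$ to balance the two terms, i.e.\ $\delta^{4 + 1/\ell} \asymp \eps^2$, so $\delta = \eps^{2\ell/(4\ell+1)}$, yields a total error of $O(\ell \eps^{2/(4\ell+1)})$, which is bounded by $O(\ell \eps^{1/(2\ell+1)})$ since $\eps \le 1$ and $2/(4\ell+1) \ge 1/(2\ell+1)$. The hypothesis $d \ge \log \ell + 2\log(1/\eps) + 2$ is exactly what is needed to invoke \pref{thm:iprm} at this value of $\eps$.

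The main technical point is having the right Carbery--Wright-type anti-concentration for regular degree-$\ell$ polynomials on the hypercube; once that is in hand (and it is standard), the transfer to the Reed--Muller distribution is essentially free because we already have the invariance principle of \pref{thm:iprm} at our disposal, so the argument is a routine mollification followed by an optimization.
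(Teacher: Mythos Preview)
Your proposal is correct and is essentially the same argument the paper invokes: the paper's one-line proof cites \pref{thm:iprm} together with Lemma~5.8 of \cite{MekaZ10}, and that lemma is precisely the mollification-plus-anti-concentration bridge you have written out in detail. One small wrinkle worth noting: as stated, \pref{thm:iprm} compares the \emph{Gaussian} distribution to $\RM(n,d)$, not the hypercube to $\RM(n,d)$, so to be clean you should either route through Gaussians (Carbery--Wright then requires no regularity assumption, and the hypercube-to-Gaussian step is exactly \pref{th:ipmoo}) or observe that the proof of \pref{thm:iprm} actually establishes the hypercube-to-$\RM$ comparison via \pref{thm:mzgenip}.
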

\begin{proof}
Follows from \pref{thm:iprm} and Lemma 5.8 in \cite{MekaZ10}.
\end{proof}
\eat{
\begin{corollary}\label{cor:iprmbounded}
Let $N = 2^n$ and $d \geq \ell + 2\log(1/\epsilon) + 2$. Let $P:\R^N \rightarrow \R$ be a $\eps$-regular polynomial of degree at most $\ell$. Then, for $x \in_u \sbits^N$, $z \in_u \RM(n,d)$,
\[\left|\E[\zeta \circ P(x))] - \E[\zeta \circ P(z))] \right| \leq 2^{O(\ell)} \sqrt \eps. \]
\end{corollary}}
Finally a similar argument in the proof \pref{thm:iprm}, using a minor modification of the full analysis of the PRG from \cite{MekaZ10} (Theorem 5.17), shows that Reed--Muller codes with $d = \Omega(\ell \log(1/\epsilon))$ fool all degree $\ell$ PTFs. We exclude the proof in this work as we do not need the more general statement in our applications
\begin{theorem}\label{th:rmfoolptf}
There exists a constant $C > 0$ such that the following holds. Let $N = 2^n$ and $d = C \ell \log(1/\epsilon)$. Let $P:\R^N \rightarrow \R$ be a multilinear polynomial of degree at most $\ell$. Then, for $x \in_u \sbits^N$, $z \in_u \RM(n,d)$,
\[\left|\E[\sign(P(x))] - \E[\sign(P(z))] \right| \leq \epsilon.\]
\end{theorem}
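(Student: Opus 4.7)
The plan is to mirror the proof of \pref{thm:iprm}, but upgrade the underlying PTF-fooling result from the regular case (Lemma~5.10 of \cite{MekaZ10b}, used for \pref{thm:iprm}) to the general case (Theorem~5.17 of \cite{MekaZ10}). The central observation remains the same: the uniform distribution over $\RM(n,d)$ can itself be presented as an instantiation of the Meka--Zuckerman pseudorandom generator, so once we allow $d$ to be large enough that the within-bucket distribution has sufficiently high independence, the MZ analysis will directly conclude that $\RM$ fools all (not merely $\eps$-regular) degree-$\ell$ PTFs.

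Concretely, I would set $c = 2\log(1/\eps)$ and reuse verbatim the decomposition from the proof of \pref{thm:iprm}: for $A\in_u\cA$, $Q_1\in_u \cS$, and $Q_2\in \cS'$, write $Q = (Q_1+Q_2)\circ A$; take the hash $h_A(x) = A(x)_{|[c]}$ partitioning $\GF 2^n$ into $t=2^c$ buckets; and take $\cD$ to be the distribution on $\GF 2^{2^{n-c}}$ of evaluations of a uniformly random polynomial of degree at most $d-c$ in $n-c$ variables. Since $\cD$ is uniform over $\RM(n-c,d-c)$, whose dual $\RM(n-c,n-d-1)$ has minimum distance $2^{d-c+1}$, $\cD$ is $(2^{d-c+1}-1)$-wise independent. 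Together with the ``outer'' string $y = Q_2 \circ A$, this presents the uniform distribution over $\RM(n,d)$ in the form \pref{eq:mzgendef} required by the MZ generator.

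Next I would invoke the full MZ analysis for non-regular PTFs. Theorem~5.17 of \cite{MekaZ10} shows that if the within-bucket distribution is $k$-wise independent for some $k = \poly(\ell,\log(1/\eps))$ and the hash family is suitably (almost) pairwise independent, then the generator fools all degree-$\ell$ PTFs to error $\eps$. Non-regularity is absorbed by an iterated-regularity / critical-index argument: the few high-influence coordinates are fooled by the high within-bucket independence, while the residual (now $\eps$-regular) polynomial is fooled exactly as in the regular-case argument already used to prove \pref{thm:iprm}. For $d = C\ell\log(1/\eps)$ with $C$ a sufficiently large absolute constant, $d-c \ge \Omega(\ell\log(1/\eps))$ and hence $\cD$ is $(1/\eps)^{\Omega(\ell)}$-wise independent, which comfortably exceeds the $\poly(\ell,\log(1/\eps))$ requirement of Theorem~5.17. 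The conclusion on $|\E[\sign P(x)] - \E[\sign P(z)]|$ then follows directly.

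The main obstacle is the ``minor modification'' hinted at in the text: in our identification the outer string $y = Q_2\circ A$ is not independent of the hash $h_A$, since both are determined by the shared affine map $A$, whereas the MZ generator as stated allows $y$ to be arbitrary but independent of the bucketing. I would handle this by conditioning on $A$ throughout the MZ argument: for every fixed $A$, the hash $h_A$ and the induced $y = Q_2\circ A$ (with $Q_2 \in_u \cS'$) are fixed conditionally, and the within-bucket randomness $Q_1\in_u\cS$ remains independent of them and supplies exactly the required $k$-wise independent samples; averaging over $A$ and $Q_2$ then preserves the bound. Verifying that the critical-index step and the hybrid argument used for non-regular PTFs in \cite{MekaZ10} remain valid when $y$ is an arbitrary function of the (fixed) hash, rather than genuinely independent of it, is the one place where the proof needs to inspect the MZ argument rather than use it as a black box, but this is a routine adaptation of their hybrid analysis.
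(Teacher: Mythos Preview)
Your proposal is correct and matches the paper's own (sketched) approach: the paper explicitly says that the argument is the same as for \pref{thm:iprm} but with Theorem~5.17 of \cite{MekaZ10} in place of Lemma~5.10, and omits the proof. You have correctly identified the one nontrivial point --- that the ``outer'' string $y=Q_2\circ A$ shares the affine map $A$ with the hash $h_A$ --- and your resolution by conditioning on $A$ (so that for each fixed $A$, $y$ is deterministic and hence trivially independent of the within-bucket randomness $Q_1$) is exactly the right way to absorb this; indeed this is already the mechanism implicitly used in the paper's proof of \pref{thm:iprm} and acknowledged in its footnote about the presence of $y$.
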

\eat{
\begin{theorem}\label{th:mzgenregular}[Theorem 5.2 in \cite{MekaZ10}]
  Let $P:\R^N \rightarrow \R$ be a $\eps$-regular multilinear polynomial of degree at most $\ell$. Then, for $x \in_u \sbits^N$, and $y \in \sbits^N$ generated according to $G_{\cH,\cD}$,
\[ \left|\E[\sign(P(x))] - \E[\sign(P(y))] \right| \leq O(\ell \eps^{1/(2\ell+1)}).\]
\end{theorem}
\Dnote{maybe say that one can also get the full invariance principle in the
  sense of MOO from Lemma 5.10 in \cite{MekaZ10}}
The following corollary of Theorem \ref{th:iprm} follows from an argument simliar to that of Lemma \ref{lm:fooltobounded}.

\begin{corollary}\label{cor:foolbounded}
Let $P$ be an $\epsilon^{2\ell + 1}$-regular $n$-variate multilinear real polynomial of degree at most $\ell$. Let $X \in_u \RM(n,d)$, where $d = 5\log (\ell/\epsilon)$ and $Y \in_u \sbits^N$. Suppose that $\E P(X)^2\le 1$ and $\E\zeta\circ P(X ) \le \eta$. Then, $\E \zeta\circ P(Y)\le \eta + 2^{O(\ell)} \e^{0.9}$.
\end{corollary}
\eat{
\begin{proof}
By using hypercontractivity and $(2^d-1)$-wise independence of $P$, \Dnote{TODO: add reference}, for any $u > 0$,
  \begin{displaymath}
    \Prob{\zeta\circ P(X)>u}
    = \Prob{\abs{P(X)}>\sqrt u+1}
    \le u^{-k/2} \E P(X)^k
    \le u^{-k/2} k^{O(k \ell)},
  \end{displaymath}
where $k \leq 2^d - 1$. Now, by Theorem \ref{th:iprm}, for all $u \in \R$,
  \begin{displaymath}
    \Abs{\Prob{\zeta\circ P(X)>u}-\Prob{\zeta\circ P(Y)>u}} \le O(\ell \e) = \delta \mper
  \end{displaymath}
On the other hand,
  \begin{displaymath}
    \E \zeta \circ P(X) = \int \Prob{\zeta\circ P(X) > u} \du \mper
  \end{displaymath}
Hence, we can bound
  \begin{align*}
    \E \zeta \circ P(X)   &= \int_{u\ge 0} \Prob{\zeta\circ P(X)> u} \du \\
    & = \int_{0\le u\le M} \Prob{\zeta\circ P(X)> u} \du \quad \pm k^{O(k \ell)} \int_{u\ge M} u^{-k/2}\du \\
    & = \int_{0\le u\le M} \Prob{\zeta\circ P(Y)> u} \du \quad \pm O\Paren{ \delta M + k^{O(k\ell)}/M^{k/2-1}} \\
    & = \E \zeta \circ P(Y) \quad \pm O\Paren{ \delta M + k^{O(k\ell)}/M^{k/2-1}} \mper
  \end{align*}
  (In the last step, we used that $\Prob{ \zeta\circ P(Y) > u} \le
  u^{-k/2} k^{O(k \ell)}$.) Choosing $M=k^{O(\ell)}/\delta^{2/k}$ (so that $\delta M = k^{O(k\ell)}/M^{k/2-1}$), we conclude that $\E \zeta\circ P = \E \zeta\circ P
  \pm \delta^{1-2/k} k^{O(\ell)}$.
  Choosing $k = 20$, the error is $2^{O(\ell)}\e^{0.9}$.
  If $\E \zeta \circ P(X)\le \eta$, then $\E \zeta \circ P(Y)\le \eta
  + 2^{O(\ell)} \e^{0.9}$.
\end{proof}
}}

\subsection{Invariance Principles over Codes}
\newcommand{\du}{\diffmacro u}
\label{sec:invar-princ-low}
Our main tool for proving that ``majority is stablest'' result over Reed--Muller codes, \pref{thm:rm-majority-stablest}, was the invariance principle \pref{thm:iprmbounded}. We conjecture that similar results should hold for any linear code with sufficiently large dual distance so that the codewords have bounded independence. In particular, we conjecture that bounded independence fools arbitrary
low-degree polynomial threshold functions (PTFs) over $\sbits^n$. \Mnote{Who conjectured this first?}

The conjecture is known to be true for halfspaces \cite{DiakonikolasGJSV09}, degree two PTFs
\cite{DiakonikolasKN10} and for Gaussians with bounded independence \cite{Kane11c}.

\begin{conjecture}
  \label{conj:bounded-independence-fools-ptfs}
  For all $d\in \N$ and $\e>0$, there exists $k=k(d,\e)$ such that the
  following holds:
  Let $Q$ be an $n$-variate multilinear real polynomial with degree $d$.
  Let $X$ be an $k$-wise independent distribution over $\sbits^n$ and let
  $Y$ be the uniform distribution over $\sbits^n$.
  Then,
  \begin{math}
    \abs{\E \sign\circ Q(X) - \E \sign \circ Q(Y)} \le \e\mper
  \end{math}
\end{conjecture}

Finally, we remark that for the application to ``majority is stablest'' it suffices to show a weaker invariance principle applicable to the $\zeta$ functional. 

\begin{conjecture}
  \label{conj:bounded-independence-preserves-boundedness}
  For all $d\in \N$ and $\e>0$, there exists $k=k(d,\e)$ and $\eta=\eta(\e)$
  such that the following holds:
  Let $Q$ be an $n$-variate multilinear real polynomial with degree $d$.
  Let $X$ be a $k$-wise independent distribution over $\sbits^n$
  and let $Y$ be the uniform distribution over $\sbits^n$.
  Suppose that $\E Q(X)^2\le 1$ and $\E\zeta\circ Q(X ) \le \eta$.
  Then, $\E \zeta\circ Q(Y)\le \e$.
\end{conjecture}

We show in the appendix that \pref{conj:bounded-independence-fools-ptfs} implies \pref{conj:bounded-independence-preserves-boundedness}.

\begin{lemma}\label{lem:fooltobounded}
  Let $X$ be a $20\ell$-wise independent distribution over $\sbits^N$ that
  $\e$-fools every $\tau$-regular degree-$\ell$ PTF.
  Then, for every $\tau$-regular $N$-variate multilinear real polynomial
  $Q$ with degree at most $\ell$ and $\E Q(X)\le 1$, we have for the
  uniform distribution $Y$ over $\sbits^N$,
  \begin{displaymath}
    \E \zeta \circ Q(Y) \le \E \zeta\circ Q(X) +
    2^{O(\ell)}\e^{0.9}
    \mper
  \end{displaymath}

\end{lemma}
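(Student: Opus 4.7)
The plan is to write $\E\zeta\circ Q(Y)$ as a tail integral, compare term-by-term against the corresponding integral for $X$ using the PTF-fooling hypothesis for $u$ in a bounded range, and bound the contribution from large $u$ via hypercontractive moment estimates. First I would record the elementary observation that $\zeta(t) = \max(t-1,-t,0)^2$, so for every $u\ge 0$ the level set
\begin{displaymath}
\set{x : \zeta\circ Q(x) > u}
= \set{x: Q(x) > 1 + \sqrt u} \;\cup\; \set{x : Q(x) < -\sqrt u}
\end{displaymath}
is the disjoint union of two halfspaces of the polynomial $Q$. Since shifting $Q$ by a constant does not change the ``influences'' $\sum_{I\ni i} a_I^2$ on which $\tau$-regularity is defined, each of these two sets is the acceptance region of a $\tau$-regular degree-$\ell$ PTF in the sense of the hypothesis. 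Applying the fooling assumption twice gives, uniformly in $u$,
\begin{displaymath}
  \bigabs{\Pr[\zeta\circ Q(Y) > u] - \Pr[\zeta\circ Q(X) > u]} \le 2\e \mper
\end{displaymath}

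Next, using $\E Z = \int_0^\infty \Pr[Z>u]\du$ for the nonnegative random variable $Z = \zeta\circ Q$, I would split the integral at a truncation point $M$ to be chosen later:
\begin{displaymath}
  \E \zeta\circ Q(Y)
  \le \int_0^M \Pr[\zeta\circ Q(Y)>u]\du
    + \int_M^\infty \Pr[\zeta\circ Q(Y)>u]\du \mper
\end{displaymath}
The first piece is at most $\E\zeta\circ Q(X) + 2\e M$ by the previous paragraph and another application of the tail formula. For the second piece I would invoke the standard $2\to k$ hypercontractive inequality for degree-$\ell$ multilinear polynomials over $\sbits^N$: for any even integer $k\ge 2$, $\E Q(Y)^k \le (Ck)^{k\ell/2} (\E Q(Y)^2)^{k/2}$ for an absolute constant $C$. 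Since $Q^2$ has degree $2\ell\le 20\ell$ and $X$ is $20\ell$-wise independent, $\E Q(Y)^2 = \E Q(X)^2 \le 1$, so Markov's inequality gives $\Pr[\zeta\circ Q(Y)>u] \le \Pr[\abs{Q(Y)}>\sqrt u] \le (Ck)^{k\ell/2} u^{-k/2}$, whence
\begin{displaymath}
  \int_M^\infty \Pr[\zeta\circ Q(Y)>u]\du = O\bigl((Ck)^{k\ell/2}M^{1-k/2}/(k/2-1)\bigr)\mper
\end{displaymath}

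Finally I would choose $k=20$ (the $20\ell$-wise independence hypothesis was set up precisely to accommodate this) and balance the two error terms by taking $M$ with $M^{k/2} \asymp (Ck)^{k\ell/2}/\e$, i.e. $M = 2^{O(\ell)} \e^{-1/10}$. The overall error is then $2\e M + (Ck)^{k\ell/2}M^{1-k/2} = 2^{O(\ell)} \e^{1-2/k} = 2^{O(\ell)}\e^{0.9}$, matching the claim.

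The main obstacle is the verification that the PTF-fooling hypothesis really applies to all the level-set indicators uniformly in $u$; once one notes that shifting $Q$ by a constant leaves both the degree and the regularity parameter unchanged, this is free, and the rest is the routine truncation-and-tail argument above. A minor bookkeeping point is the matching of $\E Q^2$ under $X$ and $Y$, which is where the $20\ell$-wise independence (rather than, say, $2\ell$-wise) is used, to guarantee both that the hypercontractive bound is available under $X$ as well and that $\E Q(X)^2 \le 1$ transfers to $\E Q(Y)^2 \le 1$.
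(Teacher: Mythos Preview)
Your proof is correct and follows essentially the same approach as the paper: write $\E\zeta\circ Q$ as a tail integral, compare the integrals for $X$ and $Y$ on $[0,M]$ via the PTF-fooling hypothesis, bound the tail beyond $M$ by hypercontractive moment estimates with $k=20$, and choose $M=2^{O(\ell)}\e^{-1/10}$ to balance. The paper's write-up bounds the tail under $X$ (invoking $20\ell$-wise independence for hypercontractivity of $Q(X)^{20}$) and then under $Y$, whereas you bound only the tail under $Y$ and use $\int_0^M \Pr[\zeta\circ Q(X)>u]\du \le \E\zeta\circ Q(X)$ directly; this is a minor streamlining of the same argument.
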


\eat{
\subsection{Invariance Principle for Polynomials over Variables with
  Bounded Independence}
\label{sec:invar-princ-low}
In this section, we s
In this section, we give sufficient conditions for an invariance principle
for polynomials over variables with bounded independence.
Such an invariance principle allows us to analyze alphabet reductions and
integrality gaps based on graphs considered in earlier sections (e.g., the
$\RM$ graph).
We remark that our conjectures are formulated qualitatively and in a
general setting.
The implications of the conjecture would still hold if the conjectures are
true in a concrete setting (e.g., in the setting of Reed Muller codes or
only for polynomials without influential coordinates).
Stronger implications (e.g., for \sparsestcut) would follow from
appropriate quantitative versions of our conjectures.

Our first conjecture asserts that if a low degree multilinear polynomial
is close (in $L_2$-norm) to a bounded function over a $k$-wise independent
distribution for large enough $k$, then the polynomial is also close (in
$L_2$-norm) to a bounded function over the uniform distribution

Let $\zeta\from \R\to \R$ be the functional $\zeta(x) = \max\set{\abs{x}-1,0}^2$.
For a random variable $Z$, the expectation $\E \zeta(Z)$ is the
$L_2$-distance of $Z$ to the set of random variables with absolute value
bounded by $1$.

\Dnote{if $\eta$ is allowed to depend on $d$ then I think we know how to
  proof the conjecture}

\begin{conjecture}
  \label{conj:bounded-independence-preserves-boundedness}
  For all $d\in \N$ and $\e>0$, there exists $k=k(d,\e)$ and $\eta=\eta(\e)$
  such that the following holds:
  Let $Q$ be an $n$-variate multilinear real polynomial with degree $d$.
  Let $X$ be an $k$-wise independent distribution over $\sbits^n$
  and let $Y$ be the uniform distribution over $\sbits^n$.
  Suppose that $\E Q(X)^2\le 1$ and $\E\zeta\circ Q(X ) \le \eta$.
  Then, $\E \zeta\circ Q(Y)\le \e$.
\end{conjecture}

We will show that the above conjecture is implied by the following
conjecture, which asserts that bounded independence fools arbitrary
low-degree polynomial threshold functions (PTFs) over $\sbits^n$.
The conjecture is known to be true for halfspaces and degree two PTFs
\cite{DiakonikolasGJSV09,DiakonikolasKN10} and for the case of Gaussians
with bounded independence \cite{Kane11}.

\begin{conjecture}
  \label{conj:bounded-independence-fools-ptfs}
  For all $d\in \N$ and $\e>0$, there exists $k=k(d,\e)$ such that the
  following holds:
  Let $Q$ be an $n$-variate multilinear real polynomial with degree $d$.
  Let $X$ be an $k$-wise independent distribution over $\sbits^n$ and let
  $Y$ be the uniform distribution over $\sbits^n$.
  Then,
  \begin{math}
    \E \sign\circ Q(X) - \E \sign \circ Q(Y) \le \e\mper
  \end{math}
\end{conjecture}

\begin{lemma}
  \pref{conj:bounded-independence-fools-ptfs}
  implies
  \pref{conj:bounded-independence-preserves-boundedness}
\end{lemma}

\begin{proof}
  Let $Q$ be an $n$-variate multilinear real polynomial with degree $d$.
  Let $X$ be an $k$-wise independent distribution over $\sbits^n$ and let
  $Y$ be the uniform distribution over $\sbits^n$.
  Suppose $\E Q(X)^2 \le 1$.
  \pref{conj:bounded-independence-fools-ptfs} implies that for all $t\in
  \R$,
  \begin{displaymath}
    \Abs{\Prob{\zeta\circ Q(X)>u}-\Prob{\zeta\circ Q(Y)>u}} \le O(\e)\mper
  \end{displaymath}
  On the other hand,
  \begin{displaymath}
    \E \zeta \circ Q(X) = \int \Prob{\zeta\circ Q(X) > u} \du
  \end{displaymath}
  Using hypercontractivity \Dnote{TODO: add reference},
  \begin{displaymath}
    \Prob{\zeta\circ Q(X)>u}
    \le \Prob{\abs{Q(X)}>\sqrt u}
    \le u^{-\ell/2} \E Q(X)^\ell
    \le u^{-\ell/2} \ell^{O(\ell d)}\mper
  \end{displaymath}
  (Here, the last step uses $\E Q(X)^\ell \le k^{O(\ell d)} (\E Q(X)^2)^{\ell/2}$,
  which is a consequence of hypercontractivity.)
  Hence, we can bound
  \begin{align*}
    \E \zeta \circ Q(X) %
    &= \int_{u\ge 0} \Prob{\zeta\circ Q(X)> u} \du %
    \\
    & = \int_{0\le u\le M} \Prob{\zeta\circ Q(X)> u} \du %
    \quad \pm \ell^{O(\ell d)} \int_{u\ge M} u^{-\ell/2}\du \\
    & = \int_{0\le u\le M} \Prob{\zeta\circ Q(Y)> u} \du %
    \quad \pm O\Paren{ \e M + \ell^{O(\ell d)}/M^{\ell/2-1}} \\
    & = \E \zeta \circ Q(Y)%
    \quad \pm O\Paren{ \e M + \ell^{O(\ell d)}/M^{\ell/2-1}} \mper
  \end{align*}
  (In the last step, we used that $\Prob{ \zeta\circ Q(Y) > u} \le
  u^{-\ell/2} \ell^{O(\ell d)}$.)
  Choosing $M=\ell^{O(d)}/\e^{2/\ell}$ (so that $\e M = \ell^{O(\ell
    d)}/M^{\ell/2-1}$), we conclude that $\E \zeta\circ Q = \E \zeta\circ Q
  \pm \e^{1-2/\ell} \ell^{O(d)}$.
  Choosing $\ell = 20$, the error is $2^{O(d)}\e^{0.9}$.
  If $\E \zeta \circ Q(X)\le \eta$, then $\E \zeta \circ Q(Y)\le \eta
  + 2^{O(d)} \e^{0.9}$.
  \Dnote{TODO: add comment that we can choose $\e$ as small as we like to
    get the desired conclusion.}
\end{proof}
}

\newcommand{\testt}{\cT_{t,\e}}

\section{Efficient Alphabet Reduction}
\label{sec:twoquerytest}
\label{sec:alphabetreduction}

\newcommand{\rmdt}{\cD^{t}}

The \emph{long code} over a (non-binary) alphabet $Q$ consists of the set
of dictator functions
\begin{math}
  \set{f_1,\ldots,f_N \from Q^N\to Q},
\end{math}
where $f_i(x)=x_i$ for all $x\in Q^N$.

A natural $2$-query test for this code was proposed by Khot \etal
\cite{KhotKMO07} and analyzed in Mossel \etal \cite{MosselOO05}.
The queries of the test are associated with the edges of the
\emph{$\e$-noise graph} on $Q^N$.
In this graph, the weight of an edge $(x,y)$ is its probability in the
following sampling procedure: Sample $x \in Q^N$ uniformly at random and
resample each coordinate of $x \in Q^N$ independently with probability
$\epsilon$ to generate $y \in Q^N$.

In this section, we present a more efficient code that serves as an analogue
for the long code over a non-binary alphabet.
For $n,d \in \N$, let $N = 2^n$ and let $\CC \subseteq \GF2^N$ be the
Reed--Muller code $\RM(n,n-d-1)$ and let $\cD=\cC^\perp \in \GF2^N$ be
its dual $\RM(n,d)$.
Let $\cT\sse \cD$ denote the canonical test set for the code $\cC$ as in \pref{sec:cantestrm}.
%
%

Let $t\in \N$ and let $Q = \GF2^t$.
We define the following distribution $\cT_t$ over $\cD^t$ (the $t$-fold
direct sum of $\cD$, a subspace of $\GF2^{t\cdot N}$),
\begin{itemize}

\item Sample $c$ from the test set $\cT\sse \cD$.
\item Sample $w=(\super w1,\ldots,\super w t)$ from $\GF2^t$ at random.
\item Sample $z = (\super z 1,\ldots,\super z t) \in \cD^t$ by
  setting
  \begin{displaymath}
    z^{(i)} =
    \begin{cases}
      c & \text{if } \super w i=1\mcom \\
      0 & \text{if } \super w i=0\mper
    \end{cases}
  \end{displaymath}
\end{itemize}
Consider the continuous-time random walk on the graph $\Cay(\cD^t,\cT_t)$
with parameter $\e\cdot 2^{d}$ (starting in point $0\in\cD^t$).
Let $\cT_{\e,t}$ be the distribution over $\cD^t$ corresponding to this
random walk.
The Cayley graph $\Cay(\cD^t,\cT_{\e,t})$ will serve us as an analogue of
the $\epsilon$-noise graph on $Q^N$.

\paragraph{Spectrum}

In the following we will demonstrate that (part of) the spectrum of the
Cayley graph $\Cay(\cD^t,\testt)$ corresponds to the spectrum of the
$\epsilon$-noise graph on $Q^N$.
To this end, we recall the spectrum of the $\epsilon$-noise graph on $Q^N$.
First, we define a convenient basis for the functions on $Q=\GF2^t$.
We will denote the coordinates of a vector $\alpha \in Q = \GF2^t$
by $\alpha = (\alpha^{(1)},\ldots,\alpha^{(t)})$.
The set of characters of $\GF2^t$ is $\set{\chi_\alpha\from \GF2^t\to
  \sbits\mid \alpha\in \GF2^t}$, where
\begin{displaymath}
  \chi_{\alpha} (x) = (-1)^{\sum_{j} \super \alpha j \super x j}\mper
\end{displaymath}

Since the noise graph on $Q^N$ is a Cayley graph over the abelian group
$\GF2^{tN}$, the characters of this group form a basis of eigenfunctions.
For $\beta = (\beta_1,\ldots,\beta_N) \in Q^N$, let $\chi_\beta\from Q^N\to
\sbits$ denote the character
\begin{displaymath}
 \chi_{\beta}(x_1,\ldots,x_N) = \prod_{i \in [N]} \chi_{\beta_i}(x_i) \mper
\end{displaymath}
The eigenvalue of $\chi_\beta$ in the $\e$-noise graph on $Q^N$
$(1-\epsilon)^{\wt(\beta)}$ where $\wt(\beta) = \abs{\set{i \mid \beta_i \neq
0^t}}$ is the Hamming weight of $\beta$ as a length-$N$ string over
alphabet $Q$. (In this section $\wt(\cdot)$ will always refer to the Hamming weight of
strings over alphabet $Q$.)

The canonical eigenfunctions of $\Cay(\cD^t,\cT_t)$ and
$\Cay(\cD^t,\testt)$ are indexed by $\beta \in Q^N/\cC^t$.
(Note that $\cC^t$ is the orthogonal complement of $\cD^t$.)
\Dnote{abbreviation is not used (anymore):
  For the sake of convenience, we just write $Q^N/\cC$ instead of
  $Q^N/\cC^t$.}
Analogous to the definition in \pref{sec:Cayley-graph-codes}, we define the
degree of a character $\chi_\beta\from \cD^t\to \sbits$ for $\beta \in
Q^N/\cC^t$ as,
\begin{displaymath}
\deg(\chi_\beta) = \wt(\beta)=\min_{\beta' \in \beta} \wt(\beta') \mcom
\end{displaymath}
where $\wt(\beta') = \card{\set{i\in[N] \mid \beta'_i \neq 0^t}}$ is the
Hamming weight of $\beta'$ seen as a length-$N$ string over alphabet $Q$.
(Here, the minimum is over all $\beta'\in Q^N$ that lie in the same coset as $\beta$ in $Q^N/\cC^t$.)
\Dnote{the notations for degree and weight in this paragraph are somewhat
  ambiguous. strictly speaking one should index the notations by $Q$ (the
  alphabet we use for measuring the Hamming distance). }

The following lemma is an analogue of Lemmas \ref{lem:smooth} and \ref{lem:rm-eigenvalues} and shows that the eigenvalues of $\Cay(\cD^t,\cT_t)$ are similar to the 
eigenvalues of the $\epsilon$-noise graph.

\begin{lemma}
  Let $\beta \in Q^N/\cC^t$.
  The eigenvalue $\lambda_\beta$ of the character $\chi_{\beta}$ in the graph
  $\Cay(\cD^t,\cT_t)$ satisfies $\lambda_{\beta} = 1-\wt(\beta)/2^d\pm
  O(\wt(\beta)/2^d)^2$
  and $\lambda_\beta\le 1-\Omega(1/t)\cdot \min\set{\vbig\wt(\beta)\cdot 2^{-d},1}$.
\end{lemma}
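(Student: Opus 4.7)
\medskip
\noindent\textbf{Proof proposal.} The plan is to first derive an explicit expression for $\lambda_\beta$ and then prove the two bounds separately, using $2$-smoothness of $\cT_\RM$ for the fine estimate near $1$ and the BKSSZ soundness theorem (\pref{thm:BKSSZ}) for the worst-case bound.

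First I would compute $\lambda_\beta = \E_{z\sim \cT_t} \chi_\beta(z)$ directly from the definition of $\cT_t$. Writing $\beta = (\beta^{(1)},\ldots,\beta^{(t)})$ with $\beta^{(j)} \in \GF 2^N$, the character factors as $\chi_\beta(z) = \prod_{j} (-1)^{w^{(j)} \iprod{\beta^{(j)},c}}$. Averaging first over the independent uniform bits $w^{(j)}$ collapses each factor to $\mathbf{1}[\iprod{\beta^{(j)},c}=0]$, yielding
\begin{equation*}
  \lambda_\beta
  = \Pr_{c\sim \cT_\RM}\Bigl[\,\forall j,\ \iprod{\beta^{(j)},c}=0\,\Bigr]
  = \Pr_{c\sim \cT_\RM}\Bigl[\,\textstyle\sum_{i:\, c_i=1} \beta_i = 0 \in Q\,\Bigr],
\end{equation*}
where in the last expression we think of $\beta = (\beta_1,\ldots,\beta_N) \in Q^N$ and $S = \set{i : \beta_i\neq 0}$ is its $Q$-support, so $\card S = \wt(\beta)$.

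For the first bound, I would use the $2$-smoothness of $\cT_\RM$ (each coordinate is queried with probability $\tau = 2^{-d}$ and each pair with probability $\approx \tau^2$) together with Bonferroni. Let $E_0$ be the event that $c_i = 0$ for every $i\in S$. Inclusion--exclusion gives $\Pr[E_0] = 1 - \card S\tau \pm O((\card S\tau)^2)$. On $E_0^c$, the sum $\sum_{i\in S,\, c_i=1}\beta_i$ can only vanish if at least two coordinates of $S$ are hit, an event of probability at most $\binom{\card S}{2}\tau^2$ by a union bound using $2$-smoothness. Combining these gives $\lambda_\beta = 1 - \wt(\beta)\cdot 2^{-d} \pm O((\wt(\beta)\cdot 2^{-d})^2)$, as claimed.

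For the second bound, I would use the simple observation that for every index $j$,
\begin{equation*}
  \lambda_\beta \le \Pr_c[\iprod{\beta^{(j)},c}=0] = 1 - s_{\cT_\RM}(\beta^{(j)}),
\end{equation*}
so it suffices to find one $j^\star$ for which $\beta^{(j^\star)}$ is far from $\cC$. I claim $\max_j \Delta(\beta^{(j)},\cC) \ge \wt(\beta)/t$: picking, for each $j$, $c^{(j)} \in \cC$ achieving the minimum Hamming distance $D_j = \Delta(\beta^{(j)},\cC)$, a union bound over coordinates gives $\wt_Q(\beta - (c^{(1)},\ldots,c^{(t)})) \le \sum_j D_j$, and since the left side is at least $\wt(\beta)$ we get $\wt(\beta)\le t\max_j D_j$. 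With this $j^\star$, \pref{thm:BKSSZ} yields $s_{\cT_\RM}(\wt(\beta)/t) = \Omega(\min(\wt(\beta)/(t\cdot 2^d),\,1))$, which after factoring out $1/t$ gives the desired bound $\lambda_\beta \le 1 - \Omega(1/t)\cdot \min\set{\wt(\beta)\cdot 2^{-d},\,1}$.

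The main delicate point will be the pigeonhole step comparing $\wt_Q(\beta)$ (the minimum-weight representative over the coset $\beta+\cC^t$, measured over the alphabet $Q$) with the $\GF2$-distances $\Delta(\beta^{(j)},\cC)$. The inequality $\wt(\beta) \le \sum_j \Delta(\beta^{(j)},\cC)$ is the source of the $1/t$ loss in the second bound and cannot be avoided in general, since each $\beta^{(j)}$ can be independently shifted by an element of $\cC$ but the $Q$-weight counts a coordinate once regardless of how many layers contribute to it.
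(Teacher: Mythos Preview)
Your proposal is correct and follows essentially the same line as the paper's proof: expand $\lambda_\beta$ by first averaging over $w$, then split into a fine inclusion--exclusion estimate using $2$-smoothness of $\cT_\RM$ for the $1-\wt(\beta)/2^d\pm O(\wt(\beta)/2^d)^2$ bound, and a worst-case bound via BKSSZ for the $1-\Omega(1/t)\min\{\wt(\beta)2^{-d},1\}$ inequality.

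One small point worth noting: your pigeonhole step for the second bound is actually cleaner than the paper's. The paper picks the minimum $Q$-weight representative and asserts that some $\beta^{(j)}$ has \emph{binary Hamming weight} at least $\wt(\beta)/t$, then invokes BKSSZ; but BKSSZ is stated in terms of $\Delta(\beta^{(j)},\cC)$, which need not coincide with the Hamming weight of that particular representative. Your argument---shifting each $\beta^{(j)}$ to its nearest codeword and observing that the resulting element of $\beta+\cC^t$ has $Q$-weight at most $\sum_j \Delta(\beta^{(j)},\cC)$, hence $\wt(\beta)\le t\max_j \Delta(\beta^{(j)},\cC)$---sidesteps this and gives exactly what the soundness theorem needs. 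The final step ``factoring out $1/t$'' amounts to the elementary inequality $\min(x/t,1)\ge (1/t)\min(x,1)$, which you may want to state explicitly.
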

\begin{proof}
  We will first prove an upper bound on $\lambda_\beta$ for the case that
  $\wt(\beta)\gg 2^d$.
  We write $\beta=(\super \beta1,\ldots,\super \beta t)$ with $\super \beta i\in
  \GF2^N$.
  Let $z=(\super z1,\ldots,\super zt)\in \cD^t$ be a string drawn from the
  distribution $\cT_t$.
  Note that $\super z i = \super w i\cdot c$, where $w=(\super
  w1,\ldots,\super w t)$ and $c$ are sampled as in the definition of
  $\cT_t$.
  Since $w$ is a random vector in $\GF2^t$, we can upper bound
  $\lambda_\beta$,
  \begin{align*}
    \lambda_\beta &= \E_{z}(-1)^{\iprod{\beta,z}} \\
    &=   1-2\Prob[w\in \GF2^t,~c\in\cT]{\tsum_{i=1}^t \super wi\iprod{\super \beta i,c}=1}\\
    &= 1-\Prob[c\in \cT]{\exists i.~\iprod{\super \beta i,c}=1}\\
    &\le 1-\max_{i\in[t]} \Prob[c\in \cT]{\iprod{\super \beta i,c}=1}
    \mper
  \end{align*}
  Without loss of generality, we may assume that $\super \beta t$ has
  Hamming weight (as a binary string) at least $\wt(\beta)/t$.
  By \pref{thm:BKSSZ}, if $\wt(\beta)>\eta 2^{-d}$ for sufficiently small
  $\eta>0$, we can upper bound $\lambda_\beta\le 1-\Omega(\eta/t)$.

  Next, we will estimate $\lambda_\beta$ (from below and above) for
  $\wt(\beta)\ll 2^{-d}$.
  Let $I\sse [N]$ be the set of coordinates $i\in [N]$ with $\beta_i\neq
  0^t$.
  We claim,
  \begin{displaymath}
    \lambda_\beta = 1-\Prob[c]{\card{I\cap \supp(c)}=1}
    \pm O(1)\cdot\Prob[c]{\card{I\cap \supp(c)}\ge 2}\mper
  \end{displaymath}
  We write $\beta=(\beta_1,\ldots,\beta_N)$ with $\beta_i\in \GF2^t$.
  Then, $\iprod{\beta,z} = \sum_{i\in [N]} c_i \iprod{w,\beta_i}$.
  We refine the event $\iprod{\beta,z}=1$ according to the cardinality of
  $I\cap \supp(c)$.
  If $I\cap \supp(c)=\eset$, then $\iprod{\beta,z}=0$.
  On the other hand, conditioned on $\card{I\cap \supp(c)}$, the event
  $\iprod{\beta,z}=1$ is equivalent to the event $\iprod{w,\beta_{i_0}}=1$
  with $\set{i_0}=I\cap \supp(c)$.
  Since $\beta_{i_0}\neq 0^t$, this event has (conditional) probability
  $\half$.
  Hence,
  \begin{displaymath}
    \Prob[z]{\vbig \iprod{\beta,z}=1}
    = \tfrac12 \Prob[c\in \cT]{\vbig \card{I\cap \supp(c)} = 1 }
    \pm \Prob[c \in\cT]{\vbig \card{I\cap \supp(c)} \ge 2 }\mcom
  \end{displaymath}
  which implies the claimed estimate for $\lambda_\beta$.

  It remains to estimate the distribution of $\card{I\cap \supp(c)}$.
  The argument is similar to the proof of \pref{lem:smooth}.
  For every coordinate $i\in [N]$, we have $\Prob[c\in\cT]{c_i=1}=2^{-d}$.
  Thus, $\Prob{\,\card{I\cap \supp(c)}=1} \le \card{I}\cdot
  2^{-d}=\wt(\beta)/2^d$.
  On the other hand, for any two distinct coordinates $i\neq j\in [N]$, we
  have $\Prob[c\in\cT]{c_i=c_j=1}=2^{-2d}$.
  Therefore,
  \begin{displaymath}
    \Prob{\,\card{I\cap \supp(c)}=1}\ge
    \sum_{i\in I}\Prob{c_i=1}-\sum_{i<j\in I}\Prob{c_i=c_j=1}
    \ge \wt(\beta)/2^{d} -  (\wt(\beta)/2^d)^2.
  \end{displaymath}
  Similarly, $\Prob{\,\card{I\cap \supp(c)} \ge 2}\le (\wt(\beta)/2^d)^2$.
  We conclude that
  \begin{displaymath}
    \lambda_\beta = 1 - \wt(\beta)/2^d \pm O(\wt(\beta)/2^d)^2
  \end{displaymath}
  (Note that the estimate is only meaningful when $\wt(\beta)\ll 2^d$.)
\end{proof}

If the character $\chi_\beta$ has eigenvalue $\lambda_\beta$ in the graph
$\Cay(\cD^t,\cT_t)$, then it has eigenvalue $e^{-\e(1-\lambda_\beta)/2^d}$
in $\Cay(\cD^t,\testt)$.
Similarly to \pref{lem:rm-eigenvalues}, the eigenvalue of a character
$\chi_\beta$ is close to $e^{-\e\wt(\beta)}$ in the graph
$\Cay(\cD^t,\testt)$.

\begin{lemma}
  \label{lem:eigenvalues-larger-alphabet}
  \begin{itemize}
  \item  If $\wt(\beta)\le \delta^2 2^d$ for sufficiently small $\delta$, then the
  character $\chi_\beta$ has eigenvalue $e^{-\e\cdot \wt(\beta)}\pm \delta$ in
  the graph $\Cay(\cD^t,\testt)$.
  \item For an absolute constant $c_0$ and all $\beta \in Q^N/\rmdt$,  $\lambda_{\beta} \leq \max(\rho^{\wt(\beta)/c_0t},\rho^{
	  2^d/c_0t})$.
  \end{itemize}
  \Dnote{check what's the right bound on $\wt(\beta)$. will probably depend
  on $t$}
\end{lemma}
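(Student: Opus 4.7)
The plan is to derive both items directly from the previous lemma by transforming its eigenvalue estimates through the continuous-time random walk. If $\mu_\beta$ denotes the eigenvalue of $\chi_\beta$ in $\Cay(\cD^t,\cT_t)$, then the corresponding eigenvalue in $\Cay(\cD^t,\testt)$ is $\lambda_\beta=e^{-\e\cdot 2^d(1-\mu_\beta)}$, since the continuous-time walk runs for time $\e\cdot 2^d$.

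For the second item I use the coarse bound $\mu_\beta\le 1-\Omega(1/t)\cdot \min\set{\wt(\beta)\cdot 2^{-d},\,1}$ from the previous lemma. If $\wt(\beta)\le 2^d$, the minimum equals $\wt(\beta)/2^d$, yielding $\lambda_\beta\le e^{-\Omega(\e \wt(\beta)/t)}=\rho^{\Omega(\wt(\beta)/t)}$. If instead $\wt(\beta)>2^d$, the minimum equals $1$, so $\lambda_\beta\le e^{-\Omega(\e\cdot 2^d/t)}=\rho^{\Omega(2^d/t)}$. Choosing $c_0$ large enough to absorb the implicit constants and taking the maximum of the two bounds gives the claim.

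For the first item I use the sharper estimate $\mu_\beta=1-k\tau\pm O((k\tau)^2)$, where I set $k=\wt(\beta)$ and $\tau=2^{-d}$. The hypothesis $k\le \delta^2 2^d$ gives $k\tau\le \delta^2$, so writing $1-\mu_\beta=k\tau\bigl(1\pm O(k\tau)\bigr)$ yields
\[
\lambda_\beta = e^{-\e k}\cdot e^{\pm O(\e k^2\tau)}\mper
\]
I then split on the size of the perturbation exponent, mirroring the proof of \pref{lem:rm-eigenvalues}. If $\e k^2\tau\le \delta/10$, then $e^{\pm O(\e k^2\tau)}=1\pm O(\delta)$, so $\lambda_\beta=e^{-\e k}(1\pm O(\delta))=e^{-\e k}\pm O(\delta)$ after multiplication by $e^{-\e k}\le 1$. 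Otherwise $\e k^2\tau>\delta/10$; combined with $k\tau\le \delta^2$, this forces $\e k>1/(10\delta)$, so $e^{-\e k}\le \delta/4$ for sufficiently small $\delta$. On the other side, the $O((k\tau)^2)$ correction is dominated by $k\tau/2$ in the regime $k\tau\le \delta^2$, giving the genuine lower bound $1-\mu_\beta\ge k\tau/2$ and hence $\lambda_\beta\le e^{-\e k/2}\le \delta/4$. Both quantities are at most $\delta/4$, so their difference is at most $\delta$.

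The only real subtlety is arranging two-sided control in the first item: the quadratic error from the previous lemma must simultaneously give an upper and a lower bound on $1-\mu_\beta$ good enough to sandwich $\lambda_\beta$ around $e^{-\e k}$. This is ensured by the hypothesis $k\le \delta^2 2^d$, which turns the quadratic correction into a $(1\pm O(\delta))$-multiplicative perturbation. Beyond this, both items are straightforward consequences of the eigenvalue formula for continuous-time walks applied to the estimates of the preceding lemma.
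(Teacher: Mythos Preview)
Your proof is correct and follows exactly the approach the paper intends: the paper does not give an explicit proof of this lemma, instead writing ``Similarly to \pref{lem:rm-eigenvalues}\ldots'' and leaving the details implicit. You have filled in precisely those details, transporting the estimates of the preceding lemma through the continuous-time walk $\mu_\beta\mapsto e^{-\e\cdot 2^d(1-\mu_\beta)}$ and performing the same two-case split (on the size of $\e k^2\tau$) that appears in the proof of \pref{lem:rm-eigenvalues}. The paper's displayed formula $e^{-\e(1-\lambda_\beta)/2^d}$ just before the lemma is a typo; your exponent $\e\cdot 2^d(1-\mu_\beta)$ is the correct one, matching the walk parameter $\e\cdot 2^d$.
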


\paragraph{Influences}

Let $\beta\in Q^N/\cC^t$.
Suppose $\wt(\beta)< \wt(\cC^t)/2$.
(Note that $\cC^t\sse Q^N$ has the same minimum distance as $\cC\sse
\GF2^N$. \Mnote{This is not clear; you could lose a factor of $t$ when changing alphabet.})
In this case, we will identify $\beta$ with the (unique) codeword of
minimum weight in the equivalence class $\beta\in Q^N/\cC^t$.

\begin{definition}
  For a function $f \from \cD^t \to \R$, a coordinate $i \in [N]$, and a
  degree bound $\ell < \dist(\cC^t)/2)$, we define the \emph{$\ell$-degree
    influence of coordinate $i$ on $f$} as
  \begin{displaymath}
    \Inf_i^{\leq \ell}(f) = \sum_{\beta \in Q^N/\cC^t, ~\beta_i \neq 0^t,~
      \wt(\beta) \leq \ell}
    \hat{f}(\beta)^2 \mper
  \end{displaymath}
  (Here, $\beta_i$ refers to the $i$-th coordinate of the unique
  minimum-weight representative of the equivalence class $\beta$.)
\end{definition}

\subsection{Majority is Stablest}\label{sec:mosq}
In this section, we show an analogue of the majority is stablest theorem of \cite{MosselOO05} on the $\epsilon$-noise graph on $Q^N$ just as \pref{thm:rm-majority-stablest} showed an analogue of the majority is stablest theorem over the Boolean noise graph.
\begin{theorem} \label{thm:rm-majority-stablest-Q}
	\Pnote{just copied the parameters from the other majority is
	stablest}
For every $\epsilon,\delta,t > 0$, there exists $L,d,\tau$ such that if
$G$ denotes the graph $\Cay(\rmdt,\testt)$ constructed using Reed-Muller codes of
degree $d$, then
 for every function $f
\from \rmdt \to [0,1]$ with $\max_{i \in [N]} \Inf_i^{\leq L}(f) <
\tau$,
  \begin{equation}
    \label{eq:rmmjs-Q}
    \iprod{f,Gf} \le \Gamma_\rho(\mu) + \delta,
  \end{equation}
where $\rho = e^{-\epsilon}$, $ \mu = \E_{x \sim
\rmdt}[f(x)]$ and $\Gamma_\rho\from \R \to \R$ is the noise stability
curve over Gaussian space.
\end{theorem}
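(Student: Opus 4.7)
The plan is to prove \pref{thm:rm-majority-stablest-Q} by mimicking the structure used for the binary analogue \pref{thm:rm-majority-stablest}, but substituting in the $Q$-ary spectrum estimates of \pref{lem:eigenvalues-larger-alphabet} and a product version of the invariance principle \pref{thm:iprmbounded}. First I would expand $f$ in the character basis $\{\chi_\beta:\beta\in Q^N/\cC^t\}$ and split $f = f^{\le L} + f^{>L}$ according to the Hamming weight $\wt(\beta)$ (over alphabet $Q$). By Parseval and the second bullet of \pref{lem:eigenvalues-larger-alphabet}, every character appearing in $f^{>L}$ has eigenvalue at most $\max(\rho^{L/c_0 t},\rho^{2^d/c_0 t})$ in $G$, so $\iprod{f^{>L},G f^{>L}}$ is at most $\delta/3$ once we choose $L = \Theta(t\log(1/\delta)/\e)$ and $d = \Omega(\log L)$ large enough.

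For the low-degree part, I would follow the Mossel--O'Donnell--Oleszkiewicz truncation-and-invariance template. View $\cD^t$ as a subspace of $\GF2^{tN}$, so that each surviving character $\chi_\beta$ with $\wt(\beta)\le L$ is a multilinear monomial of total degree at most $Lt$ in the $tN$ bit variables. The hypothesis $\Inf_i^{\le L}(f)<\tau$ (combined with the standard ``noise-smoothing'' trick from \cite{MosselOO05} that replaces $f$ by $T_{1-\gamma}f$ for tiny $\gamma$) implies that the resulting polynomial $P$ on $\GF2^{tN}$ is $\tau^{\Omega(1)}$-regular. The invariance principle \pref{thm:iprmbounded} then lets me replace the distribution of $P$ over a uniform codeword in $\cD^t$ by its distribution over a uniform point in $\GF2^{tN}$ (in the $\zeta$-functional sense), so that $f$ can be treated as a bounded function on the full product space $Q^N$, and $\iprod{f^{\le L},G f^{\le L}}$ can be compared to $\iprod{f,T_\rho^{(Q)} f}$ on the $Q$-ary $\rho$-noise graph using the first bullet of \pref{lem:eigenvalues-larger-alphabet} (which shows the eigenvalue matches $\rho^{\wt(\beta)}$ within $\delta$ for $\wt(\beta)\le \delta^2 2^d$). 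Finally, I would invoke the $Q$-ary Majority is Stablest theorem of \cite{MosselOO05}, whose invariance + Borell argument yields the $\Gamma_\rho(\mu)$ bound.

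The main obstacle is upgrading the single-code invariance principle \pref{thm:iprmbounded} to its product-space analogue over $\cD^t$. The cleanest route is to reprove \pref{thm:iprm} for $\cD^t$: each of the $t$ coordinate blocks is an independent copy of the Meka--Zuckerman-type generator identified in the proof of \pref{thm:iprm}, so the output of a uniform $\cD^t$-sample is also a valid instantiation of the MZ generator on $\GF2^{tN}$ with a partition refining the original one (independent hash functions on each of the $t$ blocks, and independent $(2^{d-c}-1)$-wise distributions within each bucket). This lets \pref{thm:mzgenip} be invoked directly with degree parameter $L$ and error $\e = \tau^{\Omega(1)}$; the only quantitative cost is that the regularity requirement on $P$ tightens by a factor polynomial in $t$, which is absorbed into the choice of $\tau = \tau(\e,\delta,t)$ promised in the statement. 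Once this extension is in hand, combining the three bounds (high-degree tail, invariance error, and the $Q$-ary MOO bound) yields $\iprod{f,G f}\le \Gamma_\rho(\mu)+\delta$ as required.
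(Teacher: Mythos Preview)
Your proposal is correct and follows essentially the same approach as the paper's proof. The paper likewise splits into low/high-degree parts, uses the eigenvalue estimates of \pref{lem:eigenvalues-larger-alphabet}, extends the invariance principle to the product code $\cD^t$ (stated as a separate theorem whose proof is exactly your observation that $\RM(n,d)^t$ is again an instantiation of the Meka--Zuckerman generator), and then invokes the $Q$-ary Majority-is-Stablest from \cite{MosselOO05}; the only cosmetic difference is that the paper writes the smoothing as $g=G^\gamma f$ rather than via the $T_{1-\gamma}$ operator, which is equivalent in this setting.
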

Given the characterization of the spectrum of $\Cay(\rmdt,\testt)$
(\pref{lem:eigenvalues-larger-alphabet}), the proof of
\pref{thm:rm-majority-stablest-Q} is similar to that of
\pref{thm:rm-majority-stablest}.  For the sake of completeness, we
include a proof sketch in the appendix -- see \pref{sec:mosqapp}. \Mnote{Moved the proof to the appendix as it is very similar to the earlier proof, which is also moved to the appendix.}re.

\subsection{$2$-Query Test}

We will now describe a dictatorship test for functions on $\rmdt$,
analogous to the $2$-query dictatorship test on $\epsilon$-noise
graph.

We are interested in functions $f\from \cD^t \to Q$ where $Q =
\GF2^t$.  Note that $v \in \cD^t$ can also be thought of as
$v \in Q^N$.  For all $\beta \in \GF2^n$, the {\em $\beta^{\th}$ dictator
function} $\chi_{\beta}$ from $\rmdt \subseteq Q^N$ to $Q$ is given by,
$$ \chi_{\beta}(c) = c_\beta $$
Clearly, the dictator functions are linear functions over
$\rmdt$, i.e., $\chi_{\beta}(c+c') = \chi_{\beta}(c) +
\chi_{\beta}(c')$.  This linearity is used to perform the $2$-query test 
via {\it folding}.  Note that for each $\alpha \in Q$, the constant
function $\alpha(x) =
\alpha$ for all $x \in \GF2^n$, belongs to the code $\rmdt$.  We will {\it fold} the function
by enforcing that for all $\alpha \in Q$, $f(c+ \alpha) = f(c) +
\alpha$ for all $\alpha \in Q$.

The details of the $2$-query dictatorship test is described below.
\begin{mybox}
	$\mathsf{DICT}$

	Input: $f\from \rmdt \to Q$

	\paragraph{Folding}  The function is assumed
to satisfy $f(c+r) = f(c) + r$  for every $c \in \rmdt$ and $r \in Q$.  This is enforced
by {\it folding} the table of the function $f$.

	\begin{itemize}
	\item Sample a vertex $c \in \rmdt$.

	\item Sample a neighbour $c' \in \rmdt$ of the vertex $c$ in the Cayley graph
		$\Cay(\rmdt,\testt)$.
		
	\item Sample $r \in Q$ uniformly at random.

	\item Accept if $f(c+r) - r   = f(c')$
	\end{itemize}
\end{mybox}

Given a function $f \from \rmdt \to Q$, we can arithmetize the value
of the test in terms of $Q$ functions $\{f_{\alpha}\}_{\alpha \in Q}$
that are defined as
$$ f_\alpha(x) = \Ind[f(x) = \alpha]  \mper$$
Due to folding, we have $f_\alpha(x) = f_{\alpha+r}(x+r)$ for all $r
\in Q$.  For each $\alpha \in Q$, the expectation of $f_{\alpha}$ is given by,
$$ \E_{c \in \rmdt} f_\alpha(c) = \Pr_{c\in \rmdt,r \in Q}
\left[f(c+r) = \alpha \right] = \frac{1}{Q} \mper$$
where we used the fact that $f$ is folded.
The probability of acceptance of the $2$-query test can be written
in terms of the functions $f_{\alpha}$ as follows:
$$ \Pr[\text{Test accepts} f] = \sum_{\alpha \in Q} \E_{(c,c') \sim
\Cay(\rmdt,\testt)}\left[f_{\alpha + r}(c+r)f_{\alpha}(c')\right] = \sum_{\alpha \in Q} \E_{(c,c') \sim
\Cay(\rmdt,\testt)}\left[f_{\alpha}(c)f_{\alpha}(c')\right],$$
where $(c,c') \sim \Cay(\rmdt,\testt)$ denotes a uniformly random edge in the graph $\Cay(\rmdt,\testt)$.

\begin{theorem} \label{thm:twoquerytest}
	The $2$-query dictatorship test $\mathsf{DICT}$ described
	above satisfies the following completeness and soundness,
	\begin{itemize}
	\item (Completeness)  Every dictator function
		$\chi_{\beta}(x) = x_i$ is accepted by the
		test with probability at least $1-\epsilon$.
	\item (Soundness) 		
		For every $\delta > 0$, there exists $\tau, L$
		such that if $f$ satisfies
		$\max_{i \in [N]} \Inf^{\leq L}_{i}(f_{\alpha}) \leq \tau$ for all
		$\alpha \in Q$ then $f$ is accepted with probability at
		most
		$$Q\cdot \Gamma_{\rho}\left(\tfrac{1}{Q}\right) + \delta \mcom$$
		where $\rho = e^{-\eps}$.
	\end{itemize}
\end{theorem}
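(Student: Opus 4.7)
}

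The plan is to analyze both parts of the test through the arithmetization already given just before the theorem, namely
\begin{displaymath}
\Pr[\text{Test accepts } f] \;=\; \sum_{\alpha \in Q} \E_{(c,c')\sim \Cay(\rmdt,\testt)}\bigl[f_\alpha(c)\, f_\alpha(c')\bigr]
\;=\; \sum_{\alpha\in Q}\iprod{f_\alpha,\,G f_\alpha}\mcom
\end{displaymath}
where $G$ denotes the Cayley graph $\Cay(\rmdt,\testt)$. So both completeness and soundness reduce to understanding $\iprod{f_\alpha,Gf_\alpha}$.

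For \textbf{completeness}, suppose $f=\chi_\beta$ for some $\beta\in [N]$; then $f_\alpha(c)=\Ind[c_\beta=\alpha]$. A short Fourier calculation on the group $Q$ (using that $c_\beta$ is uniform in $Q$ when $c\in\rmdt$ is uniform, since each coordinate of a uniformly random $\RM(n,d)$ codeword is uniform in $\GF2$) collapses the sum to the identity $\Pr[\text{Test accepts } \chi_\beta]=\Pr_{q\sim \testt}[q_\beta=0]$. To bound this from below, I would use the explicit form of $\testt$ as a continuous-time walk with parameter $\e 2^d$ along $\cT_t$: a single $\cT_t$-step chooses $c\in\cT$ and $w\in\GF2^t$ and sets $z_\beta=c_\beta\cdot w\in Q$, so $\Pr[z_\beta\neq 0]=\Pr[c_\beta=1]\cdot (1-1/Q)\le 2^{-d}$ by smoothness of the Reed--Muller tester $\cT$. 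Since the number of $\cT_t$-steps is Poisson with mean $\e 2^d$, a union bound gives $\Pr[q_\beta\neq 0]\le \e$, yielding the desired completeness.

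For \textbf{soundness}, folding forces $f_\alpha(x+r)=f_{\alpha-r}(x)$ for all $r\in Q$, and hence $\E_{x\in\rmdt}f_\alpha=1/Q$ for every $\alpha\in Q$. Given $\delta>0$, invoke \pref{thm:rm-majority-stablest-Q} with error parameter $\delta/Q$ (and with $\epsilon$ as in the statement, so that $\rho=e^{-\epsilon}$) to obtain constants $L=L(\epsilon,\delta,Q)$, $d=d(\epsilon,\delta,Q)$, and $\tau=\tau(\epsilon,\delta,Q)$. The hypothesis $\max_i \Inf^{\le L}_i(f_\alpha)\le \tau$ for each $\alpha\in Q$ then yields
\begin{displaymath}
\iprod{f_\alpha,Gf_\alpha} \;\le\; \Gamma_\rho(1/Q) + \delta/Q\mper
\end{displaymath}
Summing over $\alpha\in Q$ gives $\Pr[\text{Test accepts}] \le Q\cdot \Gamma_\rho(1/Q) + \delta$, as required.

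The only real subtlety is the soundness step: one must verify that the Reed--Muller ``majority is stablest'' statement \pref{thm:rm-majority-stablest-Q}, which was stated for $[0,1]$-valued functions, applies to the indicators $f_\alpha$ with the claimed influence hypothesis and with the walk parameter matching $\rho=e^{-\epsilon}$. Everything else---the Fourier reduction for completeness and the walk-length union bound---is a routine calculation using the explicit description of $\testt$ together with \pref{lem:eigenvalues-larger-alphabet}; I expect no new obstacles beyond choosing the parameters $L,d,\tau$ in the correct order of quantification so that a single choice works uniformly in $\alpha\in Q$.
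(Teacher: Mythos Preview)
Your proposal is correct and follows essentially the same approach as the paper's own proof. For completeness, both you and the paper reduce to showing $\Pr_{q\sim\testt}[q_\beta=0]\ge 1-O(\epsilon)$; your Poisson/union-bound calculation makes explicit what the paper asserts in one line by appealing to the analogous property of $\cT_\epsilon$. For soundness, both arguments simply apply the Reed--Muller ``majority is stablest'' theorem to each $[0,1]$-valued indicator $f_\alpha$ (which has mean $1/Q$ by folding) with error budget $\delta/Q$ and sum over $\alpha\in Q$; you correctly invoke \pref{thm:rm-majority-stablest-Q} (the $Q$-alphabet version), whereas the paper's citation of \pref{thm:rm-majority-stablest} appears to be a typographical slip since the functions live on $\rmdt$ and the graph is $\Cay(\rmdt,\testt)$.
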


\paragraph{Completeness}

Recall that for a $c \in \rmd$ generated from
distribution $\cT_{\epsilon}$, for each $x \in \GF2^n$ (see \pref{lem:eigenvalues})
$$\Pr_{c \sim \cT_{\epsilon}}\left[c(x) = 0\right] \geq
1-O(\epsilon) \mper$$
It is easy to see that by construction, this property holds for the
distribution $\testt$ also, namely,
$$\Pr_{c \sim \testt}\left[c(x) = 0\right] \geq
1-O(\epsilon) \mper$$
Hence for a random edge $(c,c')$ in the Cayley graph $\Cay(\rmdt,
\testt)$ and an $\beta \in \GF2^n$, $c(\beta) = c'(\beta)$ with probability
$1-\epsilon$.  Therefore, for each $\beta \in \GF2^n$, the $\beta$th
dictator function satisfies the test with probability $1-
O(\epsilon)$.

\paragraph{Soundness}
The probability of acceptance of the $2$-query test is given by,
$$Pr[\text{Test accepts} f] =  \sum_{\alpha \in Q} \E_{(c,c') \sim
\Cay(\rmdt,\testt)}\left[f_{\alpha}(c)f_{\alpha}(c')\right]
$$

By applying \pref{thm:rm-majority-stablest}, there is an appropriate choice of
$L,\tau$ such that if $\max_{i \in [N]} \Inf^{\leq L}_{i}(f_{\alpha})
\leq \tau$ for all $\alpha$ then the probability of acceptance can be bounded by
$$ \Pr[\text{Test Accepts}] =  \sum_{\alpha \in Q}
\iprod{f_{\alpha},G f_{\alpha}} \leq Q\cdot
\Gamma_{\rho}\left(\tfrac{1}{Q}\right) + \delta \mcom$$
where $\rho = e^{-\eps}$ and $G = \Cay(\rmdt,\testt)$.  The conclusion follows.

%

\section{Efficient integrality gaps for unique games} \label{sec:effug}

In this section, we present constructions of SDP integrality gap
instances starting from a code $\CC$ along with a local tester.  To this
end, we make an additional assumption on the code $\CC$.  Specifically, let us suppose there exists a subcode $\mc{H}$ of $\cD = \CC^{\perp}$ with distance $\frac{1}{2}$.
Formally, we show the following result.

\begin{theorem} \label{thm:code-to-uggap}
Let $\CC$ be an $[N,K,D]_2$ linear code with a canonical tester $\cT$
as described in \pref{def:canonical-tester}.  Furthermore, let
$\mc{H}$ be a subcode of $\cD = \CC^{\perp}$ with distance $\frac{1}{2}$.
Then, there exists an instance of unique games, more specifically a
$\mc{H}\dashmaxtwolin$ instance, whose vertices are
$\cD$ ($|\cD| = 2^{N-K}$) and alphabet $\mc{H}$ such
that:
\begin{itemize}
	\item The optimum value of the natural SDP relaxation for
		unique games is at least $\left(1 -
		\frac{2t}{N}\right)^2$ where $t$ is the number of
		queries made by the canonical tester $\cT$.
	\item No labelling satisfies more than $$ \min_{k \in [0,D/5]}
		\left(1 - 2 s(k) + \frac{3^k
		}{|\mc{H}|^{\frac{1}{2}}} \right)$$ fraction of
		constraints.
\end{itemize}
\end{theorem}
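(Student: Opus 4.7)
The plan is to construct an $\mc{H}\dashmaxtwolin$ instance whose label-extended graph is precisely the Cayley graph $G = \Cay(\cD,\cT)$, so that completeness reduces to an explicit eigenfunction calculation and soundness follows directly from \pref{cor:hc-sse}. Fix a direct-sum decomposition $\cD = \mc{H}\oplus \mc{R}$ and for each $q\in\supp(\cT)$ write $q=q_{\mc H}+q_{\mc R}$. I take the variables of the instance to be $\mc{R}\cong \cD/\mc{H}$ (a set of coset representatives) with alphabet $\mc{H}$; the $2^{N-K}$ elements of $\cD$ referred to in the theorem statement are then the vertices of the label-extended graph, which under $(p,h)\leftrightarrow p+h$ coincides with $G$. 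For every $p\in\mc{R}$ and $q\in\supp(\cT)$ I include the constraint $x_p - x_{p+q_{\mc R}} = q_{\mc H}$ with weight $\Prob[\cT]{q}$; each such constraint is of the form required by $\mc{H}\dashmaxtwolin$. Any assignment $\sigma\from \mc{R}\to\mc{H}$ then corresponds to the set $S=\set{p+\sigma(p):p\in \mc{R}}\subseteq V(G)$, and satisfied constraints correspond precisely to edges that stay inside $S$, so $\mu(S)=1/|\mc{H}|$ and $\cond(S) = 1-\val(\sigma)$.

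Soundness is now immediate from \pref{cor:hc-sse}: for any $k<D/5$,
\[
 1 - \val(\sigma) = \cond(S) \ge 2s(k) - 3^k\sqrt{\mu(S)} = 2s(k) - 3^k/|\mc{H}|^{1/2},
\]
which rearranges to the claimed ceiling on $\val(\sigma)$ after minimising over $k$.

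For SDP completeness I would exhibit an explicit solution built from the $N$ degree-$1$ characters of $G$. Set $\Phi(p) = (\chi_{\{1\}}(p),\ldots,\chi_{\{N\}}(p))\in\R^N$ and define $V_{p,h} = \Phi(p+h)/\sqrt{N|\mc{H}|}$ for $p\in\mc{R}$ and $h\in\mc{H}$. The normalisation $\snormt{V_{p,h}}=1/|\mc{H}|$ is automatic, and the one nontrivial step is vertex-orthogonality $V_{p,h}\cdot V_{p,h'}=0$ for $h\neq h'$: the hypothesis $\dist(\mc{H}) = \tfrac{1}{2}$ forces every nonzero $h-h'\in\mc{H}$ to have Hamming weight exactly $N/2$, so $\sum_i(-1)^{(h-h')_i}=0$. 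For each edge $(p,\,p+q_{\mc R},\,\pi_q)$ of the instance the inner-product sum collapses to
\[
\sum_{h\in\mc{H}} V_{p,h}\cdot V_{p+q_{\mc R},\,h+q_{\mc H}}
= \tfrac{1}{N}\sum_{i=1}^N (-1)^{q_i}
= 1-2\wt(q)/N \ge 1-2t/N,
\]
using $\wt(q)\le t$ throughout $\supp(\cT)$. Since $1-2t/N\in[0,1]$, this value already dominates $(1-2t/N)^2$, delivering the first bullet.

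The main obstacle is really just the vertex-orthogonality step; once one sees that the distance-$\tfrac{1}{2}$ hypothesis on $\mc{H}$ is exactly what is needed to make the obvious choice of SDP vectors, coming from the degree-$1$ characters, satisfy the orthogonality axiom, the rest of the argument is bookkeeping on top of the spectral calculation underlying \pref{lem:eigenvalues} and the small-set expansion bound of \pref{cor:hc-sse}.
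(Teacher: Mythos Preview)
Your soundness argument via the label-extended graph is correct and, in fact, cleaner than the paper's: the paper keeps all of $\cD$ as its variable set and analyzes soundness through the averaged functions $f_{\mb p}(\mb c)=\E_{\mb h}\Ind[\ell(\mb c+\mb h)=\mb p+\mb h]$, whereas you quotient out by $\mc H$ up front so that \pref{cor:hc-sse} applies directly to the set $S$ of an assignment.

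The SDP part, however, has a real gap. The ``natural SDP'' here (see the displayed relaxation in \pref{sec:effug}, taken from \cite{KhotV05}) includes the nonnegativity constraints $\iprod{\vec b_{\mb c,\mb h},\vec b_{\mb c',\mb h'}}\ge 0$, and your vectors $V_{p,h}=\Phi(p+h)/\sqrt{N|\mc H|}$ do not satisfy them: for general $p,p',h,h'$ one gets $V_{p,h}\cdot V_{p',h'}\propto 1-2\wt(p+h+p'+h')/N$, which is negative whenever $p+h+p'+h'\in\cD$ has weight above $N/2$. This is exactly why the paper assigns the \emph{tensor square} $\vec b_{\mb c,\mb h}=\vc{\mb c+\mb h}\otimes\vc{\mb c+\mb h}$, which forces every pairwise inner product to be a square and hence nonnegative; the objective then becomes $(1-2\wt(\mb q)/N)^2\ge(1-2t/N)^2$, matching the exponent~$2$ in the theorem statement. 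Your solution is easily repaired by the same tensoring, and then orthogonality for $h\neq h'$ still follows from $\Delta(\mb h,\mb h')=\tfrac12$ (a property you and the paper both tacitly use as an \emph{equality}, which holds for the intended Hadamard subcode).
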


Instantiating the above theorem with the Reed--Muller code and its
canonical tester we obtain the following explicit SDP integrality gap
instance.
\begin{corollary} \label{cor:code-to-uggap}
	For every integer $n$, $\delta > 0$  there exists a
	$\GF2^n\dashmaxtwolin$ instance $\Gamma$ on
$M = 2^{2^{\log^2 n}}$ vertices such
that the optimum value of the SDP relaxation on $\Gamma$ is $1 -
O(\frac{\log(1/\delta)}{n}) = 1- O\left(\frac{\log(1/\delta)}{2^{(\log\log M)^{1/2}}}\right)$ while every labelling of $\Gamma$
satisfies at most $O(\delta)$ fraction of edges.
\end{corollary}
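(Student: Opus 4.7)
The plan is to instantiate \pref{thm:code-to-uggap} with the Reed--Muller code $\CC = \RM(n,n-d-1)$ for $d = \lceil \log_2 n \rceil$, whose dual is $\cD = \RM(n,d)$. For the required subcode $\mc{H} \subseteq \cD$ of fractional distance $\tfrac12$, I would take the $2^n$ homogeneous linear forms $\mc{H} = \{x\mapsto a\cdot x : a \in \GF 2^n\} \subseteq \RM(n,1) \subseteq \RM(n,d)$; since every nonzero such form has weight exactly $N/2$, this $\mc{H}$ satisfies the hypothesis and $|\mc{H}| = 2^n$. The canonical tester will be the BKSSZ tester from \pref{thm:rm-tester-soundness} after $\ell = \Theta(\log(1/\delta))$-fold XOR amplification; the support stays in $\cC^\bot$, so the amplified test remains canonical.

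The vertex count $M = |\cD|$ equals $2^{\dim \RM(n,d)} = 2^{\sum_{j\le d}\binom{n}{j}}$. For $d = \lceil \log_2 n\rceil$, Stirling gives $\binom{n}{d} = 2^{\log^2 n\,(1 - o(1))}$, which dominates the sum, so $M = 2^{2^{\log^2 n\,(1-o(1))}}$ and thus $2^{(\log\log M)^{1/2}} = n^{1-o(1)}$, consistent with the stated expression. For SDP completeness, the amplified tester has query complexity at most $t = \ell\cdot 2^{n-d}$ (each amplified query is a sum of $\ell$ minimum-weight codewords of $\RM(n,d)$), so the first bullet of \pref{thm:code-to-uggap} yields SDP value at least $(1-2t/N)^2 = (1-2\ell/2^d)^2 \geq 1 - O(\log(1/\delta)/n)$.

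For the soundness side, I would pick $k = c\cdot 2^d = cn$ where $c > 0$ is an absolute constant chosen small enough that (i) $c < \eta_0$, so the BKSSZ bound $s_\cT(k) \geq k/(2\cdot 2^d) = c/2$ applies, and (ii) $c\log_2 3 < \tfrac{1}{2}$, so that $3^k/|\mc{H}|^{1/2} = 2^{(c\log_2 3 - 1/2)n} \leq \delta$ for $n$ large enough. The XOR amplification then boosts the soundness to $s(k) \geq (1-(1-c)^\ell)/2 \geq \tfrac{1}{2} - \tfrac{\delta}{2}$ for $\ell = \Theta(\log(1/\delta))$. Substituting into the second bullet of \pref{thm:code-to-uggap}, the labeling bound is at most $1 - 2 s(k) + 3^k/|\mc{H}|^{1/2} \leq \delta + \delta = O(\delta)$.

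The main subtlety is that $d$ is essentially forced by the need to satisfy three simultaneous constraints on $k$: the BKSSZ applicability range $k < \eta_0\cdot 2^d$, the requirement $k = \Omega(2^d)$ that the pre-amplification soundness be a positive constant (so that $\ell = O(\log(1/\delta))$ amplifications suffice to push it to $\tfrac12 - O(\delta)$), and the hypercontractive error bound $3^k \ll |\mc{H}|^{1/2} = 2^{n/2}$. These three become mutually compatible exactly when $2^d = \Theta(n)$, which pins down $d = \log_2 n$ and hence both the $M = 2^{2^{\log^2 n}}$ vertex count and the $1 - O(\log(1/\delta)/n)$ SDP completeness. Once $d$ is chosen, what remains is a mechanical verification of the two bullets of \pref{thm:code-to-uggap}.
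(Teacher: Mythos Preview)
Your proposal is correct and takes essentially the same approach as the paper: instantiate \pref{thm:code-to-uggap} with $\CC = \RM(n,n-d-1)$ for $d \approx \log n$, take $\mc{H}$ to be the Hadamard code (homogeneous linear forms) inside $\cD = \RM(n,d)$, and amplify the BKSSZ tester by XOR-ing $\Theta(\log(1/\delta))$ copies so that $s(k) \ge \tfrac12 - \tfrac\delta2$ at $k = \Theta(2^d)$. The paper chooses $k = 2^d/10$ and $r = 100\log(1/\delta)$ explicitly, but your argument with a generic small constant $c$ satisfying $c < \eta_0$ and $c\log_2 3 < \tfrac12$ is the same computation.
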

\begin{proof}Fix the code $\CC$ to be the Reed--Muller code
	$\RM(n,n-\log n)$ of degee $d = \log n$ over $n$ variables.  The
	block length of the code is $N = 2^n$, while the rate is
	$K = 2^n - \sum_{i \leq d} \binom{n}{i} \leq 2^n - O(2^{\log^2
	n})$.  This code contains the Hadammard code $\mc{H}$ which is of
	relative distance $\frac{1}{2}$.

	Let $\cT_{\RM}$ denote the canonical Reed--Muller tester for
	$\RM(n,n-\log n)$, and let $\cT_{\RM}^{\oplus r}$ denote the
	XOR of $r$-independent tests.  Let us fix $r =
	100\log(1/\delta)$, thus yielding a canonical tester making
	$t = \log(1/\delta) \cdot 2^{n-d}$ queries.  By the work of
	\cite{BhattacharyyaKSSZ10}, this tester has a soundness of at
	least $s(k) = \frac{1}{2} - (1-k/2^{d+1})^r/2$.  With $k =
	2^{d}/10$, the above soundness is at least $s(k) \geq 1/2 -
	\delta/2$.  Using \pref{thm:code-to-uggap}, the optimum value
	of the resulting $\GF2^n\dashmaxtwolin$ instance is at most
	$\delta$.  On the other hand, the SDP value is at least 
$$(1-2t/N)^2 = 1-100 \log(1/\delta) 2^{n-d}/2^n = 1-O\left(\frac{\log 1/\delta}{2^d}\right) = 1-\frac{\log (1/\delta)}{n}\mper$$
\eat{
\Mnote{May be state the previous theorem in terms of $\Phi(S)$ for the graph $\Cay(C^\perp,\cT)$ and use Theorem \ref{thm:RM-graph-power} with $t = O(\log(1/\delta))$, $\epsilon = 1/n$.}
\Pnote{Unfortunately, one will have to also define expnasion for
fractional sets, it might still be neater}}
\end{proof}

Starting from $\CC$, we construct an SDP integrality gap instance $\Gamma(\CC,\mc{T})$ for unique games as described below.

\begin{mybox}
The vertices of $\Gamma_{\CC}$ are the codewords of $\cD$.  The alphabet of the unique games instance $\Gamma(\CC,\mc{T})$ are the codewords in $\mc{H}$.  The constraints of unique games instance $\Gamma(\CC,\mc{T})$ are given by the tests of the following verifier.

The input to the verifier is a labeling $\ell : \cD \to
\mc{H}$.  Let us denote by $R = |\mc{H}|$.  The verifier proceeds as follows:
\begin{itemize}
	\item Sample codewords $\mb{c} \in \cD$ and $\mb{h},\mb{h'} \in \mc{H}$ uniformly at random.
\item Sample a codeword $\mb{q} \in \cD$ from the tester $\mc{T}$.
\item Test if
	$$ \ell(\mb{c} + \mb{q} + \mb{h}) - \ell(\mb{c}+\mb{h'}) =
	\mb{h} - \mb{h'} $$
\end{itemize}
\end{mybox}

\paragraph{SDP Solution}
Here we construct SDP vectors that form a feasible solution to a
natural SDP relaxation of unique games \cite{KhotV05}.

\begin{mybox}
\begin{align}
	\text{Maximize} & \E_{\mb{c} \in \cD, \mb{h},\mb{h'} \in \mc{H}} \E_{\mb{q} \in \mc{T}}
				\left[ \frac{1}{R} \sum_{\mb{\ell} \in \mc{H}}
				\iprod{ \vec
				b_{\mb{c+h'},\mb{\ell+h'}}, \vec
				b_{\mb{c+q+h},\mb{\ell+h}}}  \right]\\
	\text{Subject to} & \iprod{\vec b_{\mb{c},\mb{h}}, \vec
	b_{\mb{c},\mb{h'}}} = 0 & \forall c \in \cD,  h \neq h' \in
	\mc{H} \\
	& \iprod{\vec b_{\mb{c},\mb{h}},\vec b_{\mb{c}',\mb{h}'}} \geq
	0 & \forall c,c' \in \cD, h,h' \in \mc{H}. \\
	& \sum_{\mb{\ell} \in \mc{H}} \iprod{\vec
	b_{\mb{c},\mb{\ell}},\vec b_{\mb{c},\mb{\ell}}} = R & \forall
	c \in \cD
\end{align}
\end{mybox}

For a vector $\mb{c} \in \GF 2^m$, we will use $\vc{c} \in \R^m$ to
denote the vector whose coordinates are given by $\vc{c}_i =
(-1)^{\mb{c}_i}$.  For a pair of vectors $\mb{c},\mb{c'}$, we have
$$ \iprod{(-1)^{\mb{c}},(-1)^{\mb{c'}}} = 1-2 \Delta(\mb{c},\mb{c}')
\mper$$

For each vertex $\mb{c} \in \cD$ associate vectors $\{\vec
b_{\mb{c},\mb{h}} = \vc{c+h} \otimes \vc{c+h} | \mb{h} \in \mc{H}\}$.
Notice that for a pair of vectors $\vec b_{\mb{c},\mb{h}}, \vec
b_{\mb{c'}, \mb{h'}}$ we have,
$$ \iprod{\vec b_{\mb{c},\mb{h}}, \vec
b_{\mb{c'}, \mb{h'}}} = \iprod{ (-1)^{\mb{c}+\mb{h}}, (-1)^{\mb{c'} +
\mb{h'}}}^2 = (1-2\Delta(\mb{c}+\mb{h},\mb{c'}+\mb{h'}))^2 \mper
$$
Since the distance of the code $\mc{H}$ is $\frac{1}{2}$, we have
\begin{equation}
	\iprod{\vec b_{\mb{c},\mb{h}}, \vec b_{\mb{c},\mb{h}'}} = (1 -
	2 \Delta(\mb{h}, \mb{h}'))^2 = \begin{cases} 1 & \text{ if }
							\mb{h} = \mb{h}' \\
						0 & \text{ if } \mb{h} \neq \mb{h}'
					\end{cases}
\end{equation}
In other words, for every vertex $\mb{c}$, the corresponding SDP
vectors are orthonormal.
The objective value of the SDP solution is given by,
\begin{eqnarray*}
\mathsf{OBJ}
& = & \E_{\mb{c} \in \cD, \mb{h},\mb{h'} \in \mc{H}} \E_{\mb{q} \in \mc{T}}
		\left[ \frac{1}{R} \sum_{\mb{\ell} \in \mc{H}}
			\iprod{ \vec b_{\mb{c+h'},\mb{\ell+h'}}, \vec	b_{\mb{c+q+h},\mb{\ell+h}}}  \right] \\
	& = & \E_{\mb{c} \in \cD, \mb{h} \in \mc{H}} \E_{\mb{q} \in\mc{T}}
		\left[ \frac{1}{R} \sum_{\mb{\ell} \in \mc{H}} (1- 2
		\Delta(\mb{c+h'}+\mb{\ell+h'}, \mb{c+q+h}+\mb{\ell+h}))^2  \right] \\
& =& \E_{\mb{c} \in \cD, \mb{h} \in \mc{H}} \E_{\mb{q} \in \mc{T}}
		\left[(1- 2 \Delta(\mb{0}, \mb{q}))^2  \right] \\
& \geq & \left(1-\frac{2t}{N}\right)^2
\end{eqnarray*}
where $t$ is the number of queries made by the canonical tester $\cT$ for $\CC$.

\paragraph{Soundness}
Let $\ell : \cD \to \mc{H}$ be an arbitrary labelling of the Unique
Games instance $\Gamma(\CC, \mc{T})$.  For each $\mb{p} \in \mc{H}$, define
a function $f_{\mb{p}} \from \cD \to [0,1]$ as follows,
$$ f_{\mb{p}} (\mb{c}) = \E_{\mb{h} \in \mc{H}} \left[\Ind[\ell(\mb{c+h}) =
\mb{p+h}]\right] \mper $$

The fraction of constraints satisfied by the labelling $\ell$ is given
by,
\begin{eqnarray}
\mathsf{OBJ}
& = & \E_{\mb{c} \in \cD, \mb{h},\mb{h'} \in \mc{H}} \E_{\mb{q} \in \mc{T}}
		\left[  \sum_{\mb{p} \in \mc{H}}
			\Ind [\ell(\mb{c+h'}) = \mb{p+h'}] \cdot
			\Ind[\ell(\mb{c+q+h})= \mb{p+h}]  \right]
			\nonumber \\
& = & \E_{\mb{c} \in \cD} \E_{\mb{q} \in \mc{T}}
		\left[  \sum_{\mb{p} \in \mc{H}}
		\E_{\mb{h'} \in \mc{H}} \Ind [\ell(\mb{c+h'}) = \mb{p+h'}] \cdot
		\E_{\mb{h} \in \mc{H}}	\Ind[\ell(\mb{c+q+h})=
		\mb{p+h}]  \right] \nonumber \\			
& = & \E_{\mb{c} \in \cD} \E_{\mb{q} \in \mc{T}}
		\left[  \sum_{\mb{p} \in \mc{H}}
		f_{\mb{p}}(\mb{c}) f_{\mb{p}}(\mb{c+q})  \right] \\			
		& = & \sum_{\mb{p} \in \mc{H}} \iprod{f_{\mb{p}},G
		f_{\mb{p}}} \label{eq:fraction-of-constraints}
\end{eqnarray}
where $G=\Cay(\cC^\bot,\mc{T})$ is the graph associated with
the code $\CC^\bot$ and tester $\cT$.

The expectation of the function $f_{\mb{p}}$ is given by,
\begin{eqnarray}
\E_{\mb{c} \in \cD} f_{\mb{p}}(\mb{c})
& = & \Pr_{\mb{c} \in \cD,\mb{h} \in \mc{H}} \left[\ell(\mb{c+h}) =
\mb{p+h}\right] \nonumber \\
& = & \Pr_{\mb{c} \in
\cD,\mb{h} \in \mc{H}} \left[\ell(\mb{c}) =
\mb{p+h}\right] \textrm{  because  }(\mb{c +h},\mb{h}) \sim
(\mb{c},\mb{h}) \nonumber \\
& =& \frac{1}{|\mc{H}|} = \frac{1}{R} \mper \nonumber
\end{eqnarray}
Since $f_{\mb{p}}$ is bounded in the range $[0,1]$ we have,
$$ \iprod{f_{\mb{p}},f_{\mb{p}}} = \E_{\mb{c} \in
\cD}[f_{\mb{p}}(\mb{c})^2] \leq \E_{\mb{c} \in
\cD}[f_{\mb{p}}(\mb{c})] = \frac{1}{R} \mper$$
Applying \pref{cor:hc-sse}, we
get that for each $\mb{p}$,
$$ \iprod{f_{\mb{p}},G f_{\mb{p}}} \leq \frac{1}{R} \cdot \min_{k
\in [0,\frac{D}{5}]} \left(1-2s(k)+\frac{3^{k}}{R^{1/2}} \right) \mper$$
Substituting the previous equation in to
\eqref{eq:fraction-of-constraints}, we get that the fraction of
constraints satisfied by $\ell$ is at most
$$ \min_{k
\in [0,\frac{D}{5}]} \left(1-2s(k)+\frac{3^{k}}{R^{1/2}} \right) $$

\section{Hierarchy integrality  gaps for Unique Games and Related Problems} \label{sec:hierarchy}

This section is devoted to the construction of a integrality gap
instance for a hierarchy of SDP relaxations to Unique Games.  More
specifically, we consider the $\mathsf{LH}_{r}$ and $\mathsf{SA}_{r}$ SDP hierarchies
described in \cite{RaghavendraS09c}.  For these SDP hierarchies, we
will demonstrate the following integrality gap constructions.

\begin{theorem} \label{thm:sdp-hierarchy-gap-ug}
	For every $\epsilon,\delta > 0$, there exists an $\GF2^t\dashmaxtwolin$ instance $I$ for some positive
	integer $t$, such that no labelling
	satisfies more than $\delta$ fraction of edges of $\Gamma$ while
	there exists an SDP solution such that,
	\begin{itemize}
	\item the SDP solution is feasible for $\LH_{R}$ with $R =
		\exp(\exp(\Omega(\log\log^{1/2} N)))$.
	\item the SDP solution is feasible for $\SA_{R}$ with $R =
		\exp(\Omega(\log\log^{1/2} N))$.
	\item the SDP solution has value $1-O(\epsilon)$.
	\end{itemize}
	where $N$ is the number of vertices in the instance $I$.
\end{theorem}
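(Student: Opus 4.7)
The plan is to follow the composition framework of Raghavendra--Steurer~\cite{RaghavendraS09c}, but plug in our ``short code'' efficient gadgets in place of the long code based ones that were used in prior work. Recall that the framework takes as input two ingredients: \textbf{(i)} a basic unique games integrality gap instance $\Gamma_0$ whose Basic SDP value is close to $1$, and \textbf{(ii)} an alphabet reduction gadget that is a dictatorship test with certain completeness/soundness guarantees. The composition then produces a unique games instance $\Gamma$ with small objective value and an explicit vector solution that is shown to survive $\poly(r)$ rounds of \SA and $\exp(\poly(r))$ rounds of \LH, where $r$ governs the SDP value of the base instance $\Gamma_0$ (namely the basic SDP value of $\Gamma_0$ is $1 - 1/r$).

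First, I would instantiate the base instance $\Gamma_0$ using \pref{cor:code-to-uggap}: for any $\delta>0$ and integer $n$, this gives a $\GF2^n\dashmaxtwolin$ integrality gap instance on $M_0 = 2^{2^{\log^2 n}}$ vertices with basic SDP value $1 - O(\log(1/\delta)/n)$ but objective value at most $O(\delta)$. Rewriting in terms of $M_0$, the SDP slack is $1/r$ with $\log r = \Omega((\log\log M_0)^{1/2})$ (for fixed $\delta$). Crucially, $\Gamma_0$ has affine constraints over $\GF2^n$, which is exactly the class of instances for which our alphabet reduction works.

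Second, I would apply the efficient $2$-query dictatorship test $\mathsf{DICT}$ constructed in \pref{sec:twoquerytest} (whose completeness and soundness are established by \pref{thm:twoquerytest} and rely on the Reed--Muller majority-is-stablest result \pref{thm:rm-majority-stablest-Q}) as the alphabet reduction gadget. Compared to the \cite{KhotKMO04} gadget based on the noisy hypercube, this gadget has only quasipolynomial blowup in the alphabet size $Q$ instead of exponential blowup. Composing $\Gamma_0$ with $\mathsf{DICT}$ in the standard way (each vertex of $\Gamma_0$ is replaced by a copy of the dictator test, and tests across edges are glued using the affine permutations of $\Gamma_0$) produces a $\GF2^t\dashmaxtwolin$ instance $I$ whose soundness follows, as in \cite{RaghavendraS09c}, from the soundness of $\mathsf{DICT}$ together with the fact that no labelling of $\Gamma_0$ satisfies many constraints. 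Choosing the parameters $\epsilon,\delta$ of $\mathsf{DICT}$ appropriately, one gets that no labelling of $I$ satisfies more than a $\delta$ fraction of constraints.

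Finally, to get the SDP solution, one glues together the eigenvector-based vector solution for $\Gamma_0$ (which achieves value $1-1/r$ on the basic SDP) with the trivial dictator assignments in each gadget. The Raghavendra--Steurer analysis then shows in a near black-box fashion that this composed vector solution is feasible for $\poly(r)$ rounds of \SA and $\exp(\poly(r))$ rounds of \LH and has value $1-O(\epsilon)$. Translating to the size $N$ of $I$, which is $M_0\cdot \exp(\poly(Q,n))$ and hence $\log\log N = \Theta(\log\log M_0)$ up to the logarithmic factors, we obtain $\poly(r) = \exp(\Omega(\log\log^{1/2} N))$ rounds for \SA and $\exp(\poly(r)) = \exp(\exp(\Omega(\log\log^{1/2} N)))$ rounds for \LH, as claimed. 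The main obstacle is just the bookkeeping of parameters across the composition --- verifying that our short-code gadget satisfies the precise completeness/soundness/SDP-consistency hypotheses assumed by the \cite{RaghavendraS09c} composition theorem, and checking that the affine structure of $\Gamma_0$ is compatible with the folding used in $\mathsf{DICT}$. No new ideas beyond the derandomized noise graph and the invariance principle of \pref{sec:stablestcodes} are needed.
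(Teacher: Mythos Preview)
Your proposal is correct and takes essentially the same approach as the paper. The one point worth making precise is the SDP-solution step: rather than ``gluing'' vectors, the paper observes (\pref{obs:subset-instance}) that the vertex set $V(\Gamma)\times\rmdt$ of the short-code composed instance is literally a subset of the vertex set $V(\Gamma)\times Q^{2^n}$ of the traditional long-code composed instance, so the \cite{RaghavendraS09c} SDP solution for the latter restricts verbatim to a feasible $\LH_R$/$\SA_R$ solution for the former, and only the objective value needs to be rechecked (via Claim~2 of \cite{RaghavendraS09c} and the query probability of $\testt$) --- this is what makes your ``near black-box'' claim go through without re-verifying the RS construction for the new gadget.
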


\begin{remark}
Composing the above SDP integrality gap with Unique games based
hardness reductions yields corresponding gap instances for several
classes of problems like constraint satisfaction problems (CSPs) and ordering
CSPs like maximum acyclic subgraph.  Specifically, up to
$\exp(\exp(\Omega(\log\log^{1/2} N)))$ rounds of $\LH$ hierarchy or
the $\exp(\Omega(\log\log^{1/2} N)$ rounds of the $\SA$ hierarchy can
be shown to have te same SDP integrality gap as the
simple SDP relaxation for every CSP.
For the sake of brevity, we omit a formal statement of this result
here.
\end{remark}

Towards showing \pref{thm:sdp-hierarchy-gap-ug}, we follow the
approach outlined in \cite{RaghavendraS09c}.  At a high-level, the idea is to start with
an integrality gap instance $\Gamma$ for a simple SDP relaxation for
unique games over a large alphabet.  The instance $\Gamma$ is reduced
to an instance $\Psi_{\eps,Q,d}(\Gamma)$ of unique games over a smaller alphabet using a reduction similar to
Khot \etal \cite{KhotKMO07}.  Moreover, the SDP solution to the simple
SDP relaxation of $\Gamma$ can be translated to a solution for
several rounds of SDP hierarchy for $\Psi_{\eps,Q,d}(\Gamma)$.

Let $\Gamma$ be an instance of $\GF2^n\dashmaxtwolin$ over a set of
vertices $V(\Gamma)$ and edges $E(\Gamma)$.  On every
edge $(u,v) \in E(\Gamma)$, there is a constraint of the form $u - v =
\alpha_{uv}$ for some $\alpha \in \GF2^n$.  We will reduce $\Gamma$
to an instance of $Q\dashmaxtwolin$ instance using the two
query test described in \pref{sec:twoquerytest}.

\paragraph{Translations} Notice that Reed-Muller codes are invariant under translation of its
coordinates.  Therefore, the code $\rmdt$ and the test distributions
$\testt$ are both invariant under translation.  Formally, for an $\alpha \in \GF2^n$, the translation
operator $T_{\alpha} \from  Q^{N} \to Q^N$ is defined by
$$ (T_{\alpha} \circ c)_{\beta} = c_{\beta + \alpha}  \qquad \forall c
\in Q^{N}, \beta \in \GF{2}^n \mper$$
Given a codeword $c \in \rmdt$, we have $T_{\alpha} \circ c \in \rmdt$.

We are now ready to describe the reduction from $\Gamma$ to an
instance of $\GF2^n\dashmaxtwolin$.

\begin{mybox}
The vertices of $\Psi_{\eps,Q,d}(\Gamma)$ are $V(\Gamma) \times \rmdt$.  Let $\ell : V(\Gamma)
\times \rmdt \to Q$ be a labelling of the instance
$\Psi_{\eps,Q,d}(\Gamma)$.

\paragraph{Folding}  The labelling $\ell$ is assumed
to satisfy $\ell(v, c+r) = \ell(v, c) + r$  for every vertex $v \in
V(\Gamma), c \in \rmdt$ and $r \in Q$.  This is enforced
by ``folding''.

The constraints of $\Psi_{\eps,Q,d}(\Gamma)$ are given by the queries of the following verifier.
\begin{itemize}
\item Sample a vertex $u \in V(\Gamma)$ uniformly at random. Sample two
	neighbours $v_1,v_2 \in N(u)$ of $u$ uniformly at random. Let
	the constraint on the edge $(u,v_i)$ be $v_i - u = \alpha_i$
	for $i \in \{1,2\}$.
\item Sample an element $c_1 \in \rmdt$ uniformly at random, and
	sample a neighbour $c_2 \in \rmdt$ of $c_1$ in the graph
	$\Cay(\rmdt, \testt)$.

\item Sample an element $r \in Q$ uniformly at random.

\item Test if $\ell(v_1, (T_{\alpha_1} \circ c_1) + r) - r =
	\ell(v_2, T_{\alpha_2} \circ c_2) $.

\end{itemize}
\end{mybox}

\paragraph{Soundness}
\begin{lemma} \label{lem:ugreduction-soundness}
	For all sufficiently small constants $\epsilon, \delta > 0$
and all choices of $Q = 2^t$,
	there exists $\gamma, d$ such that if no labelling of $\Gamma$ satisfies more than
	$\gamma$ fraction of
	edges, then every labelling of $\Psi_{\eps,Q,d}(\Gamma)$
	satisfies at most $Q\Gamma_{\rho}\left(\nfrac{1}{Q}\right) + \delta$ fraction
	of constraints, where $\rho = e^{-\e}$.
\end{lemma}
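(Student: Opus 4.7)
The plan is to adapt the KKMO-style soundness analysis, substituting Theorem~\ref{thm:rm-majority-stablest-Q} (our ``Majority is Stablest'' over Reed--Muller codes) for the usual one over the Boolean noise graph. Given a labelling $\ell$ of $\Psi_{\eps,Q,d}(\Gamma)$, I first arithmetize: set $f_{v,\alpha}(c) = \Ind[\ell(v,c) = \alpha]$ for each $v \in V(\Gamma)$ and $\alpha \in Q$, so that folding gives $\E f_{v,\alpha} = 1/Q$. The random shift $r \in Q$ in the test cancels because, by folding, $\ell(v_1, T_{\alpha_1} c_1 + r) = \alpha + r$ is equivalent to $\ell(v_1, T_{\alpha_1} c_1) = \alpha$. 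Defining $g_{u,\alpha}(c) := \E_{v \sim N(u)}[f_{v,\alpha}(T_{\alpha_{uv}} c)]$ and using independence of $v_1, v_2$, the fraction of satisfied constraints becomes
\[
  \mathsf{VAL}(\ell) = \E_{u \in V(\Gamma)} \sum_{\alpha \in Q} \iprod{g_{u,\alpha},\, G\, g_{u,\alpha}},\qquad G := \Cay(\rmdt, \testt),
\]
where each $g_{u,\alpha}$ is $[0,1]$-valued with $\E g_{u,\alpha} = 1/Q$.

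Next, choose $L,\tau,d$ according to Theorem~\ref{thm:rm-majority-stablest-Q} for error $\delta/(2Q)$. Call $u$ \emph{good} if $\max_{\alpha,i}\Inf_i^{\leq L}(g_{u,\alpha}) \leq \tau$. For every good $u$, applying Theorem~\ref{thm:rm-majority-stablest-Q} to each of the $Q$ functions $g_{u,\alpha}$ yields $\sum_\alpha \iprod{g_{u,\alpha}, G g_{u,\alpha}} \leq Q\Gamma_\rho(1/Q) + \delta/2$. For every $u$, the same sum is trivially at most $1$ (using $g_{u,\alpha} \in [0,1]$ and $\sum_\alpha g_{u,\alpha} \equiv 1$, so $\sum_\alpha \iprod{g_{u,\alpha}, G g_{u,\alpha}} \leq \sum_\alpha \E g_{u,\alpha} = 1$). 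Hence, if $\mathsf{VAL}(\ell) > Q\Gamma_\rho(1/Q) + \delta$, the \emph{bad} vertices form at least an $\Omega(\delta)$ fraction of $V(\Gamma)$, and each bad $u$ admits some $(\alpha_u, i_u)$ with $\Inf_{i_u}^{\leq L}(g_{u,\alpha_u}) > \tau$.

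In that case I decode a labelling of $\Gamma$ from the heavy coordinates of the $f_{v,\alpha}$. The key structural fact about Reed--Muller codes is that translation by $\beta \in \GF 2^n$ permutes both the coordinate set $[N]=\GF 2^n$ and the Fourier basis via $i \mapsto i+\beta$; hence $\Inf_i^{\leq L}(h \circ T_\beta) = \Inf_{i+\beta}^{\leq L}(h)$ for every $h\from \rmdt \to \R$. Combined with convexity of squared Fourier coefficients,
\[
  \tau < \Inf_{i_u}^{\leq L}(g_{u,\alpha_u}) \leq \E_{v \sim N(u)} \Inf_{i_u + \alpha_{uv}}^{\leq L}(f_{v,\alpha_u}),
\]
so for an $\Omega(\tau)$ fraction of $v \in N(u)$, the coordinate $i_u + \alpha_{uv}$ has low-degree influence at least $\tau/2$ in $f_{v,\alpha_u}$. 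Let $H(w) := \{ i \in \GF 2^n : \max_\alpha \Inf_i^{\leq L}(f_{w,\alpha}) \geq \tau/2 \}$; by Lemma~\ref{lem:sum-of-influences} applied to each $f_{w,\alpha}$ (variance at most $1/Q$), $|H(w)| \leq 2LQ/\tau$. Assign to each $w$ a uniformly random element of $H(w)$ (arbitrary if empty): any edge $(u,v)$ with bad $u$ and $v$ in the good $\Omega(\tau)$-fraction of $N(u)$ is satisfied with probability at least $(\tau/(2LQ))^2$ over the decoding. This gives expected edge-satisfaction at least $\gamma_0 := \Omega(\delta\,\tau^3/(LQ)^2)$, which depends only on $\eps, Q, \delta$, so choosing $\gamma < \gamma_0$ contradicts the $\gamma$-unsatisfiability of $\Gamma$.

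The main obstacle is the bookkeeping for how translation acts on low-degree influences. Reed--Muller's coordinate-level translation invariance is precisely what lets the $\GF 2^n$-shifts arising from the edge constraints of $\Gamma$ act compatibly with the low-degree Fourier decomposition that defines $\Inf_i^{\leq L}$; once this identity is in hand, the remainder is the standard coordinate-decoding from \cite{KhotKMO07}.
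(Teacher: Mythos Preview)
Your approach is essentially the paper's: arithmetize via the indicator functions $f_{v,\alpha}$, average over neighbors to get $g_{u,\alpha}$, write the acceptance probability as $\E_u\sum_\alpha\iprod{g_{u,\alpha},G\,g_{u,\alpha}}$, apply Theorem~\ref{thm:rm-majority-stablest-Q}, and then decode via influential coordinates using the translation identity $\Inf_i^{\le L}(h\circ T_\beta)=\Inf_{i+\beta}^{\le L}(h)$. The arithmetization and the averaging step are correct and match the paper (you just flip the ``good/bad'' terminology).

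There is, however, a genuine gap in your decoding. You define $H(w)=\{i:\max_\alpha\Inf_i^{\le L}(f_{w,\alpha})\ge\tau/2\}$ using \emph{only} the $f$-functions, and then assert that for an edge $(u,v)$ with $u$ bad you can satisfy the constraint with probability at least $(\tau/(2LQ))^2$. That probability is supposed to account for the event $A(u)=i_u$ and $A(v)=i_u+\alpha_{uv}$. But $i_u$ is an influential coordinate of $g_{u,\alpha_u}$, not of any $f_{u,\alpha}$; nothing in your argument puts $i_u$ into $H(u)$, so $\Pr[A(u)=i_u]$ may well be zero. The convexity step only transfers influence from $g_{u,\alpha}$ to the \emph{neighbors}' $f_{v,\alpha}$, not to $f_{u,\alpha}$.

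The fix is exactly what the paper does: enlarge the candidate set to
\[
S_w=\Bigl\{i:\max_\alpha\Inf_i^{\le L}(g_{w,\alpha})\ge\tau/2\Bigr\}\cup\Bigl\{i:\max_\alpha\Inf_i^{\le L}(f_{w,\alpha})\ge\tau/2\Bigr\},
\]
so that $i_u\in S_u$ for every bad $u$ (via the $g$-part) and $i_u+\alpha_{uv}\in S_v$ for the good neighbors (via the $f$-part). The size bound $|S_w|=O(LQ/\tau)$ still holds by Lemma~\ref{lem:sum-of-influences-bound} (this is the $\rmdt$ version; your citation of Lemma~\ref{lem:sum-of-influences} is for the binary code), and the rest of your calculation goes through with $S_w$ in place of $H(w)$.
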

\begin{proof}
Let $\ell : V \times \rmdt \to Q$ be a labelling of the instance
$\Psi_{\eps,Q,d}(\Gamma)$.  For each vertex $v \in V(\Gamma)$, let $F^{v} : \rmdt
\to Q$ denote the labelling $\ell$ restricted to the vertex
$v$, i.e., $F^{v}(c) \defeq \ell(v,c)$.  For each vertex $v \in V(\Gamma)$
and $q \in Q$ define $f^{v}_{q} : \rmdt \to [0,1]$ as
$$ f^{v}_{q}(c)  \defeq \Ind[F^{v}(c) = q] \mper $$
Due to folding we have $f^{v}_{q}(c) = f^{v}_{q+r}(c+r)$ for all $r
\in Q$.  Moreover, this implies that $\E_{c \in \rmdt} f^{v}_q =
\frac{1}{Q}$.
Finally, for a vertex $u \in V(\Gamma)$ and $r \in Q$ define,
$$ h^{u}_{r}(p) \defeq \E_{v \in N(u)} f^{v}_{r}(T_{\alpha_{uv}} \circ
p) \mper $$
Clearly, for the functions $h^{u}_{r}$ also we have,
\begin{equation}\label{eq:expectation-h}
	\E_{p} h^{u}_{r} = \frac{1}{Q} \qquad \forall u \in
	V(\Gamma), r \in Q
\end{equation}

The probability of acceptance of the verifier can be arithmetized in
terms of the functions $h^{u}_{r}$.
\begin{align*}
	& \Pr[\text{verifier accepts}] \\
	& =  \E_{u \in V(\Gamma)}
	\E_{v_1,v_2 \in N(u)} \E_{c_1, c_2 \in \Cay(\rmdt,\testt)
		} \E_{r \in Q} \left[\sum_{q \in
		Q} f^{v_1}_{q+r}(T_{\alpha_1} \circ c_1 +
		r)f^{v_2}_{q}(T_{\alpha_2} \circ c_2)\right] \\
	&= \E_{u \in V(\Gamma)}
		\E_{v_1,v_2 \in N(u)} \E_{c_1, c_2 \in \Cay(\rmdt,\testt)
		}   \left[\sum_{q \in
		Q} f^{v_1}_{q}(T_{\alpha_1} \circ
		c_1)f^{v_2}_{q}(T_{\alpha_2} \circ c_2)\right]
		\qquad (\text{folding})\\
	&= \E_{u \in V(\Gamma)} \E_{c_1, c_2 \in \Cay(\rmdt,\testt)
		}   \left[\sum_{q \in
		Q} \E_{v_1 \in N(u)} f^{v_1}_{q}(T_{\alpha_1} \circ
		c_1) \cdot \E_{v_2 \in N(u)} f^{v_2}_{q}(T_{\alpha_2}
		\circ c_2) \right]\\
	&= \E_{u \in V(\Gamma)} \E_{c_1, c_2 \in \Cay(\rmdt,\testt)
		}   \left[\sum_{q \in
		Q} h^{u}_{q}(c_1) h^{u}_{q}(c_2) \right]\\
	&= \E_{u \in V(\Gamma)}  \left[\sum_{q \in
	Q} \iprod{ h^{u}_{q},H h^{u}_{q}} \right]  \qquad
	(\text{where } H = \Cay(\rmdt,\testt))
\end{align*}
Suppose the probability of acceptance of the verifier is at least
$Q \cdot \Gamma_\rho(\nfrac{1}{Q}) + \delta$.  By simple averaging, for at
least $\delta/2$ fraction of the
vertices $u \in V(\Gamma)$ we have,
$$ \sum_{q \in
Q} \iprod{ h^{u}_{q},H h^{u}_{q}}  \geq
Q\Gamma_{\rho}(\nfrac{1}{Q}) + \frac{\delta}{2} \mper $$
Let us refer to such a vertex $u$ as being {\it good}.

Fix the parameters $\tau, L,d$ to those obtained by applying
\pref{thm:rm-majority-stablest-Q} with parameters $\epsilon,\delta/2Q$.
%
Recall that by \eqref{eq:expectation-h}, we have $\E_{\rmdt}[h^{u}_{q}] =
\frac{1}{Q}$.  Applying \pref{thm:rm-majority-stablest-Q}, if for each
$q \in Q$, $\max_{\alpha \in \GF2^{n}} \Inf_{\alpha}^{\leq
l}(h^{u}_{q}) \leq \tau$ then,
$$ \sum_{q \in Q} \iprod{ h^{u}_{q},G h^{u}_{q}}  \leq
Q\Gamma_{\rho}(\nfrac{1}{Q}) + Q \cdot \frac{\delta}{2Q} \mcom $$
This implies that for each {\it good} vertex $u$ there exists $q, \alpha$ such that $\Inf_{\alpha}^{\leq L}(h^{u}_q) \geq
\tau$.  We will use these influential coordinates to decode a
labelling for the $\GF2^n\dashmaxtwolin$ instance $\Gamma$.

For each vertex $v \in V(\Gamma)$ define the set of influential
coordinates $S_{v}$ as,
\begin{equation}
	S_v = \{ \alpha \in \GF2^{n} | \Inf^{\leq L}_{\alpha}(h^{v}_{q})
	\geq \tau/2 \text{for some } q \in Q \} \cup \{ \alpha
	\in \GF2^{n} | \Inf^{\leq L}_{\alpha}(f^{v}_{q})
	\geq \tau/2 \text{for some } q \in Q \}
\end{equation}
Using \pref{lem:sum-of-influences-bound}, for each of the functions $h^{v}_q$ or $f^{v}_q$, there are at most
$2L/\tau$ coordinates with influence greater than $\tau/2$.
Therefore, for each vertex $v$ the set $S_v$ is of size at most
$2\cdot Q \cdot 2L/\tau = 4QL/\tau$.

Define an assignment of labels $A : V(\Gamma) \to \GF2^n$ as
follows.  For each vertex $v$, sample a random $\alpha \in S_v$ and
assign $A(v) = \alpha$.

Fix one good vertex $u$, and a corresponding $q,\alpha$ such that
$\Inf_{\alpha}^{\leq L}(h_{q}^{u}) \geq \tau$.  By definition of
$h_{q}^{u}$ this implies that
$$ \Inf_{\alpha}^{\leq L} \left( \E_{v \in N(u)} T_{\alpha_{uv}}
\circ f^{v}_{q} \right) \geq \tau \mcom$$
which by convexity of influences yields,
$$ \tau \leq \E_{v \in N(u)} [ \Inf^{\leq
L}_{\alpha}(T_{\alpha_{uv}} \circ f^{v}_{q}) ] = \E_{v \in N(u)} [ \Inf^{\leq
L}_{\alpha - \alpha_{uv}}(f^{v}_{q}) ] \mper$$
Hence, for at least a $\tau/2$ fraction of the neighbours $v \in N(u)$,
the coordinate $\alpha - \alpha_{uv}$ has influence at least $\tau/2$
on $f^{v}_q$.  Therefore, for every good vertex $u$, for at least $\tau/2$ fraction
of its neighbours $v \in N(u)$, the edge $(u,v)$ is satisfied by the
labelling $A$ with probability at least $\frac{1}{|S_{u}|}
\frac{1}{|S_{v}|} \geq \tau^2/16Q^2 L^2$.  Since there are at
least a $\delta/2$-fraction of good vertices $u$, the expected fraction
of edges satisfied by the labelling $A$ is at least $\frac{\delta}{2}
\cdot \frac{\tau}{2} \cdot \frac{\tau^2}{16Q^2 L^2} =
\frac{\delta\tau^3}{64 Q^2 L^2}$.

By choosing the soundness $\gamma$ of the outer unique game $\Gamma$ to
be lower than $\frac{\delta\tau^3}{64 Q^2 L^2}$ yields a
contradiction.
This shows that the value of any labelling $\ell$ to
$\Psi_{\epsilon,Q,d}(\Gamma)$ is less than $Q \Gamma_{\rho}(1/Q) + \delta$.

\end{proof}

\paragraph{SDP Solution}
	We will construct feasible solutions to certain strong SDP
	relaxations of $\Psi_{\eps,Q,d}(\Gamma)$ by appealing to the work of \cite{RaghavendraS09c}.
	The SDP hierarchies that we consider are referred to as the
	$\LH$ and $\SA$ hierarchies.  Informally, the $r^{th}$-level
	$\LH$ relaxation ($\LH_r$) consists of the simple SDP relaxation for unique
	games augmented by local distributions $\mu_S$ over integral
	assignments for every set $S$ of at most $r$ vertices.  The
	local distribution $\mu_S$ is required to be consistent with
	the inner products of the SDP vectors.  Alternately, this SDP
	hierarchy can be thought of as, the simple SDP relaxation
	augmented by every valid constraint on at most $r$ vertices.

	The $\SA$ hierarchy is a somewhat stronger hierarchy that
	requires the local distributions $\mu_S$ to be consistent
	with each other, namely, $\mu_S$ and $\mu_T$ agree on $S \cap
	T$.  Alternately, the $\SA$ hierarchy corresponds to the
	simple SDP relaxation augmented with $r$-rounds of
	Sherali-Adams LP variables.  We refer the reader to \cite{RaghavendraS09c} for formal
	definitions of the $\SA$ and $\LH$ hierarchies.

	\begin{lemma} \label{lem:ugreduction-sdp-solution}
	Suppose $\Gamma$ has an SDP solution that of value $1-\eta$,
	then there exists an SDP solution to the instance
	$\Psi_{\eps,Q,d}(\Gamma)$ such that,
	\begin{itemize}
	\item the SDP solution is feasible for $\LH_{R}$ with $R =
		2^{\Omega(\epsilon/\eta^{1/4})}$.
	\item the SDP solution is feasible for $\SA_{R}$ with $R =
		\Omega(\epsilon/\eta^{1/4})$.
	\item the SDP solution has value $1-O(\epsilon)-o_{\eta}(1)$ on
		$\Psi_{\eps,Q,d}(\Gamma)$.
	\end{itemize}
\end{lemma}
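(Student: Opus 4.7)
The plan is to follow the SDP lifting framework of Raghavendra--Steurer~\cite{RaghavendraS09c}, substituting their use of the noisy hypercube with our derandomized hypercube $\Cay(\rmdt, \testt)$. Given the outer vector solution $\{v_{u, \alpha}\}_{u \in V(\Gamma),\, \alpha \in \GF2^n}$ of value $1 - \eta$ for $\Gamma$, I would assign to each vertex $(u, c) \in V(\Gamma) \times \rmdt$ of $\Psi_{\eps, Q, d}(\Gamma)$ and each label $q \in Q$ a vector of the form
\[
  w_{(u,c), q} \;=\; \sum_{\alpha \in \GF2^n} v_{u,\alpha} \otimes e_{c_\alpha - q} \mcom
\]
where $\{e_h\}_{h \in Q}$ is an orthonormal basis of $\R^Q$ and $c_\alpha$ denotes the $\alpha$-th coordinate of $c$ viewed as an element of $Q^N$. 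Folding $\ell(v,c+r) = \ell(v,c)+r$ is automatic from this definition. Standard manipulations, mirroring those of KKMO, verify that the $\{w_{(u,c),q}\}$ form a feasible solution to the basic SDP relaxation of $\Psi_{\eps, Q, d}(\Gamma)$.

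For the objective value, I would compute the contribution of a random constraint: the inner product $\iprod{w_{(v_1, c_1), q+r}, w_{(v_2, c_2), q}}$ factors into a product of $\iprod{v_{v_1, \alpha_1}, v_{v_2, \alpha_2}}$, averaged over a random edge $(v_1, v_2)$ of $\Gamma$ (with its constraint translation $\alpha_1 - \alpha_2 = \alpha_{v_1 v_2}$ accounted for by the translations $T_{\alpha_i}$ in the reduction), times a ``noise stability'' factor determined by the spectrum of $\Cay(\rmdt, \testt)$. By \pref{lem:eigenvalues-larger-alphabet}, this spectrum matches the $\epsilon$-noise graph on $Q^N$ up to an error that is small on characters of weight $\ll 2^d$, so the objective evaluates to at least $(1 - O(\epsilon))(1 - \eta) - o_\eta(1) = 1 - O(\epsilon) - o_\eta(1)$.

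For the hierarchy lift, the key property I would exploit is that $\RM(n,d)$ is $(2^{d+1}-1)$-wise independent. Given an $r$-tuple of vertices $(u_1, c_1), \ldots, (u_r, c_r)$ of $\Psi_{\eps, Q, d}(\Gamma)$, I would construct a local distribution on labelings by first sampling outer labels $\alpha_1, \ldots, \alpha_r$ from a local distribution on $u_1, \ldots, u_r$ supplied by the SDP solution of $\Gamma$ (since $\Gamma$ has value $1 - \eta$, one can take such a local distribution to be supported on $(1-O(\eta^{1/4}))$-satisfying assignments by the rounding-with-correlations argument of \cite{RaghavendraS09c}), and then reading off the inner label at $(u_i, c_i)$ as $c_{i, \alpha_i} + r_i$ for a shared random $r_i \in Q$ that implements the folding. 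Because random codewords of $\RM(n,d)$ are $(2^{d+1}-1)$-wise independent, the joint marginals on any $r = O(2^d)$ coordinates coincide with those of the analogous long-code construction, and the consistency analysis from \cite{RaghavendraS09c} carries over verbatim. Setting $d$ so that $2^d \approx \epsilon / \eta^{1/4}$ gives $R = \Omega(\epsilon / \eta^{1/4})$ rounds for $\SA$. For $\LH$, only $\ell_1$-embeddability of $r$-tuples of vectors is required rather than fully consistent local distributions; because the $\ell_1$ gadgets used in \cite{RaghavendraS09c} are built from tensor products of the dictator encoding, and bounded independence of degree $2^d$ fools all such $r$-tuple inner products for $r$ as large as $2^{\Omega(2^d)}$, we obtain $R = 2^{\Omega(\epsilon / \eta^{1/4})}$.

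The main obstacle is quantitatively controlling how the $\eta$ imperfection of the outer SDP solution propagates through the gadget: we must ensure that the outer labels can be rounded into a local distribution with consistency error small enough that the inner-product identities still hold approximately on every $r$-tuple, for $r$ matching the claimed round bound. This reduces to reproducing the quantitative estimates of \cite{RaghavendraS09c}, now using \pref{lem:eigenvalues-larger-alphabet} to handle the eigenvalue error of $\Cay(\rmdt, \testt)$ relative to the noisy hypercube, and using the bounded independence of $\RM(n,d)$ in place of the full independence of the hypercube coordinates. The rest of the argument is essentially a black-box invocation of the framework from that paper.
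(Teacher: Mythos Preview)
Your approach overshoots, and the detour through bounded independence is where it goes wrong. The paper's proof rests on a single structural observation: the vertex set of $\Psi_{\eps,Q,d}(\Gamma)$, namely $V(\Gamma)\times\rmdt$, is a \emph{subset} of the vertex set $V(\Gamma)\times Q^{2^n}$ of the instance $\Phi_{\epsilon,Q}(\Gamma)$ produced by the traditional $Q$-ary long-code reduction on the same outer game. Raghavendra--Steurer already build an SDP solution on $\Phi_{\epsilon,Q}(\Gamma)$ feasible for $\LH_R$ with $R=2^{\Omega(\epsilon/\eta^{1/4})}$ and for $\SA_R$ with $R=\Omega(\epsilon/\eta^{1/4})$; restricting that solution to a subset of vertices can only make the hierarchy constraints easier to satisfy, so feasibility at the same number of rounds is automatic. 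The only thing left is the objective value, which follows from Claim~2 of \cite{RaghavendraS09c}: their objective equals $\Pr_{(x,y),\,\ell}[x_\ell=y_\ell]-\epsilon$ for whatever inner query distribution is used, and here that probability is $1-O(\epsilon)$ for edges of $\Cay(\rmdt,\testt)$.

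Your use of $(2^{d+1}-1)$-wise independence to justify the hierarchy lift is a red herring. In the \cite{RaghavendraS09c} construction, the local distribution on an $r$-tuple $(u_1,c_1),\ldots,(u_r,c_r)$ is obtained by sampling consistent outer labels $\alpha_i$ for the $u_i$ and outputting the \emph{deterministic} values $(c_i)_{\alpha_i}$; the $c_i$'s are fixed points, not random codewords, so independence properties of $\RM(n,d)$ play no role whatsoever. Consequently your proposal to set $2^d\approx\epsilon/\eta^{1/4}$ is both unnecessary and incompatible with how $d$ is used elsewhere: $d$ is a constant fixed by the soundness lemma (\pref{lem:ugreduction-soundness}) as a function of $\epsilon,\delta,Q$, while in the final application $\eta=O(1/n)\to 0$, so $\epsilon/\eta^{1/4}\to\infty$. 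Your route would force $d\to\infty$, breaking the parameter regime. The vectors $w_{(u,c),q}$ you write down are fine and are essentially those of \cite{RaghavendraS09c}, but the subset observation makes any reanalysis unnecessary.
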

\begin{proof}
This lemma is a direct consequence of Theorem $9$ from
\cite{RaghavendraS09c}.

In \cite{RaghavendraS09c}, the authors start with an integrality gap
instance $\Gamma$ for the simple SDP for unique games, and then
perform a traditional long code based reduction to obtain an instance
$\Phi_{\epsilon,Q}(\Gamma)$.

The crucial observation is the following.
\begin{observation} \label{obs:subset-instance}
	The vertices of $\Psi_{\eps,Q,d}(\Gamma)$ are a
	subset of vertices of $\Phi_{\epsilon,Q}(\Gamma)$ -- the instance obtained by the traditional $Q$-ary long code reduction on
	$\Gamma$.
\end{observation}
\begin{proof}
The vertices of $\Psi_{\eps,Q,d}(\Gamma)$ are pairs of
the form $(v,c)$ where $v \in V(\Gamma)$ and $c \in \rmdt$.  The
codeword $c \in \rmdt$ can be thought of as a string of length $N =
2^n$ over the alphabet $Q = \GF2^t$, namely, $c \in Q^{2^n}$.
The vertices of the instance $\Phi_{\epsilon,Q}(\Gamma)$ obtained via a traditional $Q$-ary long
code reduction is $V(\Gamma) \times Q^{2^n}$.  Hence the observation follows.
\end{proof}
In \cite{RaghavendraS09c}, the authors construct an SDP solution for
the instance $\Phi_{\epsilon,Q}(\Gamma)$ that is feasible for
$\LH_{R}$ relaxation with $R = 2^{\Omega(\epsilon/\eta^{1/4})}$ and
for $\SA_{R}$ relaxation with $R = \Omega(\epsilon/\eta^{1/4})$.
As noted in \pref{obs:subset-instance}, the vertices of $\Psi_{\epsilon,
Q}(\Gamma)$ are a subset of the vertices of
$\Phi_{\epsilon,Q}(\Gamma)$.  Therefore, the same SDP solution
constructed in \cite{RaghavendraS09c} when restricted to the instance
$\Psi_{\eps,Q,d}(\Gamma)$ yields a feasible solution for the
corresponding $\LH_R$ and $\SA_R$ relaxations.

To finish the proof, we need to show that the value of the SDP
solution from \cite{RaghavendraS09c} is $1-2\epsilon-o_{\eta}(1)$.

The traditional long code based reduction to get
$\Phi_{\epsilon,Q}(\Gamma)$ uses the noise stability test as the inner
gadget.  Namely, to test if a function $f: \GF{Q}^{2^n} \to \GF{Q}$ is
a dictator function, the verifier picks $x \in \GF{Q}^{2^n}$ uniformly
at random, and rerandomizes each coordinate of $x$ independently with
probability $\epsilon$, and then tests if $f(x) = f(y)$.  Composing
this noise stability test with the outer unique game $\Gamma$ yields
the instance $\Phi_{\epsilon,Q}(\Gamma)$.  The value of the SDP
solution constructed for $\Phi_{\epsilon,Q}(\Gamma)$ in
\cite{RaghavendraS09c} depends only on the expected hamming distance
between the queries $x,y$.   More precisely, in Claim $2$ of
\cite{RaghavendraS09c}, the authors show that if the distribution on the
queries $(x,y) \in \GF{Q}^{2^n}\times \GF{Q}^{2^n}$ is chosen to be an arbitrary distribution
$\NS$ over $\GF{Q}^{2^n}\times \GF{Q}^{2^n}$, the SDP
objective value of the solution is given by
$$ \Pr_{\set{x,y} \sim \NS,\ell \in [2^n]}[x_\ell = y_\ell]
-\epsilon \mper$$

The instance $\Psi_{\epsilon,Q,d}$ is obtained by using the
following distribution of $x,y$ over $\GF{Q}^{2^n} \times
\GF{Q}^{2^n}$ -- sample $(c_1,c_2)$ an edge in $\Cay(\rmdt,\testt)$.

By construction, for any coordinate $\ell \in [2^n]$, $\Pr[x_{\ell} =
y_{\ell} = 1-O(\epsilon)$.  Therefore,using Claim 2 of \cite{RaghavendraS09c}, the SDP objective value on the
instance $\Psi_{\epsilon,Q,d}(\Gamma)$ is at least $1-O(2\epsilon) - o_{\eta}(1)$.

\end{proof}

\begin{proof}[Proof of \pref{thm:sdp-hierarchy-gap-ug}]
Fix $t = \lceil 10/\epsilon \log (1/\delta) \rceil$ and $Q = 2^t$.  By
our choice of $Q$, we have $Q\Gamma_{e^{-\epsilon}}(1/Q) \leq \delta$ (see Appendix B in \cite{MosselOO05} for such asymptotic bounds on $\Gamma$).

Fix $\gamma,d$ depending on $\epsilon,\delta$ and $Q$ as dictated by
\pref{lem:ugreduction-soundness}.
Let $\Gamma$ be the Unique games instance obtained by
\pref{cor:code-to-uggap} with the optimal integral value set to
$\gamma$.
In particular, $\Gamma$ is a $\GF2^n \dashmaxtwolin$ instance that has $M
= 2^{2^{\log^2 n}}$ vertices. Its SDP optimum for the simple Unique games
SDP relaxation is at least $1- O(C(\epsilon,\delta)/n)$ ($\eta =
O(C(\epsilon,\delta)/n)$) for some constant $C(\epsilon,\delta)$
depending on $\epsilon,\delta$.

Now we apply the reduction to $\GF2^t \dashmaxtwolin$ outlined below
to obtain an instance $\Psi_{\eps,Q,d}(\Gamma)$.
The number of vertices of the instance $\Psi_{\eps,Q,d}(\Gamma)$ is
$|V(\Gamma)| \times |\rmdt|$.  Note that the choice of the degree
$d$ is a constant (say $d(\epsilon,\delta)$) depending on $\epsilon,\delta$.
Hence, the number of points in $\rmdt$ is given by $|\rmdt| =
2^{O(n^{d(\epsilon,\delta)})}$.  Therefore, the number of vertices of
$\Psi_{\epsilon,Q,d}(\Gamma)$ is $N = 2^{2^{\log^2 n}} \cdot
2^{O(n^{d(\epsilon,\delta)})} = 2^{2^{O(\log^2 n)}}$.
Equivalently, we have $n = 2^{\Omega(\log\log^{1/2} N)}$.

\begin{itemize}
	\item By \pref{lem:ugreduction-soundness}, the optimal labelling to
$\Psi_{\eps,Q,d}(\Gamma)$ satisfies at most
$Q\Gamma_{e^{-\epsilon}}(1/Q) + \delta = O(\delta)$ fraction of
constraints.

\item By \pref{lem:ugreduction-sdp-solution}, there exists an SDP solution
to the instance $\Psi_{\eps,Q,d}(\Gamma)$ with value $1- O(\epsilon)
- o_{\eta}(1)$.  Since $\eta = O(C(\epsilon,\delta)/n)$, for large
enough choice of $n$, the SDP value is at least $1-O(\epsilon)$.

\item The SDP solution is feasible for $\LH_R$ for $R =
2^{\Omega(\epsilon/\eta^{1/4})} =
2^{c(\epsilon,\delta)n^{1/4}} = \exp(\exp(\Omega(\log\log^{1/2} N)))$ rounds, where
$c(\epsilon,\delta)$ is a constant depending on $\epsilon$ and $\delta$.
Furthermore, the SDP solution is also feasible for $\SA_R$ for $R =
\Omega(\epsilon/\eta^{1/4}) = c(\epsilon,\delta)n^{1/4} =
\exp(\Omega(\log\log^{1/2} N)) $.
\end{itemize}

\end{proof}

\section*{Acknowledgements.}
We have had many beneficial conversations on this question with a number
of researchers including Venkatesan Guruswami, Ran Raz, Alex
Samorodnitsky, Amir Shpilka, Daniel Spielman, Ryan O'Donnell, Subhash Khot and Madhu
Sudan. We thank Avi Wigderson  for suggesting the name ``short
code''.

\addreferencesection
\bibliographystyle{amsalpha}
\bibliography{cayleysse}

\newpage

\appendix

\eat{
\section{Properties of the Reed Muller code} \label{app:reedmuller}

In this appendix we note some properties of the Reed Muller code used in our work. This is for the sake of completeness only as all of these results, and sometimes even stronger ones, are known or easily follow from prior work.

\begin{lemma} The dual of the degree $d$ Reed--Muller code over $\GF 2^n$ is the degree $n-d-1$ Reed--Muller code.
\end{lemma}
\begin{proof} The dimension of these codes are $\binom{n}{\leq d}$ and $\binom{n}{\leq n-d-1}=\binom{n}{>d}$ respectively, and so they do indeed add up to $2^n$. Thus all that's left to show is that if $p$ is a degree $d$ polynomial and $q$ is a degree $n-d-1$ polynomial then $\sum_{x\in \GF 2^n} p(x)q(x)=0$ but $p(x)q(x)$ is a polynomial of degree $\leq n-1$ and so each one of its monomials has at least one variable missing and so gets counted an even number of times ($2^i$ for some $i\geq 1$) in this sum.
\end{proof}

The following two results analyze the success probability of the Reed--Muller tester in the regime we are interested in. We will consider the following test for the code $\cC$--- the degree $n-d-1$ Reed--Muller code over $\GF 2^n$. On input a function $f:\GF 2^n \to \GF 2$, the test $T$ selects $d$ functions $\ell_1,\ldots,\ell_d$ at random from the set $\cA_n$ of the affine functions from $\GF 2^n$ to $\GF 2$ and accpets iff $\sum_{x\in \GF 2^n} f(x)\ell_1(x)\cdots \ell_d(x) = 0$. The following theorem (with better parameters) is an immediate  corollary of the main theorem (Theorem~1) of~\cite{BhattacharyyaKSSZ10} and hence we only sketch the proof here.

\begin{theorem}[Reed--Muller testing] \label{thm:RMtest} In the above notation, for all  sufficiently small $\delta>2^{-d/100}$, if $\Pr[ \text{$T(f)$ rejects} ] \leq \delta^{10}(1-\delta)^d$ then
\[
\dist(f,\cC) \leq (1+\delta)\cdot 2^d \cdot \Pr[ \text{$T(f)$ rejects} ]
\]
\end{theorem}

\begin{proof}[Proof sketch] We start by analyzing the easier case, when $f$ is known not to be too far from the code:

\begin{lemma}[Reed--Muller testing, low distance regime] \label{lem:RMlowdist} In the above notation, if $\dist(f,\cC) \leq \delta 2^d$ for some $0 < \delta < 1$ then
\[
\dist(f,\cC)2^{-d}(1-\delta) \leq \Pr [\text{$T(f)$ rejects} ] \leq  \dist(f,\cC)2^{-d}
\]
\end{lemma}
\begin{proof} Let $r = \dist(f,\cC)$. Since shifting by a codeword does not change the distribution we may as well think of $f$ as being $r$-close to the constant $0$ function. That is, there is a set $S=\{ x_1,\ldots, x_r \} \subseteq \GF 2^n$ such that $f(x)=1$ iff $x\in S$. Thus, we can write the probability the test rejects as
\[
\Pr_{\ell_1,\ldots,\ell_d \in \cA_n} [ \sum_{i=1}^r \ell_1(x_i)\cdots \ell_d(x_i) = 1 ]
\]
Let $A_i$ be the event that $\ell_1(x_i)\cdots \ell_d(x_i)=1$. It's easy to see that $\Pr[ A_i ] =\Pr_{\ell \in \cA_n}[ \ell(x_i)=1 ]^d = 2^{-d}$.  Thus the upper bound on the probabilty follows from the union bound.
On the other hand the probability is lower bounded by the probability that \emph{exactly one} of the events $A_i$ occurs. By inclusion exclusion this is lower bounded by $\sum_i \Pr[A_i] - 2\sum_{i<j} \Pr[A_i \cap A_j]$. But because a random affine function is pairwise independent, we get that for every $i\neq j$,
\[
\Pr[A_i \cap A_j] = \Pr_{\ell_1,\ldots,\ell_d \in \cA_n} [ \ell_1(x_i)\cdots \ell_d(x_i) \cdot \ell_1(x_j)\cdots \ell_d(x_j) =1 ] = \Pr_{\ell\in\cA}[ \ell(x_i)\ell(x_j) = 1]^d = 2^{-2d}
\]
hence lower bounding the rejection probability by $r2^{-d} - 2\binom{r}{2}2^{-2d} \geq r2^{-d}(1-\delta)$.
\end{proof}

The proof of Theorem~\ref{thm:RMtest} goes by induction on $d$. Note that we can assume $d$ is larger than, say, $4$.  Let $f$ satisfy the conditions of the theorem. Note that by Lemma~\ref{lem:RMlowdist}, it suffices to show that $\dist(f,\cC) < 2^d/10$, since that will put $f$ in the low-distance regime where we know that the distance is roughly equal to the probability of rejection times $2^d$.

If the function $f(x)$ passes the $d$-test with probability $p$, then on expectation over a random affine function $\ell\in \cA$ the function $f\cdot \ell$ (defined as $x\mapsto f(x)\ell(x)$)  passes the $d-1$ test with the same probability. Thus we can argue that there is a set $S$ taking up a fraction of at least, say, $\delta/3$, of the affine functions $\ell$, the function $f\cdot \ell$ satisfies our induction hypothesis, and hence in particular is within distance at most say $\delta^{10}2^d$. The map from $\ell$ to the truthtable of $f\cdot\ell$ is a linear map in the coefficients of $\ell$, and thus for any subspace $\cC'$, $\dist(f\cdot (\ell+\ell'),\cC') \leq \dist(f\cdot \ell,\cC')+\dist(f\cdot\ell',\cC')$. Thus for every $k$ smaller, than say $1/\delta^9$, if $\ell$ is in the set $kS = S + S +\cdots + S$ ($k$ times) then $f\cdot\ell$ is of distance at most $2^d/1000$ from a degree $n-d$ polynomials. We start by showing that if $kS$ was equal to the entire set $\cA$ of linear functions then we would be done:

\begin{claim*} Suppose that for every $\ell\in\cA$, $f\cdot\ell$ is $2^d/100$-close to a polynomial of degree $n-d$. Then $f$ is $2^d/10$ close to a polynomial of degree $n-d-1$.
\end{claim*}
\begin{proof} We know that $f=f \cdot 1 = p + \Delta$ where $p$ is of degree at most $n-d$ and $\Delta$ has Hamming weight  at most $2^d/100$.  We want to show that $p$ is actually of degree at most $n-d-1$.
Indeed, suppose otherwise that $p$ is of degree $n-d$. Then we can find a variable $x_i$ such that $p \cdot x_i$ has degree  $n-d+1$, and thus write
\begin{equation}
f \cdot x_i = px_i + \Delta x_i \label{eq:tmp-fhgfdsh}
\end{equation}
but on the other hand we know that
\begin{equation}
f \cdot x_i = p' + \Delta' \label{eq:tmp-fhgfdsh2}
\end{equation}
for some polynomial of degree at most $n-d$ and $\Delta'$ has Hamming weight  at most $2^d/100$. Subtracting (\ref{eq:tmp-fhgfdsh2}) from (\ref{eq:tmp-fhgfdsh}) we get that $p-p'$ (a nonzero polynomial of degree $n-d+1$) has Hamming weight at most $2\cdot 2^d/10$, contradicting the Schwartz-Zippel Lemma.
\end{proof}

The proof of the claim actually shows more than this--- we can reach the same conclusion even if $S$ is just  a sufficiently large \emph{subspace} of $\cA$, having, say, codimension at most $c < d/2$. This is because under this assumption, we can change basis and assume that the function $f = f\cdot 1$ as well as the functions $f \cdot x_i$ for $i=c+1\cdot n$ are all close to a degree $n-d$ polynomial. Now, we can do the same argument as above, writing $f=f\cdot 1 = p +\Delta$, and noting that in the proof of the claim, because  $p$'s degree was $n-d$ we had least $d$ choices for choosing a variable $x_i$ so that $\deg(p\cdot x_i) > \deg(p)$. Since $c<d$, at least one of these choices will have $i>c$ hence reaching a contradiction.

Thus to conclude the proof it suffices to show that for $k < 1/\delta^{10}$, $kS$ contains a subspace of measure at least $\delta$ (and hence codimension at most $\log(1/\delta$). This  follows by repeatedly applying the following claim:

\begin{claim*} Suppose that $S \subseteq \GF 2^n$ satisfies $0\in S$ and $|S+S| < 1.2|S|$ then $S+S$ is a subspace.
\end{claim*}
\begin{proof}  For a vector $a\in S$, let $S_a\sse S$ be the set of vectors $c\in S$
  such that $a+c\in S$.
  Set $\e = (|S+S|-|S|)/|S| < 0.2$. The assumptions $\card{S+S}\le (1+\e)\card{S}$ and $0\in S$ implies that
  $\card{S_a}\ge (1-\e)\card{S}$.
  In particular, for any two vectors $a,b\in S$, we have $\card{S_a\cap
    S_b}\ge (1-2\e)\card{S}$.
  Thus, for any four vectors $a,b,a',b'\in S$, if $1-2\e >\half$, there are
  vectors $c\in S_a\cap S_b$ and $c'\in S_{a'}\cap S_{b'}$ such that
  $a+c=a'+c'$.
  Hence, $a+b+a'+b'=(a+c)+(b+c) + (a'+c')+(b'+c')=(b+c) + (b'+c')\in S+S$ (using
  that $a+c+a'+c'=0$).
  In other words, every vector in $(S+S)+(S+S)$ is contained in $S+S$,
  which implies that $S+S$ is a subspace (using our assumption that $0\in S$).
\end{proof}

\end{proof}
}

\section{Missing Proofs}\label{sec:missingproofs}

\begin{proof}[Proof of \pref{thm:rm-majority-stablest}]
Suppose $d \geq C \log(1/\tau)$ for $C$ to be chosen later and fix $\gamma < 1/8$ for $\gamma$ to be chosen later. Let $\ell = \log(1/\tau)/4c_1< \tau^2 2^{d+1}$, where $c_1$ is the constant from \pref{thm:iprmbounded}. For $\alpha \in \GF2^N/ \cC$, let $\lambda_\alpha$ be the eigenvalues of $G$. Then, by \pref{lem:rm-eigenvalues},
\begin{equation}\label{eq:rmst}
  |\lambda_\alpha - \rho^k| < \tau, \text{ for }k \leq \ell,\;\;\;\;\; |\lambda_\alpha| < \rho^{\ell/2},\text{ for } k > \ell.
\end{equation}
Let $g=G^{\gamma} f$ and $G'=G^{1-2\gamma}$. Then, the graph $G'$ has the same eigenfunctions as $G$ - $\chi_\alpha$ for $\alpha \in \GF2^N/\cC$ with eigenvalues $\lambda_\alpha' = \lambda_\alpha^{1-2\gamma}$. From the above equation, it is easy to check that, for $\rho' = \rho^{1-2\gamma}$,
\begin{equation}\label{eq:rmst0}
  |\lambda_\alpha' - (\rho')^k| < \sqrt \tau, \text{ for }k \leq \ell,\;\;\;\;\; |\lambda_\alpha'| < (\rho')^{\ell/2},\text{ for } k > \ell.
\end{equation}
Further, as the eigenvalues of $G$ are each at most $1$, the coordinate influences of $g$ are no larger than those of $f$.

Now, decompose $g=g^{\le \ell} + g^{>\ell}$ into a low-degree part
  \begin{math}
    g^{\le \ell} %
    =\sum_{\alpha\in \GF2^n,~\wt(\alpha)\le \ell}\hat g(\alpha)\chi_\alpha
  \end{math}
  and a high-degree part
  \begin{math}
    g^{>\ell} %
    =\sum_{\alpha\in \GF2^n/\cC, ~\Delta(\alpha,\cC)>\ell} \hat
    g(\alpha)\chi_\alpha.
  \end{math}
  Then,%
  \begin{displaymath}
    \iprod{f,Gf} = \iprod{g,G'g}=\iprod{g^{\le \ell}, G' g^{\le \ell}}
    + \iprod{g^{>\ell},G' g^{>\ell}}
    \le \iprod{g^{\le \ell},G' g^{\le \ell}} +
    \mu\cdot \max_{\alpha\in\GF2^N/\cC,~\Delta(\alpha,\cC)>\ell} \lambda'_\alpha\mper
  \end{displaymath}
Hence, using Equation \eqref{eq:rmst0} (and the crude bound $\mu\le 1$),
  \begin{equation}
    \label{eq:rmst1}
    \iprod{f,G f} = \sum_{\sst{\alpha\in \GF2^N,\,\wt(\alpha)\le \ell}}
    (\rho')^{\wt(\alpha)} \hat g(\alpha)^2
    \quad + (\rho')^\ell + \sqrt\tau \mper
  \end{equation}
\eat{
  Let $P$ be the (formal) polynomial corresponding to $g^{\le \ell}$, that is,
  \begin{displaymath}
    P = \sum_{\sst{\alpha\in \GF2^N\\\wt(\alpha)\le \ell}} \hat g(\alpha)
    \prod_{\sst{i\in[N]\\\alpha_i=1}} u_i
    \mcom
  \end{displaymath}
  where $u_1,\ldots,u_N$ are formal variables. By construction, $P$ is a multilinear real polynomial with degree at most $\ell$.}

Observe that $g^{\le \ell}$ is a multilinear polynomial of degree at most $\ell$ and as the $\ell$-degree influences of $g$ are at most $\tau$, $g^{\le \ell}$ is $\tau$-regular.

Let $S\sse \sbits^N$ be the set of $\sbits$-vectors corresponding to the Reed--Muller code $\cC^\bot=\RM(n,d)$, that is, for every codeword $c\in \cC^\bot$, the set $S$ contains the vector $((-1)^{c_1},\ldots,(-1)^{c_N})$. Then, as $g$ is $[0,1]$-valued on $\rmd$ and $\zeta$ measures distance to bounded random variables, by Equation \eqref{eq:rmst},
\begin{multline*}\label{eq:rmst2}
\E_{z \sim S}[\zeta \circ g^{\le \ell}(z)] \leq \E_{z \sim S}[(g(z) - g^{\le \ell}(z))^2] = \E_{z \sim S}[(g^{> \ell}(z))^2] = \E_{z \sim S}[(G^\gamma f^{> \ell}(z))^2] \leq \\\max_{\alpha: |\alpha| > \ell} (\lambda_\alpha^{\gamma})^2 \leq \rho^{\gamma\ell}.
\end{multline*}

Hence, by \pref{thm:iprmbounded} (recall that $\ell = \log(1/\tau)/4c_1$),
  \begin{displaymath}
    \E_{x\sim \sbits^N} [\zeta \circ g^{\le \ell}(x)] \le \E_{z \sim S}[\zeta \circ g^{\le \ell}(z)] + 2^{O(\ell)} \sqrt \tau \leq \underbrace{ \rho^{\gamma \ell} + \tau^{1/4}}_{\eta \seteq} \mper
  \end{displaymath}
Now, as $\rmd$ is $\ell$-wise independent ($\ell < 2^{d+1}$),
\[ \E_{x \sim \sbits^N}[g^{\le \ell}(x)] = \E_{z \sim S}[g^{\le \ell}(z)] = \E_{z \sim S}[g(z)] \pm \E_{z \sim S}[(g^{> \ell}(z))^2]^{1/2} \leq \mu + \sqrt{\eta}.\]
Therefore, by \pref{cor:approx-majority-stablest}, 
  \begin{equation}\label{eq:rmst3}
    \iprod{g^{\le \ell},T_{\rho'} g^{\le \ell}} = \sum_{\alpha: \wt(\alpha) \le \ell} (\rho')^{\wt(\alpha)} \hat{g}(\alpha)^2 \le \Gamma_{\rho'}(\mu + \sqrt\eta) + \frac{O(\log\log(1/\eta))}{(1-\rho')\log(1/\eta)}.
  \end{equation}
Since $\Gamma_{\rho'}(\mu+ \sqrt\eta) \le \Gamma_{\rho'}(\mu) + 2\sqrt\eta$ and $\Gamma_\rho(\mu) \le \Gamma_{\rho'}(\mu) + |\rho - \rho'|/(1-\rho)$ (cf.~Lemma B.3, Corollary B.5 in \cite{MosselOO05}), it follows from Equations \eqref{eq:rmst1} and \eqref{eq:rmst3} that
  \begin{align*}
    \iprod{f,Gf} &= \iprod{g,G'g} \le \Gamma_\rho(\mu) + O\left(\frac{|\rho - \rho'|}{1-\rho}\right) + O(\sqrt\eta) + \frac{O(\log\log(1/\eta))}{(1-\rho)\log(1/\eta)} + \rho^{(1-2\gamma)\ell} + \sqrt\tau\\
&= \Gamma_\rho(\mu) + O\left( \frac{\gamma\log(1/\rho)}{1-\rho} + \rho^{\gamma \ell/2}+ \tau^{1/8}  + \frac{\log\log(1/\eta)}{(1-\rho)\log(1/\eta)}\right)\mper
  \end{align*}
(Here we used the estimate $|\rho - \rho'| = |\rho - \rho^{1-2\gamma}| = O(\gamma \log(1/\rho))$.) By choosing $d \geq C \log(1/\tau)$ and $\gamma = C K \log \log(1/\tau)/(\log(1/\tau)\log(1/\rho))$ for an appropriately large constant $C$, the above expression simplifies to
\[ \iprod{f,Gf} \leq \Gamma_\rho(\mu) + \frac{O(\log\log(1/\tau))}{(1-\rho)\log(1/\tau)}.\]
\end{proof}

\begin{proof}[Proof of \pref{lem:fooltobounded}]
  Since $X$ fools $\tau$-regular degree-$\ell$ PTFS, we have for all $u\ge 0$,
  \begin{displaymath}
    \Abs{\Prob{\zeta\circ Q(X)>u}-\Prob{\zeta\circ Q(Y)>u}} \le O(\e)\mper
  \end{displaymath}
  By hypercontractivity and $20\ell$-wise independence of $X$,
  \begin{displaymath}
    \Prob{\zeta\circ Q(X)>u}
    \le \Prob{\abs{Q(X)}>\sqrt u}
    \le u^{-10} \E Q(X)^{20}
    \le u^{-10} 2^{O(\ell)}\mper
  \end{displaymath}
  Since $\zeta\circ Q(X)$ is a non-negative random variable,
  \begin{displaymath}
    \E \zeta \circ Q(X) = \int \Prob{\zeta\circ Q(X) > u} \du
  \end{displaymath}
  Hence, we can bound its expectation
  \begin{align*}
    \E \zeta \circ Q(X) %
    &= \int_{u\ge 0} \Prob{\zeta\circ Q(X)> u} \du %
    \\
    & = \int_{0\le u\le M} \Prob{\zeta\circ Q(X)> u} \du %
    \quad \pm 2^{O(\ell)} \int_{u\ge M} u^{-10}\du \\
    & = \int_{0\le u\le M} \Prob{\zeta\circ Q(Y)> u} \du %
    \quad \pm O\Paren{ \e M + 2^{O(\ell )}/M^{9}} \\
    & = \E \zeta \circ Q(Y)%
    \quad \pm O\Paren{ \e M + \ell^{O(\ell d)}/M^{9}} \mper
  \end{align*}
  (In the last step, we used that $\Prob{ \zeta\circ Q(Y) > u} \le
  u^{-10} 2^{O(\ell)}$, a consequence of hypercontractivity.)
  Choosing $M=2^{O(\ell)}/\e^{0.1}$ (so that $\e M = 2^{O(\ell)}/
  M^{9}$), we conclude that $\E \zeta\circ Q = \E \zeta\circ Q \pm
  \e^{0.9} 2^{O(\ell)}$.
\end{proof}

\subsection{Proofs from \pref{sec:mosq}}\label{sec:mosqapp}
The following lemma shows a bound on the sum of influences.
\begin{lemma} \label{lem:sum-of-influences-bound}
  For a function $f\from\rmdt \to \R$ and $\ell<\dist(\cC^t)/2$, the sum of
  $\ell$-degree influences of $f$ is at most $\sum_{i\in[N]} \Inf^{\leq
    \ell}_{i}(f) \leq \ell\Var[f] $.
\end{lemma}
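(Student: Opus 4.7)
The plan is to mirror the proof of \pref{lem:sum-of-influences} directly, with only cosmetic changes to account for the alphabet $Q$ instead of $\GF 2$ and the quotient $Q^N/\cC^t$ instead of $\GF 2^N/\cC$. Since $\ell < \dist(\cC^t)/2$, every coset $\beta\in Q^N/\cC^t$ with $\wt(\beta)\le \ell$ has a unique minimum-weight representative, so the notation $\beta_i\ne 0^t$ is well defined, and the Parseval identity $\Var[f] = \sum_{\beta\neq 0}\hat f(\beta)^2$ holds over $\cD^t$ by orthonormality of the distinct characters of the quotient group.

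First I would unfold the definition of $\Inf_i^{\le \ell}(f)$ and swap the order of summation over $i\in[N]$ and $\beta\in Q^N/\cC^t$:
\begin{align*}
  \sum_{i\in[N]} \Inf_i^{\le \ell}(f)
  &= \sum_{i\in[N]}\,\sum_{\substack{\beta\in Q^N/\cC^t\\ \wt(\beta)\le \ell,\;\beta_i\neq 0^t}} \hat f(\beta)^2
  = \sum_{\substack{\beta\in Q^N/\cC^t\\0<\wt(\beta)\le \ell}} \card{\{i\in[N]: \beta_i\neq 0^t\}}\cdot\hat f(\beta)^2\\
  &= \sum_{\substack{\beta\in Q^N/\cC^t\\0<\wt(\beta)\le \ell}} \wt(\beta)\cdot\hat f(\beta)^2,
\end{align*}
using that $\wt(\beta)$ is by definition the number of coordinates on which the minimum-weight representative of $\beta$ is nonzero.

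Then I would bound $\wt(\beta)\le \ell$ termwise and extend the sum to all nonzero $\beta$, obtaining
\begin{displaymath}
  \sum_{i\in[N]} \Inf_i^{\le \ell}(f) \le \ell \sum_{\substack{\beta\in Q^N/\cC^t\\0<\wt(\beta)\le \ell}} \hat f(\beta)^2 \le \ell \sum_{\beta\neq 0} \hat f(\beta)^2 = \ell\cdot \Var[f],
\end{displaymath}
which is the claimed inequality. There is no genuine obstacle here; the only point that requires a line of justification is the well-definedness of the representative-dependent notation $\beta_i\neq 0^t$ under the hypothesis $\ell<\dist(\cC^t)/2$, which we already invoked to make \pref{def:rminfluence} sensible.
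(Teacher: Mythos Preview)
Your proposal is correct and follows essentially the same approach as the paper's own proof: swap the order of summation to obtain $\sum_{i}\Inf_i^{\le \ell}(f)=\sum_{\beta:\,0<\wt(\beta)\le\ell}\wt(\beta)\hat f(\beta)^2$, then bound $\wt(\beta)\le\ell$ and use Parseval. The paper compresses this into a single displayed line, while you spell out the order-swapping and the well-definedness of the representative under $\ell<\dist(\cC^t)/2$; these are welcome clarifications but not a different argument.
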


\begin{proof}
  The usual identity for the total (low-degree) influence holds,
  \begin{displaymath}
    \sum_{i\in[N]}\Inf^{\leq \ell}_i(f)
    =\sum_{\beta\in Q^N/\cC^t,~\wt(\beta)\le \ell}
    \wt(\beta)\hat f(\beta)^2\le \ell \Var f
    \mper\qedhere
  \end{displaymath}
\end{proof}

Analogous to \pref{thm:iprmbounded}, the following invariance principle can be
shown for regular multilinear polynomials.
\begin{theorem}\label{thm:iprmbounded-Q}
Let $N = 2^n$ and $t$ be an integer.  For every $\tau,\ell > 0$, there
exists $C$ such that for $d > C \log(1/\tau)$ the following holds:  if $P:\R^{Nt} \rightarrow
\R$ be a $\tau$-regular polynomial of degree at most $\ell$ then, for
$x \in_u \sbits^{Nt}$, $z \in_u \RM(n,d)^t$,
\[\left|\E[\zeta \circ P(x))] - \E[\zeta \circ P(z))] \right| \leq 2^{c_1 \ell} \sqrt \tau, \]
for a universal constant $c_1 > 0$.
\end{theorem}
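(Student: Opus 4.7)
The proof parallels that of \pref{thm:iprmbounded}: I would first establish a $t$-fold analog of \pref{thm:iprm} asserting that for any $1$-nice $\psi\from \R\to\R$ and any $\tau$-regular multilinear polynomial $P\from\R^{Nt}\to\R$ of degree at most $\ell$,
\[
\Abs{\E_{x\in_u\sbits^{Nt}}[\psi(P(x))]-\E_{z\in_u\RM(n,d)^t}[\psi(P(z))]}\le \ell^2\,9^\ell\,\tau^2,
\]
and then apply the standard smoothing argument from Theorem 3.19 of \cite{MosselOO05}, exactly as in the proof of \pref{thm:iprmbounded}, to upgrade the bound for $1$-nice test functions to the $\zeta$-bound of $2^{c_1\ell}\sqrt\tau$. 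The smoothing step is insensitive to whether the underlying distribution is $\RM(n,d)$ or $\RM(n,d)^t$ and uses only hypercontractivity of the low-degree part (which continues to hold as $\RM(n,d)^t$ has dual distance at least $2^{d+1}$ viewed inside $\GF 2^{Nt}$), so it carries over verbatim.

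For the $t$-fold invariance principle itself, my plan is to replicate the construction in the proof of \pref{thm:iprm} independently for each of the $t$ copies. Concretely, identifying $\RM(n,d)^t$ with a subset of $\GF 2^{Nt}$ via the natural correspondence $[Nt]\leftrightarrow [t]\times \GF2^n$, I would sample $t$ independent uniformly random affine transformations $A_1,\ldots,A_t\from\GF 2^n\to\GF 2^n$ and an affine projection parameter $c=2\log(1/\tau)$, and use these to define a hash
\[
h\from [t]\times \GF 2^n\to [t]\times \bits^c,\qquad h(k,x)=(k,\,A_k(x)_{|[c]}).
\]
Within each of the $t\cdot 2^c$ buckets $\set{(k,x): A_k(x)_{|[c]}=a}$, the coordinates of a uniformly random element of $\RM(n,d)^t$ are $(2^{d-c}-1)$-wise independent and independent across buckets, by exactly the argument given in the proof of \pref{thm:iprm} applied separately to each copy; an arbitrary offset string (corresponding to fixing the $\cS'$-component) is then added. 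This realizes $\RM(n,d)^t$ as an instantiation of the Meka--Zuckerman PRG on $Nt$ bits, so \pref{thm:mzgenip} applies and gives the claimed $1$-nice bound.

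The main obstacle will be verifying that \pref{thm:mzgenip} goes through with the structured hash $h$ above, which is \emph{not} almost pairwise independent over the full codomain $[t\cdot 2^c]$: coordinates $(k,x)$ and $(k',x')$ with $k\ne k'$ are deterministically placed in different buckets, while coordinates with $k=k'$ enjoy the usual almost-pairwise independence of $A_k$. I expect this to strengthen rather than weaken the MZ argument, since cross-copy separation is strictly stronger than the pairwise quasi-independence required there; the bound on pairs of coordinates landing in a common bucket that drives the error estimate in \cite{MekaZ10b} remains $O(1/4^c)$ per same-copy pair and zero otherwise, so the same Taylor-style replacement argument carries through. A backup route, should a direct black-box application of \pref{thm:mzgenip} prove awkward, is a Lindeberg hybrid that replaces one $\RM(n,d)$ block at a time by uniform $\sbits^N$ and invokes \pref{thm:iprm} on the restricted polynomial; in that approach one must track the fact that the restricted polynomial is $\tau$-regular only on average over the conditioned blocks, which can be handled by exploiting convexity of the error bound in \pref{thm:iprm} together with the identity $\E_{y_{-k}}\norm{P_{y_{-k}}}^2=\norm{P}^2$ and $\E_{y_{-k}} \Inf_i(P_{y_{-k}}) \le \tau^2\norm{P}^2$.
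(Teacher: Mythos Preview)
Your proposal is correct and follows essentially the same approach as the paper: the paper's proof simply asserts that ``if $\RM(n,d)$ satisfies the properties of the PRG in \cite{MekaZ10}, then so does $\RM(n,d)^t$,'' and then omits the details. Your write-up supplies exactly those details --- the per-copy affine hash, the blockwise $(2^{d-c}-1)$-wise independence, and the observation that cross-copy separation only helps the Meka--Zuckerman collision bound --- so you are filling in what the paper leaves implicit rather than taking a different route.
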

The proof follows easily from the proof of Theorems \ref{thm:iprm} and \ref{thm:iprmbounded} and the fact that if $\RM(n,d)$ satisfies the properties of the PRG in \cite{MekaZ10}, then so does $\RM(n,d)^t$. We omit the proof.

The work of Mossel \etal \cite{MosselOO05} also obtains bounds on
noise stability of functions over product spaces of large alphabets
namely $Q^N$.  The following corollary is a consequence of Theorem $4.4$ in
\cite{MosselOO05}.  The proof is analgous to that of
\pref{cor:approx-majority-stablest} from \pref{thm:majority-stablest}.
\begin{corollary}
  \label{cor:approx-majority-stablest-Q}
  Let $f\from Q^N\to \R$ be a function with $\E f = \mu$ and $\E \zeta
  \circ f \le \tau$.
  Suppose $\Inf_i f^{\le 30\log(1/\tau)/\log Q}\le \tau$ for all $i\in[N]$.
  Then,
  \begin{displaymath}
    \iprod{f,T_\rho f}
    \le \Gamma_\rho(\mu)
    + O\left(\tfrac{\log Q\log\log(1/\tau)}{(1-\rho) \log
(1/\tau)}\right)
    \mcom
  \end{displaymath}
  where $T_\rho$ is the noise graph on $Q^N$ with second largest eigenvalue
  $\rho$ and $\Gamma_\rho$ is the Gaussian noise stability curve.
  (Here, we assume that $\tau$ is small enough.)
\end{corollary}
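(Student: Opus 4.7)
The plan is to reduce \pref{cor:approx-majority-stablest-Q} to Theorem~4.4 of \cite{MosselOO05} (the non-binary ``majority is stablest'' theorem on $Q^N$), via the same rounding argument used in the proof of \pref{cor:approx-majority-stablest}. The only bookkeeping difference from the binary case is that the allowed low-degree influence degree in the non-binary MOO statement scales like $\log(1/\tau)/\log Q$ instead of $\log(1/\tau)$, which is exactly the factor that appears in the hypothesis $\Inf_i^{\le 30\log(1/\tau)/\log Q} f \le \tau$.

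First I would produce a $[0,1]$-valued proxy for $f$. Let $f'\from Q^N\to [0,1]$ be the pointwise closest $[0,1]$-valued function to $f$, so that $\norm{f-f'}_2^2 = \E\zeta\circ f \le \tau$. This immediately gives $\abs{\E f' - \mu}\le \sqrt\tau$, and by Cauchy--Schwarz together with the fact that projection onto the low-degree subspace is a contraction in $L_2$, it gives
\begin{displaymath}
  \Inf_i^{\le 30\log(1/\tau)/\log Q} f' \;\le\; \tau + O(\sqrt\tau) \;\ll\; \tau^{1/3}
\end{displaymath}
for every $i\in[N]$ (assuming $\tau$ small enough).

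Next I would invoke the $Q$-ary majority-is-stablest theorem (Theorem~4.4 of \cite{MosselOO05}) applied to the $[0,1]$-valued function $f'$ on the product space $Q^N$ with low-degree influence cutoff $30\log(1/\tau^{1/3})/\log Q = \Theta(\log(1/\tau)/\log Q)$. This yields
\begin{displaymath}
  \iprod{f',T_\rho f'}
  \;\le\; \Gamma_\rho(\E f')
  + O\!\left(\tfrac{\log Q\cdot \log\log(1/\tau)}{(1-\rho)\log(1/\tau)}\right).
\end{displaymath}
The $\log Q$ factor in the error is exactly what Theorem~4.4 of \cite{MosselOO05} produces, coming from the fact that a single $Q$-ary coordinate carries $\log Q$ bits of variance.

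Finally I would transfer the bound back from $f'$ to $f$. Since $T_\rho$ has operator norm $1$, $\abs{\iprod{f,T_\rho f} - \iprod{f',T_\rho f'}} \le 2\norm{f-f'}_2 \le 2\sqrt\tau$. By Lipschitz continuity of $\Gamma_\rho$ (Lemma~B.3 of \cite{MosselOO05}), $\Gamma_\rho(\E f') \le \Gamma_\rho(\mu) + 2\sqrt\tau$. Combining these estimates,
\begin{displaymath}
  \iprod{f,T_\rho f}
  \;\le\; \Gamma_\rho(\mu) + O(\sqrt\tau)
  + O\!\left(\tfrac{\log Q\cdot \log\log(1/\tau)}{(1-\rho)\log(1/\tau)}\right).
\end{displaymath}
For $\tau$ small enough the $O(\sqrt\tau)$ is absorbed into the main error term, yielding the claimed bound. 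There is no real obstacle beyond matching up the degree parameter; the proof is essentially the same three-line reduction as in \pref{cor:approx-majority-stablest}, with the binary MOO theorem replaced by its $Q$-ary analogue and the degree-$30\log(1/\tau)$ hypothesis replaced by degree $30\log(1/\tau)/\log Q$.
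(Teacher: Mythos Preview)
Your proposal is correct and matches the paper's approach essentially line for line: the paper states only that the proof is ``analogous to that of \pref{cor:approx-majority-stablest} from \pref{thm:majority-stablest},'' and your argument is precisely that analogy—round $f$ to the nearest $[0,1]$-valued $f'$, check that influences and mean are perturbed by $O(\sqrt\tau)$, invoke Theorem~4.4 of \cite{MosselOO05} with the $\log Q$-adjusted degree cutoff, and transfer back using the Lipschitz bound on $\Gamma_\rho$.
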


Now we are ready to present the proof of the majority is stablest
theorem over $\rmdt$ (\pref{thm:rm-majority-stablest-Q}) using
\pref{thm:iprmbounded-Q} and \pref{cor:approx-majority-stablest-Q}.

\begin{proof}[Proof of \pref{thm:rm-majority-stablest-Q}]
Let $Q = 2^t$.  Fix $d \geq C \log(1/\tau)$ for a sufficiently large
constant $C$ to be chosen later.
Let $\gamma < 1/8$ be a constant depending on $\epsilon, \delta$ whose value will be chosen later. Let $\ell =
\log(1/\tau)/4c_1< \tau^2 2^{d+1}$, where $c_1$ is the constant from
\pref{thm:iprmbounded-Q}. For $\alpha \in Q^N/ \cC$, let $\lambda_\alpha$ be the eigenvalues of $G$. Then, by \pref{lem:eigenvalues-larger-alphabet},
\begin{equation}\label{eq:rmst-Q}
	|\lambda_\alpha - \rho^{\wt(\alpha)}| < \tau, \text{ for }\wt(\alpha) \leq
  \ell,\;\;\;\;\; |\lambda_\alpha| < \rho^{\Omega(\ell/t)},\text{ for }
  \wt(\alpha) > \ell.
\end{equation}
Let $g=G^{\gamma} f$ and $G'=G^{1-2\gamma}$. Then, the graph $G'$ has
the same eigenfunctions as $G$ - $\chi_\alpha$ for $\alpha \in Q^N/\cC$ with eigenvalues $\lambda_\alpha' = \lambda_\alpha^{1-2\gamma}$. From the above equation, it is easy to check that, for $\rho' = \rho^{1-2\gamma}$,
\begin{equation}\label{eq:rmst0-Q}
	|\lambda_\alpha' - (\rho')^{\wt(\alpha)}| < \sqrt \tau, \text{
	for } \wt(\alpha) \leq \ell, \;\;\;\;\; |\lambda_\alpha'| <
	(\rho')^{\Omega(\ell/t)},\text{ for } \wt(\alpha) > \ell.
\end{equation}
Further, as the eigenvalues of $G$ are each at most $1$, the coordinate influences of $g$ are no larger than those of $f$.
Now, decompose $g=g^{\le \ell} + g^{>\ell}$ into a low-degree part
  \begin{math}
    g^{\le \ell} %
    =\sum_{\alpha\in Q^N,~\wt(\alpha)\le \ell}\hat g(\alpha)\chi_\alpha
  \end{math}
  and a high-degree part
  \begin{math}
    g^{>\ell} %
    =\sum_{\alpha\in Q^N/\cC, ~\wt(\alpha)>\ell} \hat
    g(\alpha)\chi_\alpha.
  \end{math}
  Then,%
  \begin{displaymath}
    \iprod{f,Gf} = \iprod{g,G'g}=\iprod{g^{\le \ell}, G' g^{\le \ell}}
    + \iprod{g^{>\ell},G' g^{>\ell}}
    \le \iprod{g^{\le \ell},G' g^{\le \ell}} +
    \mu\cdot \max_{\alpha\in Q^N/\cC,~\deg(\alpha)>\ell} \lambda'_\alpha\mper
  \end{displaymath}
Hence, using Equation \eqref{eq:rmst0-Q} (and the crude bound $\mu\le 1$),
  \begin{equation}
    \label{eq:rmst1-Q}
    \iprod{f,G f} = \sum_{\sst{\alpha\in Q^N/\cC,\,\wt(\alpha)\le \ell}}
    (\rho')^{\wt(\alpha)} \hat g(\alpha)^2
    \quad + (\rho')^{\Omega(\ell/t)} + \sqrt\tau \mper
  \end{equation}
\eat{
  Let $P$ be the (formal) polynomial corresponding to $g^{\le \ell}$, that is,
  \begin{displaymath}
    P = \sum_{\sst{\alpha\in \GF2^N\\\wt(\alpha)\le \ell}} \hat g(\alpha)
    \prod_{\sst{i\in[N]\\\alpha_i=1}} u_i
    \mcom
  \end{displaymath}
  where $u_1,\ldots,u_N$ are formal variables. By construction, $P$ is a multilinear real polynomial with degree at most $\ell$.}

Observe that $g^{\le \ell}$ is a multilinear polynomial of degree at
most $\ell \cdot t$.  Since the $\ell$-degree influences of $g$ are at
most $\tau$, it implies that the multilinear polynomial $g^{\le \ell}$ is $\tau$-regular.

Let $S\sse \sbits^{Nt}$ be the set of $\sbits$-vectors corresponding to
the Reed--Muller code $\rmdt$, that is, for every codeword
$c = (c^{(1)},c^{(2)},\ldots,c^{(t)}) \in \rmdt$, the set $S$ contains
the vector $((-1)^{c^{(1)}_1},\ldots
(-1)^{c^{(i)}_j},(-1)^{c^{(t)}_N})$. Then, as $g$ is $[0,1]$-valued on
$\rmdt$ and $\zeta$ measures distance to bounded random variables, by
Equation \eqref{eq:rmst-Q},
\begin{multline*}\label{eq:rmst2-Q}
\E_{z \sim S}[\zeta \circ g^{\le \ell}(z)] \leq \E_{z \sim S}[(g(z) -
g^{\le \ell}(z))^2] = \E_{z \sim S}[(g^{> \ell}(z))^2] = \E_{z \sim
S}[(G^\gamma f^{> \ell}(z))^2] \leq \\\max_{\alpha: \wt(\alpha) >
\ell} (\lambda_\alpha^{\gamma})^2 \leq \rho^{\Omega(\gamma\ell/t)}.
\end{multline*}

Hence, by \pref{thm:iprmbounded-Q} (recall that $\ell = \log(1/\tau)/4c_1$),
  \begin{displaymath}
    \E_{x\sim \sbits^N} [\zeta \circ g^{\le \ell}(x)] \le \E_{z \sim
    S}[\zeta \circ g^{\le \ell}(z)] + 2^{O(\ell)} \sqrt \tau \leq
    \underbrace{ \rho^{\Omega(\gamma \ell)} + \tau^{1/4}}_{\eta \seteq} \mper
  \end{displaymath}
Now, as $\rmdt$ is $\ell$-wise independent ($\ell < 2^{d+1}$),
\[ \E_{x \sim \sbits^N}[g^{\le \ell}(x)] = \E_{z \sim S}[g^{\le \ell}(z)] = \E_{z \sim S}[g(z)] \pm \E_{z \sim S}[(g^{> \ell}(z))^2]^{1/2} \leq \mu + \sqrt{\eta}.\]
Therefore, by \pref{cor:approx-majority-stablest-Q}, 
  \begin{equation}\label{eq:rmst3-Q}
    \iprod{g^{\le \ell},T_{\rho'} g^{\le \ell}} = \sum_{\alpha:
    \wt(\alpha) \le \ell} (\rho')^{\wt(\alpha)} \hat{g}(\alpha)^2 \le
    \Gamma_{\rho'}(\mu + \sqrt\eta) +
    \frac{O(t\log\log(1/\tau))}{(1-\rho')\log(1/\tau)}.
  \end{equation}
Since $\Gamma_{\rho'}(\mu+ \sqrt\eta) \le \Gamma_{\rho'}(\mu) +
2\sqrt\eta$ and $\Gamma_\rho(\mu) \le \Gamma_{\rho'}(\mu) + |\rho -
\rho'|/(1-\rho)$ (cf.~Lemma B.3, Corollary B.5 in \cite{MosselOO05}),
it follows from Equations \eqref{eq:rmst1-Q} and \eqref{eq:rmst3-Q} that
  \begin{align*}
    \iprod{f,Gf} &= \iprod{g,G'g} \le \Gamma_\rho(\mu) +
    O\left(\frac{|\rho - \rho'|}{1-\rho}\right) + O(\sqrt\eta) +
    \frac{O(t\log\log(1/\tau))}{(1-\rho)\log(1/\tau)} +
    \rho^{\Omega((1-2\gamma)\ell/t)} + \sqrt\tau
 \end{align*}
By a sufficiently small choice of  $\tau$, and fixing $\ell =
\log(1/\tau)/4c_1$ and $\gamma = 100 t c_1 \log
\log(1/\tau)/(\log(1/\tau)\log(1/\rho))$ (so that $\rho^{\Omega(\gamma
\ell/t)} <
1/\log (1/\tau)$ and $|\rho-\rho'| = O(\frac{t}{\log 1/\rho \log
1/\tau}))$,
the error term in the above expression can be made
smaller than $\delta$.


\end{proof}

\end{document}